\DeclareSymbolFont{calletters}{OMS}{cmsy}{m}{n}
\DeclareSymbolFontAlphabet{\mathcal}{calletters}
\def\be{\begin{eqnarray}}
\def\ee{\end{eqnarray}}
\def\b*{\begin{eqnarray*}}
	\def\e*{\end{eqnarray*}}
\newcommand{\ba}{\[\begin{aligned}}
\newcommand{\ea}{\end{aligned}\]}
\newtheorem{Theorem}{Theorem}[section]
\newtheorem{Definition}[Theorem]{Definition}
\newtheorem{Proposition}[Theorem]{Proposition}
\newtheorem{Lemma}[Theorem]{Lemma}
\newtheorem{Corollary}[Theorem]{Corollary}
\newtheorem{Remark}[Theorem]{Remark}
\makeatletter \@addtoreset{equation}{section}
\newcommand{\brak}[1]{\left(#1\right)}    
\newcommand{\crl}[1]{\left\{#1\right\}}   
\newcommand{\edg}[1]{\left[#1\right]}     
\newcommand{\abs}[1]{\left|#1\right|}
\definecolor{orange}{rgb}{1,0.5,0}
\definecolor{violet}{rgb}{0.6,0.4,0.8}
\def \E{\mathbb{E}}
\def \F{\mathbb{F}}
\def \P{\mathbb{P}}
\def \R{\mathbb{R}}
\def \M{\mathbb{M}}
\def \N{\mathbb{N}}
\def \Gb{\mathbb{G}}
\def\Cc{{\cal C}}
\def\Ec{{\cal E}}
\def\Fc{{\cal F}}
\def\Jc{{\cal J}}
\def\Mc{{\cal M}}
\def\Nc{{\cal N}}
\def\Pc{{\cal P}}
\def\Rc{{\cal R}}
\def\Tc{{\cal T}}
\def\Uc{{\cal U}}
\def\Vc{{\cal V}}
\def\Zc{{\cal Z}}
\def \a{\alpha}
\def \0{\mathbf{0}}
\newcommand{\rmi}{{\rm (i)$\>\>$}}
\newcommand{\rmii}{{\rm (ii)$\>\>$}}
\newcommand{\rmiii}{{\rm (iii)$\>\,    \,$}}
\newcommand{\rmv}{{\rm (v)$\>\>$}}
\newcommand{\rmvi}{{\rm (vi)$\>\>$}}
\def\x{\times}
\def\1{{\bf 1}}
\def \no{\noindent}
\def \Frac{\displaystyle\frac}
\def \Sup{\displaystyle\sup}
\newcommand{\ie}{i.e., }
\def \etaa{{\eta}_{A}}
\def \etap{{\eta}_{P}}
\def \diff{\mbox{d}}
\DeclareMathOperator*{\esssup}{ess\,sup}
\title{ A Principal--Agent approach to Capacity Remuneration Mechanisms }
\author{Cl\'emence Alasseur\thanks{clemence.alasseur@edf.fr}, \ \ Heythem Farhat\thanks{heythem.farhat@polytechnique.edu}\ \  and\ \ Marcelo Saguan\thanks{marcelo.saguan@edf.fr}}
\date{\today}
\begin{document}
	
\sloppy 
	
\maketitle
	
\abstract{We propose to study electricity capacity remuneration mechanism design through a principal-agent approach. The
principal represents the aggregation of electricity consumers (or a representative entity), subject to the physical risk of shortage, and the agent represents the electricity capacity owners, who invest in capacity and produce electricity to satisfy consumers' demand, and are subject to financial risks. Following the methodology of Cvitani{\'c} et al. (2017), we propose an optimal contract, from consumers' perspective,  which complements the revenue capacity owners achieved from the spot energy market, and incentivizes both parties to perform an optimal level of investment while sharing the physical and financial risks. Numerical results provide insights on the necessity of a capacity
	remuneration mechanism and also show how this is especially true when the level of uncertainties on demand or production side increases.
		
\vspace{0.5cm}
		
{\bf Key words.}  Capacities market, capacity remuneration mechanism, principal-agent problem, contract theory.
}

\section{Introduction}
	Electricity market is characterized by the constraint that production must be equal to the consumption at any time. In case of non respect of this constraint, the system can incur a power outage whose consequences might be highly problematic. For example, the total economic cost of the August 2003 blackout in the USA was estimated to be between seven and ten billion dollars (\cite{council2004economic}). This blackout resulted in the loss of around 62 GW of electric load that served more than 50 million people at the USA-Canada border. Besides, it took 2 days for major affected areas to have the power restored, while some regions had to wait up to a full week.\\
	
As electricity can hardly be stored; hydro storage is limited in size, and developing a large fleet of batteries is still highly costly, the power production capacity must be high enough to cope with major peak load events which can reach extreme levels compared to the average load. In France for example the average load was around 55GW in 2017\footnote{See the French TSO website \url{http://bilan-electrique-2017.rte-france.com}.}, whereas the peak of electricity consumption record was above 100 GW in February 2012. Indeed, electrical load is characterized by a high variability implied by meteorological variations and economic conditions on different time scales. Again, in France for example, the difference between peak load in 2012 and 2014 is around 20 GW, which corresponds to an equivalent capacity of around 40 combined cycle gas turbines (CCGT) of 500 MW\footnote{See \url{https://www.ecologique-solidaire.gouv.fr/securite-dapprovisionnement-en-electricite}.}.
To ensure security of supply, most electricity systems specify the \emph{Loss of Load Expectation} (LoLE) which is a reliability target for the electricity system, and has been fixed in some countries at three hours at most per year \cite{Newberry16}.\\
	
The consequence of such constraints on the production system is that some power plants are used rarely (only during extreme peak load) but remain necessary for the system security, and insuring their economical viability with energy markets only is not guaranteed. This question has already motivated a great amount of economic literature under the name of ``missing money'' \cite{Joskow07}. This lack of revenue can occur because of energy market imperfections such as price caps or out-of-market actions made by the transmission system operator as well as reliability targets going beyond reliability outcome provided by the market.\\
	
The ``missing money'' issue might be further increased when the share of renewable energies increases in the system \cite{Newberry16}. Indeed, renewable energies have low variable costs so their introduction has made electricity prices lower. See  \cite{brown2018capacity} or \cite{Levin2015Capacity} for a  proof that subsidized renewable capacity pushes downward energy prices. This could result in the withdrawal of most expensive power plants, jeopardizing the security of electricity system and the lack of incentives to invest in new capacities. \\
	
In addition, even without the ``missing money'' problem, electricity markets are highly volatile for the reasons already stated (i.e. low level of storage, high uncertainties on load and production levels induced by outages and meteorological conditions impacting the production capacities) and suffer from ``missing markets'' issues \cite{Newberry16} such as the horizon shortness of contracts proposed by electricity markets compared to the lifetime of power-plants. For all these reasons, the financial risk is particularly high for investors in electricity capacities and may lead to high hurdle rates, see \cite{hobbs2007Capacity} for a model which studies how capacity markets variations can lower capital costs for generators by reducing risks).\\
	
For all the reasons cited above, several regions of the world have decided to put in place a capacity remuneration mechanism (CRM), in addition to energy markets. This kind of markets aims at insuring a payment for electricity generating assets for the capacities they provide, regardless of their actual production. This market can be thought of as a payment for an insurance provided by the power plant against shortage and blackout risk. However, no consensus on the design of such CRM arose so far, see for example \cite{bublitz2019survey} for a review of theoretical studies and implementations of CRM or \cite{Bhagwat17} for a survey of different capacity markets implemented in the USA.\\ 
	
In \cite{scouflaire18}, the author argues that CRM do improve security of supply, in exchange of a significant impact on consumer's bill in the USA, as opposed (surprisingly) to the EU where the impact on the end users price is not significant. An analysis of the impact of capacity on welfare under a price-capped electricity market is made on the Texas market (ERCOT) in \cite{bajo2017welfare} showing that capacity markets have several effects: an increase of the wholesale electricity price and reliability and  a reduction of price volatility. Several mechanisms and their corresponding conditions for achieving efficiency are studied in \cite{Leautier16}. In \cite{briggs2013resource}, distortion of capacity markets implied by subsidies of base load capacities are pointed out and correction mechanisms such as the minimal offer price rule (MOPR) tested in the PJM markets are studied. Currently in Europe, several designs of CRM have been adopted such as a market capacity for example in the United Kingdom, France, Italy, Belgium and Ireland\footnote{See  \url{https://www.sem-o.com/markets/capacity-market-overview} and \url{https://ec.europa.eu/commission/presscorner/detail/en/IP_17_4944}.}(under construction); a capacity payment as in Spain or Portugal; strategic reserves in Sweden or in Germany.\\
	
In the literature, several papers study different CRM designs with distinct modeling approaches. For example in \cite{HaryRiousSagan16}, the authors compare the benefits of capacity markets or strategic reserves versus energy-only design in terms of security of supply, investment and generation costs in a dynamic model of investing. This is the same approach developed by \cite{hach2016capacity} and applied to the UK market. \cite{hoschle2017electricity} analyse the impact of capacity mechanism on energy markets and on the remuneration of flexibility and emission-neutral renewable capacities. In \cite{Bhagwat17}, the authors implement the UK capacity markets in an Agent-based model where Agents have a limited vision of the future. In \cite{hermon2007asymmetric}, the authors model two CRMs--in particular under information asymmetry--using agency theory. They model capacity payments as a menu of contracts and strategic reserves as a retention rule of a bilateral contract between the TSO and a producer and then compare these CRMs. The information asymmetry is mainly on the ``type'' of the generator, namely its access to the capital market which impacts its efficiency.\\	
	
In this work, we propose a principal-agent framework to shed light on the design of CRM in a context of information asymmetry and external uncertainties (in production and demand). Using the recent developments of contract theory \cite{cvitanicpt}, we model and solve the problem in a continuous time setting, which allows us to dive deeper into the incentive mechanism, and provide a recommended policy for investment in electric power plants, with an optimal dynamic capacity payment allowing for an efficient (financial and physical) risk sharing between consumers and producers.\\
		
In the scope of the paper, producers and retailers are fully separated and exchange electricity through spot markets, and the electricity demand is considered to be inelastic. The relationship between consumers and producers is modeled by a principal-agent problem, with the principal being the aggregation of power consumers (or an equivalent entity representing them), and the agent the collection of producers. 
Note that the transmission system operator (TSO) which operates in real-time in many electricity markets could be considered as a representative entity of the aggregation of consumers.\\
	
Our model accounts for the information asymmetry on the actions of the agent --moral hazard--, \ie consumers do not observe producers' actions but only the results of their actions. In fact, consumers want to incentivize producers in an optimal way to provide electricity when needed, but at the same time they have no information on the commitment of the latter (producers) to build or maintain power plants, as they only observe the volume of electricity produced, not the effort of capacity owners to install new power plants or to keep the existing ones in good operation conditions. 
The model developed enables us to specify an optimal contract which incentivizes the producers to make the right level of investment to achieve a certain level of security for the system.\\
		
This proposed contract remunerates the capacity owners depending on realized uncertainties on the demand and available capacities while sharing the financial risks between consumers and producers. It is also shown that the more uncertainties on the system, such as the increase share of variable renewable capacities, the more a capacity remuneration is needed to ensure correct levels of investment and maintenance.\\
		
Finally, we provide a numerical illustration of the optimal capacity payment obtained with our proposed optimal contract, compared with the payment supplied by the spot market. This numerical illustration is inspired by the French electricity system.\\
 
The second section of this paper is devoted to the presentation of the model and the objective functions of both the agent and the principal, with a brief summary of the resolution methodology (the details of which are left to the Appendix). In the third section, we present our case study; the French electricity system and provide some numerical interpretations of the optimal capacity payment. Mathematical proofs and details are included in the Appendix for the sake of clarity.  
\section{The model}
 In order to study CRM in a context of information asymmetry, we propose a non-zero sum Stackelberg game with a principal-agent formulation, \ie the gain of one party does not come necessarily from the loss of the other. In this setting, the aggregation of consumers or an entity representing them such as the TSO, proposes to producers a capacity payment which optimally complements the revenue they (producers) obtain on the spot energy market. This payment incentivizes them to invest optimally in power plants management (construction, maintenance, etc..) to ensure an acceptable level (for consumers) of shortage occurrences. The proposed payment is a way to correct the information asymmetry faced by consumers (as they cannot observe directly producers decisions concerning the capacities of the production mix, thus the need for incentive), and to share the risks coming from demand and available capacities uncertainties between the two parties. Moreover, the proposed payment limits producers' potential abuse of market power. Indeed, without capacity payment, producers may decide to under-invest in order to obtain high remunerations from a spot market with more shortages and price spikes.
	
\subsection{Principal-Agent Problem: a brief review} 
	
Contract theory, or principal--agent problem, is a classical moral hazard problem in microeconomics. The simplest formulation involves a controlled process $X$ and two parties; the principal and the agent. The controlled process is called output process and represents the value of the firm for example. Principal owns the firm, and delegates its management to agent, i.e., the control of the output process $X$. So principal hires agent at time $t=0$ for the period $\edg{0,T}$, in exchange for a terminal payment (a contract) $\xi$ paid at time $T$, based upon the evolution of the output process during the contracting time period. In other words, $\xi$ is an $\Fc_T$--measurable random variable, (a function of the realized uncertainties on $X$ up to time $T$), and thus can be a function of the firm value (a percentage of the final gain for example, or a function of the whole trajectory, etc..). However, agent's effort is not observable and/or not contractible for principal, which means that $\xi$ cannot depend on the effort (work) of agent, hence the moral hazard.\\

	Each of the parties aims at maximizing a utility function. The agent acts on the output process $X$ via some control $\a$ (his management decision) and has to pay a cost $c^{A}\brak{\a}$ as a function of the efforts (the management decision $\a$), and expects a payment $\xi$ from principal at time $T$. Agent also has a reservation utility $U_A\brak{\Rc}$, to be thought of as a participation constraint, with $\Rc$ the cash equivalent of this constraint: agent accepts the contract $\xi$ only if $\xi$ satisfies $V^{A}\brak{\xi}\geq U_A\brak{\Rc}$, otherwise he will refuse it. In the case where agent accepts the contract, we can formulate his problem as follow:
	
	\begin{align} \label{intro : Agent pb}
	V^{A}\brak{\xi}  =  \sup_{\a}\E\edg{U_{A}\brak{\xi-\int_{0}^{T}c^{A}\brak{\a_{t}}dt}}.
	\end{align}

	The principal benefits from the output process $X$, and so 
	incites agent to put the effort $\a$ (to work hard) via	the \mbox{contract $\xi$.} Principal tries to find the optimal incentive ($\xi$), while
	respecting agent's participation constraint. Principal's problem is written therefore as:
	
	\begin{align}
	V^{P}=\sup_{\substack{\xi\\
			V^{A}\brak{\xi}\geq U_A\brak{\Rc}
		}
	}\sup_{{\a}^{\star}(\xi)}\E\edg{U_{P}\brak{-\xi+X_{T}^{{\a}^{\star}(\xi)}}},
	\end{align}
	where the contract $\xi$ satisfies the participation constraint $V^{A}\brak{\xi}\geq U_A\brak{\Rc}$ and ${\a}^{\star}(\xi)$ denotes agent's optimal effort (response) given the contract $\xi$, \ie the solution to \eqref{intro : Agent pb}. The first supremum (over ${\a}^{\star}(\xi)$) expresses the fact that given a contract $\xi$, agent solves his problem (we will see later that the existence of at least one solution is guaranteed). Then in case of existence of multiple solutions to \eqref{intro : Agent pb}, we assume that agent would choose the one which maximizes principal's value function (once his utility maximized, agent is cooperative with principal). We refer the interested reader to \cite{cvitanicz}, and \cite{cvitanicpt} for a more detailed exposition of contract theory and principal-agent problem.\\
	
	As mentioned before, in our model, principal is the aggregation of consumers or an equivalent entity representing them, and agent is the collection of producers. In the sequel we will ease the presentation by referring to these two parties by simply saying ``the consumer'' and ``the producer''.\\
	
	So agent is the producer who exerts an effort (a process which we will denote $\alpha$), to build or invest in the maintenance of peak power plants, to increase the total capacity of the fleet. Agent is compensated an amount $\xi$ by principal (the consumer) for the utility received; the satisfaction of consumption and the insurance against shortage risk. We also account for the moral-hazard (second best in principal-agent terminology), in the sense that effort performed by the agent is not observable by principal. Therefore, principal does not observe $\alpha$, and is not able to know if the available capacity is the result of decisions of maintenance and investments made by agent, or if it is due to market conditions not controlled by the producer, such as unanticipated failures, good or bad weather conditions for renewable energy sources of production. In mathematical words, the capacity compensation $\xi$ given to the producer, cannot be a function of the effort $\a$.
	
	\subsection{Model, state variables and control}\label{subsection Model, state variables and control}
	
	We fix a maturity $T\in\brak{0,\ +\infty}$, and describe the system with two continuous processes $X^{C}$ and $X^{D}$, denoting respectively the electricity generation capacity available at each time $t\in[0,T]$ and the instantaneous electricity demand, both in GigaWatt (GW). $X^C$ represents the aggregation of all production capacities, regardless of the corresponding production technology. The uncertainty of  $X^C$ represents the power outages of conventional power plants and the variability of the availability factor of renewable productions. We denote the state variable $X:=\brak{X^{C},X^{D}}^{\intercal}$, a stochastic process valued in $\R^{2}$.\\

	Agent (electricity producers) controls the generation capacity $X^C$ via an  $\F$--predictable process $(\alpha_t)_{t\in[0,T]}$; at each time $t\in\edg{0,T}$ the control is only based on information prior to $t$ without knowledge of the future. The control $\a$ is expressed as a yield in $\edg{\mbox{Year}}^{-1}$ and  represents the decision at each time $t$ to change the generation capacity by building or dismantling peak power plants; gas turbines for instance. This restriction in the choice of only one technology for the control (maintaining/building or destroying) simplifies parameters calibration and allows for the use of a continuous time setup. Numerically, this is approximated by a small-step discretization as peak power plants are quite rapidly adjustable. The control is only on the average value of the available capacity and not the volatility, which spares us a lot of technicalities. Indeed, one could expect that investing in wind power or solar panels would increase the uncertainty of available generation capacity (volatility), as opposed to thermal plants which have a more controllable production.\\
		
	The instantaneous available capacity process $X^C$ is driven by a controlled geometric Brownian motion, which has the property of staying positive consistent with available capacity. $X^C$ starts from $x_0^C$ and has the infinitesimal increments over $dt$:
	\begin{align}
	dX_t^C = \a_tX_t^Cdt + \sigma^CX_t^CdW^{C,\a}_t,\mbox{ for $t$ in $\edg{0,T}$},
	\end{align}	 
	where $\a_tX_t^Cdt$ is the variation on average capacity implied by the effort $\alpha$  and $\sigma^CX_t^CdW^{C,\a}_t$ is the stochastic part in the available capacities due to uncertainties, with $\sigma^C>0$ the volatility parameter.  Remark that we overlook ageing and deterioration in our model, which is justified by taking a short maturity $T$ compared to the average lifetime of power plants.\\

	The demand $X^D$ is modeled as the exponential of a mean reverting process to ensure that $X_t^D>0$, for $t$ in $\edg{0,T}$, and that the demand oscillates around some average level. The initial condition is fixed as $X_0^D=x_0^D$, and the infinitesimal variation over $dt$ is modeled by:
	\begin{align}
	d\log\brak{X_t^D} = \mu^{D}\brak{m^{D}-\log\brak{X_{t}^{D}}}dt + \sigma^DdW^{D}_t,\mbox{ for $t$ in $\edg{0,T}$}.
	\end{align}
	The term $\sigma^DdW^{D}_t$ is the random part in the variation with $\sigma^D>0$, and $m^{D}\in\R$ the long term average (of $\log\brak{X_t^D}$) and $\mu^{D}>0$ the speed of mean reversion.\\
	
We denote by $\Uc$ the set of admissible control processes $\a$, defined as the $\F$--predictable processes valued in the compact interval $[\alpha_{min},\alpha_{max}]$, with $\alpha_{min}<0$ and $\alpha_{max}>0$, which implies that construction and dismantling are both allowed. The restriction $\alpha\in[\alpha_{min},\alpha_{max}]$ although reasonable from an economical perspective, will only be used to simplify the rigorous mathematical resolution of the problem, and so $\abs{\alpha_{min}}$ and $\abs{\alpha_{max}}$ are set arbitrarily large.\\

	We write in a more compact form the dynamic of the state variables
	
	\begin{align} \label{eq:dyn-SDE}
	X_{t}=x_{0}+\int_{0}^{t}\mu\brak{X_{s},\a_{s}}ds+\int_{0}^{t}\sigma\brak{X_{s}}dW_{s}^{\a},\ \forall t\in[0,T],
	\end{align}
	with $\alpha$ the control process, $x_{0}\in\R_{+}^{2}$ a fixed initial condition and
	\begin{align}
	{\mu\brak{x,\a}:=
		\tilde{\mu}\brak{x}+
		\brak{
			\begin{matrix}
			\a x^{C}\\
			0
			\end{matrix}}},\mbox{ and } 
	\sigma\brak{x}:=\brak{
		\begin{matrix}
		\sigma^{C}x^{C} & 0\\
		0 & \sigma^{D}x^{D}
		\end{matrix}},
	\end{align}	
	with
	\begin{align}
	{\tilde{\mu}\brak{x}:=
		\brak{
			\begin{matrix}
			0\\
			\brak{\mu^{D}\brak{m^{D}-\log\brak{x^{D}}}+\frac{\sigma_{D}^{2}}{2}}x^{D}
			\end{matrix}}},
	\end{align}
	or equivalently
	\begin{align}
	\brak{\begin{array}{c}
		X_{t}^{C}\\
		\log\brak{X_{t}^{D}}
		\end{array}}
	=
	\brak{\begin{array}{c}
		x_{0}^{C}\\
		\log\brak{x_{0}^{D}}
		\end{array}}
	+
	\int_{0}^{t}\brak{
		\begin{array}{c}
		\a_{s}X_{s}^{C}\\
		\mu^{D}\brak{m^{D}-\log\brak{X_{s}^{D}}}
		\end{array}}
	ds
	+
	\int_{0}^{t}\brak{
		\begin{matrix}
		\sigma^{C}X_{s}^{C} & 0\\
		0 & \sigma^{D}
		\end{matrix}}
	dW_{t}^{\a}.
	\end{align}

	\begin{Remark}\label{seasonality}
{\upshape 
	In reality, the processes $X^C$ and $X^D$ exhibit a strong seasonal behavior (annual, weekly and daily patterns). These seasonalities are explained by patterns of electricity consumption in day to day life and weather conditions (heating in winter, solar production in the day ...).  
	For expository purposes, we consider a deseasonalized version of state variables. The aim of this simplification is to focus
	the analysis on random consumption peaks, and how they should be dealt with, as opposed to seasonal variations which can be anticipated. }
	\end{Remark}
	
	\begin{Remark}{\upshape
		Power demand  is considered to be inelastic with respect to electricity prices.} 
	\end{Remark}
	
	 \begin{Remark}
	{\upshape Even though we work on the space of processes valued in $\R^{2}$, our model ensures that $X$ takes only positive values, in $\R_+^{2}$, since the capacity process $X^C$ follows a log-normal distribution, and $X^D$ is defined as the exponential of an Ornstein Uhlenbeck process. One could simplify the model and define these two variables as the canonical processes on the space of exponentials of continuous functions. This would indeed simplify the description of the dynamics. Adjustements of the cost functions would also be required.  
		However, the numerical calibration on real data of these adjusted costs is not possible, so we chose  to keep on with the current model.}
	\end{Remark}
	\begin{Remark}
		{\upshape 
	As mentioned above, the capacity and demand processes defined by the controlled SDE \eqref{eq:dyn-SDE} are both non negative. However, they are unbounded from above and have poor integrability properties which could cause technical issues especially since we will be using exponential utility functions.\\
	To avoid this problem we define an arbitrarily large constant $x_{\infty}\in\R_+$, and the function $x\mapsto \underline{x}:=x\land x_{\infty}$. So whenever necessary we will use $\underline{X}^C=X^C\land x_{\infty}$ and $\underline{X}^D=X^D\land x_{\infty}$ or the vector version defined componentwise, \ie $\underline{X} = \brak{\underline{X}^{C},\underline{X}^{D}}^{\intercal}$. Remark that since $x_{\infty}$ can be set arbitrarily large, it neither impacts the results nor represents a restriction on the capacity or the demand.}
	\end{Remark}
	\subsection{Spot energy payment}
	Without capacity payment, the only transaction between consumers and producers is the reward for energy production. This reward corresponds to the spot price of electricity. Therefore, for a time interval $\edg{0,T}$, the consumer pays for his consumption the amount 
	\begin{align}
	S_T:=\int_{0}^{T}s(X_t)dt,
	\end{align} 
	and the producer receives $S_T$, where $s : \R^2\rightarrow \R_+$ is the reward per unit of time, defined as
	\begin{align}\label{eq : definition small s}
		s(x) :=P\brak{\underline{x}}\underline{x}^C\land\underline{x}^D,
	\end{align}
	and $P : \R^2\rightarrow \R_+$ is the spot price function which we define as
	\begin{align}\label{eq : prix spot}
	P\brak{x}:=\beta_{0}e^{-\beta_{1}\brak{x^{C}-x^{D}}},\mbox{ with $\beta_0,\beta_1>0$.}
	\end{align}
	\no Note that the spot energy payment only accounts for delivered energy, i.e., the minimum between the demand and available capacity; the requested power $X^D$   in standard situations, and just the available capacity $X^C$ in the case of a shortage. Different choices of electricity spot price functions can be found in \cite{aid} or \cite{aidcht} and our model is directly inspired by them. In particular, the function $P$ in  \eqref{eq : prix spot} captures a key feature in our problem: the relationship between the spot price and the residual \mbox{capacity ($X^{C}-X^{D}$).}  
	\no In the sequel, we will call $S_T$ the spot payment. Remark that $S_T$ represents a cost for the Principal and a reward for the Agent. 
	
	\subsection{Producer's problem}
	
	The agent is the electricity producer, and provides consumer with electricity, for a terminal payment $\xi+S_T$. The producer is in charge of choosing the investment policy in power plants via the process $\a$, and is subject to its costs.\\

	\no We model producer's instantaneous costs as a quadratic function of state variables 
	\begin{align}\label{eq : definition cost}
	c^{A}(x,\a):=\tilde{c}^{A}(x)+\kappa_1(\a \underline{x}^{C})+\kappa_2\frac{(\a \underline{x}^{C})^{2}}{2},\mbox{ for $(x,\a)\in\R_+^2\x[\alpha_{min},\alpha_{max}]$, and $\kappa_1,\kappa_2>0$.}    
	\end{align}
	The term $\kappa_1(\a \underline{x}^{C})+\kappa_2\frac{(\a \underline{x}^{C})^{2}}{2}$ is the cost of building or dismantling peak power plants,
	where $\kappa_1$ is the cost per unit, and $\kappa_2$ is a penalization adjustment term as the quadratic cost of construction, since the marginal cost of building at a given time step is increasing.  
	We define then
\begin{align}
\tilde{c}^{A}(x):=a \underline{x}^{C}+b (\underline{x}^{C}\land \underline{x}^{D}),\mbox{ with $a,b>0$},
\end{align}
where the first term is the cost of maintenance and the second models the variable cost of production. These variable costs of production are proportional to the available generation capacity $x^C$ and the minimum between this capacity and demand $x^{C}\land x^{D}$.\\
	
During the time period $\edg{0, T}$, producer provides electricity to consumer, and receives the payment $\xi+S_T$ at time $T$. The amount $\xi$ represents the payment producer receives for the availability of capacity, in addition to the spot payment $S_T$.\\
	
To include the moral hazard in our problem, we use the weak formulation and so we introduce $\P^{\a}$, the law of the process $X$, weak solution of the stochastic differential equation \eqref{eq:dyn-SDE} with a control process $\a$, and $\Pc$ the set of probability measures $\mathbb{P}^{\a}$.

\no The producer's objective function or his average perceived utility is defined as 
\begin{align}
J_{0}^{A}\brak{\xi,\P^{\a}}:=\E^{\P^{\a}}\edg{U_{A}\brak{\xi+S_T-\int_{0}^{T}c^{A}\brak{X_{t},\a_{t}}dt}},\mbox{ for $\brak{\xi,\P^{\a}}\in\Xi\x\Pc$},
\end{align}
for a given contract $\xi\in\Xi$ and a choice of $\a\in\Uc$ to which we associate the probability measure $\P^{\a}\in\Pc$, and  where $U_{A}$ is a utility function expressing the risk aversion; increasing
and concave. For tractability, we choose an exponential utility function, $U_{A}(x):=-\exp(-\eta_{A}x)$ with $\eta_{A}>0$, the agent's risk aversion. A rigorous definition of $\Pc$ and $\Xi$ is provided in Appendix \ref{appendix : Rigorous mathematical framework and weak formulation}, along with the weak formulation of the problem.\\ 
	
	The producer is encouraged to provide enough capacity, otherwise the consumer would reduce his payment $\xi$. The moral hazard is modeled by adding the restriction that the payment $\xi$ is a function only of $X$, not $\a$ or $\P^{\a}$. So $\xi$ is $\Fc_{T}$--measurable, where $\Fc_{T}$ by definition models the information gathered from the observation of the process $X$ up to time $T$, and $\xi$ is a function of this information. 
	That is to say the consumer only observes the state variables $X$ as stochastic processes, and has no access to the control $\a$ or $\P^{\a}$ and cannot see if the randomness of $X$ is coming from external uncertainties or from producer's actions. In other words, the producer controls the \emph{law} $\P^{\a}$ of the process $X$, \ie the probability of having some trajectories rather than others, and the consumer observes the realized trajectory and fixes the payment $\xi$ as a function of $X$. We stress here that an important feature of our model is that the structure of the contract is defined \emph{ex-ante} while its exact value is provided only \emph{ex-post} depending on the realized uncertainties.\\
	
In addition, the producer has a participation constraint $U_A\brak{\Rc}\in\R$, with $\Rc\geq 0$ its cash equivalent. Thus, producer will accept the contract $\xi$ only if he can expect to retrieve from $\xi$ a utility above the level $U_A\brak{\Rc}$. Indeed, the producer has no obligation to accept the contract and is free to refuse it before the start of the time period $\edg{0,T}$.\\
	
Whenever the producer accepts a given contract $\xi$, he wants to make the optimal investment by choosing an appropriate control $\P^{\a}$. Producer (agent) solves the problem	
	\begin{align}
	V_{0}^{A}\brak{\xi} & :=  \sup_{\P^{\a}\in\Pc}J_{0}^{A}\brak{\xi, \P^{\a}}.
	\end{align}
	\\
An agent's control $\P^{\a^\star}\in\Pc\brak{\xi}$ (or equivalently $\a^{\star}\brak{\xi}$) is said to be optimal if it satisfies 
\begin{align}
V_{0}^{A}\brak{\xi}=J_{0}^{A}\brak{\xi,\ \P^{\a^\star}}\!.
\end{align}
We denote by $\Pc^{\star}\brak{\xi}$ the set of agent's optimal controls for some admissible contract $\xi$.
	
\subsection{Consumer's problem}
	
The consumer buys and consumes electricity from the producer during the time period $\edg{0, T}$ and pays for the energy consumed at the spot price $S_T = \int_0^Ts(X_t)dt$, and a capacity remuneration given by the contract $\xi$. Consumer gets an instantaneous utility from  electricity consumption, and a disutility in case of shortage. We therefore model consumer (principal) overall instantaneous utility as
\begin{align}\label{eq : def cp}
c^{P}\brak{x}:=\theta\brak{\underline{x}^{C}\land \underline{x}^{D}}-k\brak{\underline{x}^{D} - \underline{x}^{C}}^{+}\mbox{ with $\theta,k>0$.}   
\end{align}

The first term represents consumers’ reservation value or their willingness to pay for effective consumption (which is $\min\brak{x^C, x^D}$). The larger $\theta$, the more valuable consumption to the consumers. In the literature, $\theta$ is often set using the Value of Lost Load (VOLL) \cite{fabra2018primer}.\\

The second term can be thought of as consumer’s disutility induced by the risk of total or partial blackout. Indeed, in critical situations where some shortage (whenever available capacity is less than the total demand), the system operator's ability to keep the system running decreases and total (all the system) or partial (large geographical zones) blackout may occur. The coefficient $k$ is defined to represent this disutility, which is represented in the model with a function proportional to the level of shortage (the higher the shortage level, the higher the risk of blackout). This term plays a role of ``punishment''--via the contract-- for producer whenever there is a failure to provide sufficient generation capacity to cover the instantaneous demand.\\

\no Altogether, consumer's objective function or expected utility is defined as
\begin{align}
J_{0}^{P}\brak{\xi,\P^{\a}}:=
\E^{\P^{\a}}\edg{U_{P}\brak{-\xi-S_T+\int_{0}^{T}c^{P}\brak{X_{t}}dt}},\mbox{ for $\brak{\xi,\P^{\a}}\in\Xi\x\Pc$},
\end{align}
with $U_{P}$ denoting principal's utility function, similar to agent's
utility function  with a risk aversion $\etap$;
\begin{align}
U_{P}\brak{x}:=-\exp\brak{-\eta_{P}x}\mbox{ with $\eta_{P}>0$}.
\end{align}	
\no Principal's goal is to choose the optimal incentive (payment) for the agent to make an optimal effort. Principal's problem is written 
\begin{align}
V_{0}^{P} 
&:=	\sup_{\xi\in\Xi}\sup_{\P^{\a}\in\mathcal{P}^{\star}\brak{\xi}}J_{0}^{P}\brak{\xi,\P^{\a}},
\end{align}
\ie given the optimal response of the agent to the compensation scheme, the Principal chooses the best contract which maximizes $\sup_{\P^{\a}\in\mathcal{P}^{\star}\brak{\xi}}J_{0}^{P}\brak{\xi,\P^{\a}}$. Furthermore, (as stated earlier) we assume that when given different optimal controls, agent will choose the one that maximizes principal's objective function, which is a standard assumption in contract theory \cite{holmstrom1987aggregation,Sannikov08,cvitanicz}.	
\subsection{Optimal contract and capacity payment}\label{ section : contract decomposition}
	
To find producer's and consumer's optimal policy, we follow the approach presented in \cite{cvitanicpt} for principal-agent problems. We start by considering a special class of contracts; the ``revealing contracts'' as capacity payments. These contracts satisfy the incentive compatibility property, which means that consumer provides them with a recommended policy (or effort) for producer, and producer's optimal response to these contracts corresponds to the recommended effort. Therefore, the consumer can maximize his utility over the set of revealing contracts by identifying first producer's response and choosing the best trade-off between the payment of the contract and utility induced by the corresponding response.\\

We next use a representation result to prove that any contract can be represented as ``revealing'', and thus there is no loss of generality or utility for consumers in optimizing only over such contracts.\\
	
In mathematical words we solve the problem for contracts which can be written as a terminal value of a (special) controlled forward stochastic differential equation (SDE) designed to make agent's response ``predictable''.
Then we prove that we can associate to any admissible contract such a controlled SDE, obtained by solving an appropriate backward SDE.\\
	
The revealing contracts are introduced via an appropriate parametrization of contracts; principal considers the contract as a terminal value of a controlled diffusion process, and controls its initial level and the increments linear in the state variable. The class of \emph{revealing contracts} $\Zc$ is then defined as 
\begin{align}\label{eq : ensemble des contrats revelateur}
\Zc := \crl{Y_T^{Y_0,Z}\mbox{ for some $\brak{Y_0,Z}\in\R\x\Vc$ with }Y_T^{Y_0,Z}\in\Xi},
\end{align}
\no where $\Vc$ is the set of $\F$-predictable processes $Z$ valued in $\R^2$ satisfying some integrability conditions (rigorously defined in Appendix \ref{appendix : definition of the revealing contracts}), and 
\begin{align} \label{eq : contrat revelateur}
Y_{t}^{Y_0,Z}:=Y_{0}+\int_{0}^{t}Z_{s}\cdot dX_{s}-\int_{0}^{t}H\brak{X_{s},Z_{s}}ds,\mbox{ for all }t\in \edg{0,T},
\end{align}
with $H$ corresponding to producer's Hamiltonian,  defined by \begin{align}\label{eq big hamiltonian}
H\brak{x,z}:=\sup_{\a \in [\alpha_{min},\alpha_{max}]}h\brak{x,z,\a},\ \mbox{ for } \brak{x,z}\in\R^{2}\x\R^2,
\end{align}
and $h : \R^{2}\x\R^2\x[\alpha_{min},\alpha_{max}] \rightarrow \R$ defined as
\begin{align}\label{eq : small hamiltonian}
h\brak{x,z,\a}&:=z\cdot\mu\brak{x,\a}+s(x)-c^{A}\brak{x,\a}-\frac{\eta_{A}}{2}|\sigma(x)z|^2,\mbox{ for $(x,z,\a)\in\R^{2}\x\R^2\x[\alpha_{min},\alpha_{max}]$.}
\end{align}
For completeness, we provide a derivation of the class of revealing contracts in Appendix \ref{appendix : Formal derivation of the revealing contracts}.\\

	\no We denote by $\hat{\a} : \R^2\rightarrow [\alpha_{min},\alpha_{max}]$ the maximizer of $h$ which can be easily computed 
	\begin{align}\label{eq : recommended effort}
	\hat{\a}\brak{x^C,z^C}:=\alpha_{min}\lor\brak{\Frac{z^Cx^C-\kappa_1\underline{x}^C}{(\underline{x}^C)^2\kappa_2}}\land \alpha_{max},\mbox{ for $\brak{x^C,z^C}\in\R^{2}$ with $x^C> 0$,}
	\end{align}
where $x^C$ (respectively $z^C$) denotes the first component of $x$ (respectively $z$). The function $\hat{\a}$ will be referred to as the ``recommended effort''--\cite{Sannikov08}--, and can be reasonably approximated when $\abs{\alpha_{min}},\alpha_{max},x_{\infty}\rightarrow +\infty$ as
\begin{align}\label{eq : recommended effort approx}
	\hat{\a}\brak{x^C,z^C}\approx\Frac{z^C-\kappa_1}{x^C\kappa_2}\mbox{ for $\brak{x^C,z^C}\in\R^{2}$ with $x^C> 0$}.
\end{align}
We will stick to the expression \eqref{eq : recommended effort} for the mathematical proofs, and use the approximation \eqref{eq : recommended effort approx} for the interpretations.\\
 
Remark that the process $(Y_t^{Y_0,Z})_{t\in\edg{0,T}}$ depends only on observations of $X$ (consumption and available capacities) which are observable by Principal, as opposed to the effort $\alpha$ and the Brownian motion $W^{\alpha}$. This is consistent with the moral-hazard of this problem.\\

The revealing contracts class $\Zc$ plays a central role in principal-agent problems. Not only does it allow principal to predict agent's optimal control, but also to overcome the main difficulty; the \mbox{\emph{non-Markovianity} of $\xi$,} \ie the dependence of the payment on the whole paths of the demand $X^D$ and the capacity $X^C$.\\

We can interpret the revealing contract as a performance index, closely related to agent's continuation value, which comes with a recommended effort $\hat{\a}$ defined in \eqref{eq : recommended effort}; recall that the actual effort provided by agent is neither observable, nor contractible, and therefore principal can only propose $\hat{\a}$ as a recommendation and not an obligation.\\

Nevertheless, Proposition \ref{Agent's response} proves that whenever agent (producer) is rational -which is a reasonable assumption- he will follow the recommended effort since it maximizes his expected utility, and so principal (consumer) can predict agent's (producer's) effort. Furthermore, Proposition \ref{Proposition : existence BSDE}, identifies $\Zc$ to $\Xi$, meaning that any admissible contract can be represented as a revealing one. We present these two results and provide their proofs in the Appendix. 

\begin{Proposition} \label{Agent's response}
	For every contract $Y_T^{Y_0,Z}$ in the class $\Zc$, producer's value function is characterized as 
	\begin{align}
		V_{0}^{A}\brak{Y_T^{Y_0,Z}}  =  U_{A}\brak{Y_0},
	\end{align}
	and his optimal control is given by consumer's recommended effort $\brak{\hat{\a}\brak{X^C_t,Z^C_t}}_{t\in[0,T]}$. 
\end{Proposition}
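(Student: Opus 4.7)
The plan is to substitute the representation \eqref{eq : contrat revelateur} into the agent's objective and recognise a Doléans-Dade exponential whose expectation is one, reducing the maximisation to a pointwise maximisation of the Hamiltonian $h$. First I would fix $(Y_0,Z)\in\R\x\Vc$ such that $\xi:=Y_T^{Y_0,Z}\in\Xi$, pick any admissible $\P^{\a}\in\Pc$, and plug the definition of $Y_T^{Y_0,Z}$ into the quantity inside $U_A$. Using the dynamics $dX_s=\mu(X_s,\a_s)ds+\sigma(X_s)dW_s^{\a}$ under $\P^{\a}$, the integral $\int_0^T Z_s\cdot dX_s$ splits into a drift piece $\int_0^T Z_s\cdot\mu(X_s,\a_s)ds$ and a martingale piece $\int_0^T Z_s\cdot\sigma(X_s)dW_s^{\a}$. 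Grouping terms and invoking the definition \eqref{eq : small hamiltonian} of $h$, I obtain
\[
\xi+S_T-\int_0^T c^A(X_t,\a_t)dt
=Y_0+\int_0^T\!\brak{h(X_s,Z_s,\a_s)-H(X_s,Z_s)+\tfrac{\eta_A}{2}\abs{\sigma(X_s)Z_s}^2}ds
+\int_0^T Z_s\cdot\sigma(X_s)dW_s^{\a}.
\]

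Next, because $U_A(x)=-e^{-\eta_A x}$, the expectation factorises as
\[
J_0^A(\xi,\P^{\a})=-e^{-\eta_A Y_0}\,\E^{\P^{\a}}\!\edg{\exp\!\brak{-\eta_A\!\int_0^T\!\!\brak{h(X_s,Z_s,\a_s)-H(X_s,Z_s)}ds}\,M_T^{\a}},
\]
where $M_T^{\a}:=\exp\!\brak{-\eta_A\int_0^T Z_s\cdot\sigma(X_s)dW_s^{\a}-\tfrac{\eta_A^2}{2}\int_0^T\abs{\sigma(X_s)Z_s}^2 ds}$ is a Doléans-Dade exponential. Under the integrability conditions imposed on $Z\in\Vc$ (enforced in Appendix \ref{appendix : definition of the revealing contracts}), $M^{\a}$ is a true $\P^{\a}$-martingale with $\E^{\P^{\a}}[M_T^{\a}]=1$; here one would typically invoke a Novikov or Kazamaki criterion, which is tractable because $Z$ lies in an exponentially integrable class and $\sigma(X)$ grows linearly, with $\underline{X}$ being bounded by construction.

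Since $H(x,z)=\sup_{\a}h(x,z,\a)$, the integrand $h(X_s,Z_s,\a_s)-H(X_s,Z_s)$ is non-positive, so the exponential factor is at least $1$. Because the prefactor $-e^{-\eta_A Y_0}$ is negative, this yields
\[
J_0^A(\xi,\P^{\a})\leq -e^{-\eta_A Y_0}\,\E^{\P^{\a}}[M_T^{\a}]=-e^{-\eta_A Y_0}=U_A(Y_0),
\]
valid for every $\P^{\a}\in\Pc$, hence $V_0^A(\xi)\leq U_A(Y_0)$. For the converse, plug in the recommended effort $\a^{\star}_s:=\hat{\a}(X_s^C,Z_s^C)$ given by \eqref{eq : recommended effort}: by construction $h(X_s,Z_s,\a^{\star}_s)=H(X_s,Z_s)$ for all $s$, so the exponential factor collapses to $1$ and equality is attained, $J_0^A(\xi,\P^{\a^{\star}})=U_A(Y_0)$. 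One must check that $\a^{\star}$ is admissible, which is immediate because $\hat{\a}$ takes values in $[\alpha_{min},\alpha_{max}]$ by definition.

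The main obstacle is the integrability step: verifying that $M^{\a}$ is indeed a martingale (not merely a local martingale) for every admissible $\a$ and every $Z\in\Vc$, so that the removal of the stochastic integral via $\E^{\P^{\a}}[M_T^{\a}]=1$ is legitimate, and dually, guaranteeing that the recommended control $\a^{\star}$ generates a bona fide element $\P^{\a^{\star}}\in\Pc$ (i.e. a weak solution of \eqref{eq:dyn-SDE}). Both points reduce to controlling exponential moments of $\int_0^T\abs{\sigma(X_s)Z_s}^2 ds$, which is exactly the technical content placed in the Appendix where $\Vc$ is rigorously defined; the truncation to $\underline{X}$ introduced earlier is the device that makes these exponential moment estimates work.
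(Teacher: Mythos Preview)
Your argument is correct and follows essentially the same route as the paper's proof: substitute the revealing-contract representation into $J_0^A$, factor out $U_A(Y_0)$, recognise the Dol\'eans--Dade exponential $M_T^{\a}=\Ec(-\eta_A\int_0^T Z_s\cdot\sigma(X_s)dW_s^{\a})$, and use $h\le H$ together with $\E^{\P^{\a}}[M_T^{\a}]=1$ to obtain the upper bound, which is then saturated by the recommended effort $\hat{\a}$. The only cosmetic difference is that the paper performs an explicit Girsanov change of measure to the probability $\widetilde{\P^{\a}}$ with density $M_T^{\a}$ and bounds $\E^{\widetilde{\P^{\a}}}[e^{\eta_A\int_0^T(H-h)dt}]\ge 1$, whereas you bound the product $e^{\eta_A\int_0^T(H-h)dt}M_T^{\a}\ge M_T^{\a}$ directly under $\P^{\a}$; these are equivalent. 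One small point: the martingale property $\E^{\P^{\a}}[M_T^{\a}]=1$ is \emph{built into} the definition of $\Vc$ as a standing hypothesis rather than derived via Novikov/Kazamaki, so your discussion of exponential-moment estimates for $\int_0^T|\sigma(X_s)Z_s|^2ds$ is not needed here (though it is relevant elsewhere, for proving $\Zc=\Xi$).
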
  
 \no The proof is reported in Appendix \ref{Appendix : proof of Agent's response}.	

	\begin{Proposition}\label{Proposition : existence BSDE}
		Let $\xi\in\Xi$. Then there exists a pair $\brak{Y_0,Z}\in \Zc$ such that 
		\begin{align}\label{eq : equation BSDE}
		\begin{cases}
			Y_T^{Y_0,Z} = \xi,\\
			dY_t^{Y_0,Z}= Z_t\cdot dX_t - H\brak{X_t,Z_t}dt.
		\end{cases}
		\end{align}
		Furthermore, 
		\begin{align}\label{eq : estimate on Y}
		\E^{\P^{\hat{\alpha}}}\edg{e^{(\etaa\lor\etap) (1+\delta) \sup_{t\in[0,T]}\abs{Y_t}}}<+\infty.
		\end{align}
		In particular, $\Zc=\Xi$, and therefore  
		\begin{align}\label{eq : reduced principal problem}
		V_0^{P}
		&=\sup_{Y_0\geq \Rc}\sup_{Z \in \Vc}J_{0}^{P}\brak{Y_T^{Y_0,Z},\P^{\hat{\a}}}.
		\end{align}
	\end{Proposition}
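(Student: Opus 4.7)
The equation to solve is a quadratic BSDE with terminal condition $\xi$, whose driver $H$ has quadratic growth in $Z$ inherited from the term $-\tfrac{\eta_A}{2}|\sigma(x) z|^2$ and from the Legendre transform that eliminates $\alpha$. My plan is to recast \eqref{eq : equation BSDE} under the reference measure $\P^{\hat\alpha}$ as a standard quadratic BSDE, invoke an existence theorem for quadratic BSDEs with unbounded terminal condition in the spirit of Briand--Hu, derive the exponential estimate \eqref{eq : estimate on Y} by an It\^o argument on $e^{\pm\gamma Y_t}$, and finally read off the reformulation \eqref{eq : reduced principal problem} using Proposition \ref{Agent's response}.

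Concretely, under $\P^{\hat\alpha}$ one has $dX_s = \mu(X_s,\hat\alpha_s)\,ds + \sigma(X_s)\,dW^{\hat\alpha}_s$, and because $H(x,z) = h(x,z,\hat\alpha(x,z^C))$ at the maximizer of the Hamiltonian, the drift term $Z_s\cdot\mu(X_s,\hat\alpha_s)$ partially cancels $H$, rewriting \eqref{eq : equation BSDE} in the standard form
\[
Y_t = \xi + \int_t^T F(X_s,Z_s)\,ds - \int_t^T Z_s\cdot\sigma(X_s)\,dW_s^{\hat\alpha}, \qquad F(x,z) := s(x) - c^A(x,\hat\alpha(x,z^C)) - \tfrac{\eta_A}{2}|\sigma(x) z|^2.
\]
Since $\hat\alpha$ is valued in the compact interval $[\alpha_{min},\alpha_{max}]$ and $c^A$ is polynomial in $\underline x$, a direct inspection yields a bound $|F(x,z)| \leq \phi(x) + C|z|^2$, where $\phi$ has exponential moments of every order under $\P^{\hat\alpha}$ thanks to the truncation at $x_\infty$ together with the log-normal/Ornstein--Uhlenbeck structure of $X$.

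Existence of a solution $(Y,Z)$ with $Z\in\Vc$ then follows from Briand--Hu-type existence results for quadratic BSDEs with unbounded terminal data, conditional on $\xi$ having exponential moments of order strictly greater than the quadratic constant of $F$ (which is controlled by $\eta_A$); this is precisely what the admissibility class $\Xi$ in Appendix \ref{appendix : Rigorous mathematical framework and weak formulation} is set up to ensure, through the buffer $(1+\delta)$ in the exponent. The a~priori estimate \eqref{eq : estimate on Y} I would obtain by applying It\^o's formula to $e^{\gamma Y_t}$ and $e^{-\gamma Y_t}$ with $\gamma := (\eta_A\lor\eta_P)(1+\delta)$: the It\^o quadratic term in $Z$ is tuned to absorb the quadratic growth of $F$ (possibly after a Young's inequality step to handle the positive part coming from the Legendre contribution $c^A(x,\hat\alpha)$), so that $e^{\pm\gamma Y_t}$ become sub/super-martingales whose running maximum is controlled by $\E^{\P^{\hat\alpha}}[e^{\pm\gamma\xi}]$ plus a finite $\phi$-contribution, after a Doob maximal inequality.

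The inclusion $\Zc\subset\Xi$ holds by the very definition \eqref{eq : ensemble des contrats revelateur}, and the converse $\Xi\subset\Zc$ is exactly the BSDE existence just established, giving $\Zc=\Xi$. Combined with Proposition \ref{Agent's response}, which equates $V_{0}^{A}(Y_T^{Y_0,Z}) = U_A(Y_0)$ with optimal response $\hat\alpha(X,Z^C)$, the participation constraint $V_{0}^{A}(\xi)\geq U_A(\Rc)$ becomes simply $Y_0\geq \Rc$, and the nested suprema in the principal's problem collapse onto $(Y_0,Z)\in[\Rc,+\infty)\times\Vc$ under the single law $\P^{\hat\alpha}$, yielding \eqref{eq : reduced principal problem}. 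The main obstacle is the quadratic BSDE existence step: the exponential-moment exponent required of $\xi$ has to match the quadratic growth constant of $F$, and the buffer $(1+\delta)$ is precisely what permits the It\^o/Doob machinery to close; a minor subtlety is that the truncation of $\alpha$ to $[\alpha_{min},\alpha_{max}]$ makes $\hat\alpha$ and hence $F$ piecewise defined in $z^C$, but this regularises rather than obstructs the argument.
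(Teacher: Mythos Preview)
Your strategy---reduce to a quadratic BSDE and invoke Briand--Hu---is a legitimate route, but it differs substantially from the paper's, and as written it contains a circularity you should repair.

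The circularity is this: you write the BSDE ``under the reference measure $\P^{\hat\alpha}$'' and obtain the nice driver $F(x,z) = s(x) - c^A(x,\hat\alpha(x,z^C)) - \tfrac{\eta_A}{2}|\sigma(x)z|^2$, but $\hat\alpha$ and hence $\P^{\hat\alpha}$ depend on the very $Z$ you are solving for, so this is not a BSDE under a fixed measure and Briand--Hu cannot be invoked as stated. The fix is to work under $\P^0$, where $dX_s = \tilde\mu(X_s)\,ds + \sigma(X_s)\,dW_s^0$ and the driver becomes $H(X_s,Z_s) - Z_s\cdot\tilde\mu(X_s)$. The Legendre supremum over the \emph{compact} interval $[\alpha_{min},\alpha_{max}]$ is then at most linear in $z^C x^C = \tilde Z^C/\sigma^C$ (it would be quadratic in $z^C$ only if $\alpha$ were unconstrained), while $s$ and $\tilde c^A$ are bounded thanks to the truncation at $x_\infty$. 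Hence under $\P^0$ the driver satisfies $|f|\le C(1+|\tilde Z|)+\tfrac{\eta_A}{2}|\tilde Z|^2$ with $\tilde Z:=\sigma(X)Z$, and Briand--Hu applies with quadratic constant $\eta_A$; the admissibility of $\xi\in\Xi$ supplies the needed exponential moment under $\P^0\in\Pc$. Your It\^o-on-$e^{\pm\gamma Y}$ argument for \eqref{eq : estimate on Y} then goes through. You should also verify that the $Z$ produced lies in $\Vc$ (the Dol\'eans--Dade exponential condition for every $\P^\alpha\in\Pc$), which is not delivered for free by BSDE existence.

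The paper takes a different, control-theoretic path: it defines $Y_t := U_A^{-1}(V_t^A(\xi))$ from the agent's continuation utility, proves a dynamic programming principle (Lemma \ref{lemma : DPP}), uses the admissibility hypothesis $\Pc^\star(\xi)\neq\emptyset$ to upgrade the resulting supermartingale system to a UI martingale under the optimal $\P^{\alpha^\star}$, and then extracts $Z$ via the martingale representation theorem; only the final estimate is borrowed from Briand--Hu. This construction buys the identification of $Y_t$ with the agent's dynamic certainty equivalent and makes explicit use of $\Pc^\star(\xi)\neq\emptyset$, whereas your BSDE route is more direct, does not use that hypothesis, and in fact recovers it a posteriori via Proposition \ref{Agent's response} once $(Y_0,Z)$ is in hand.
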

	\no  The proof is reported in Appendix \ref{appendix Proposition : existence BSDE}.\\

	The main conclusion is that there is no loss of generality in restricting consumer's problem to contracts in $\Zc$, which are general enough (since $\Zc=\Xi$), and properly parameterized to make producer's response predictable (by proposition \ref{Agent's response}) . So consumer only needs to solve the reduced problem \eqref{eq : reduced principal problem}, \ie to maximize his objective function over the set $\Zc$ (with the two new control variables $Y_0$ and $Z$). This corresponds to a Markovian stochastic control problem which can be solved by standard techniques, and is the object of Proposition \ref{proposition : verification Principal reduced} reported in Appendix \ref{appendix : decomposition contrat}.\\  By virtue of Proposition \ref{proposition : verification Principal reduced}, we can provide a straightforward decomposition of Principal's optimal control in different parts as in the following Corollary \ref{corollary : decomposition contrat}.
	
	\begin{Corollary}\label{corollary : decomposition contrat}
		Under the assumptions of Proposition \ref{proposition : verification Principal reduced}, Principal's optimal contract $\xi^{\star}$ can be written as $\xi^{\star}:=Y_T^{\Rc, Z^{\star}}$ with the following decomposition: 
		\begin{align}
		Y_T^{\Rc, Z^{\star}} &= \Rc+\int_0^TZ^{\star}_t\cdot dX_t -\int_0^TH\brak{X_t,Z^{\star}_t}dt,
		\end{align}
		or equivalently, 
		\begin{align}\label{optimal contract}
		Y_T^{\Rc, Z^{\star}} + S_T
		&=\Rc+\int_0^Tc^{A}\brak{X_t,\hat{\a}\brak{X^C_t,Z^{\star,C}_t}}dt+\int_{0}^{T}Z^{\star}_t\cdot\sigma\brak{X_t}dW_t^{\P^{\hat{\a}}}+\frac{\eta_{A}}{2}\int_0^T|\sigma\brak{X_t}Z^{\star}_t|^2dt.
		\end{align}
		 with $\brak{Z^{\star}_t}_{t\in[0,T]}$ defined in \eqref{eq : hat_z} and $\hat{\a}$ the recommended effort function defined in \eqref{eq : recommended effort}.
	\end{Corollary}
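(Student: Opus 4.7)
The plan is to deduce this from Proposition~\ref{proposition : verification Principal reduced} essentially by bookkeeping. First, I would invoke Proposition~\ref{proposition : verification Principal reduced} to justify that the optimizer $Y_0^{\star}$ in the reduced problem \eqref{eq : reduced principal problem} saturates the participation constraint, i.e.\ $Y_0^{\star}=\Rc$. Intuitively, since the contract enters the principal's utility with a minus sign and the agent's value function on $\Zc$ is exactly $U_A(Y_0)$ by Proposition~\ref{Agent's response}, any slack in the constraint $Y_0\ge \Rc$ is strictly suboptimal for the principal: lowering $Y_0$ reduces the expected payout without changing the recommended effort $\hat{\a}$, which depends only on $Z$. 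I would check that this monotonicity argument is visible directly in the HJB verification of Proposition~\ref{proposition : verification Principal reduced}. With $Y_0^{\star}=\Rc$ secured, the first line of the displayed decomposition is just the definition \eqref{eq : contrat revelateur} of the revealing contract evaluated at $t=T$.

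For the second line, I would rewrite the drift--Hamiltonian part. The key observation is that by \eqref{eq big hamiltonian}--\eqref{eq : recommended effort}, $\hat{\a}(x^C,z^C)$ is the (unique) maximizer of $\a\mapsto h(x,z,\a)$, so
\[
H(X_t,Z^{\star}_t)\;=\;h\bigl(X_t,Z^{\star}_t,\hat{\a}(X^C_t,Z^{\star,C}_t)\bigr)\;=\;Z^{\star}_t\cdot\mu\bigl(X_t,\hat{\a}_t\bigr)+s(X_t)-c^{A}\bigl(X_t,\hat{\a}_t\bigr)-\tfrac{\eta_A}{2}\bigl|\sigma(X_t)Z^{\star}_t\bigr|^2,
\]
where I abbreviate $\hat{\a}_t:=\hat{\a}(X^C_t,Z^{\star,C}_t)$. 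Next, under the measure $\P^{\hat{\a}}$, the dynamics \eqref{eq:dyn-SDE} give $dX_t=\mu(X_t,\hat{\a}_t)\,dt+\sigma(X_t)\,dW^{\P^{\hat{\a}}}_t$, so that $Z^{\star}_t\cdot dX_t=Z^{\star}_t\cdot\mu(X_t,\hat{\a}_t)\,dt+Z^{\star}_t\cdot\sigma(X_t)\,dW^{\P^{\hat{\a}}}_t$.

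Substituting these two identities into the first line, the drift $Z^{\star}_t\cdot\mu(X_t,\hat{\a}_t)\,dt$ cancels, and what remains is
\[
Y_T^{\Rc,Z^{\star}}\;=\;\Rc-\int_0^{T}s(X_t)\,dt+\int_0^{T}c^{A}(X_t,\hat{\a}_t)\,dt+\int_0^{T}Z^{\star}_t\cdot\sigma(X_t)\,dW^{\P^{\hat{\a}}}_t+\tfrac{\eta_A}{2}\int_0^{T}\bigl|\sigma(X_t)Z^{\star}_t\bigr|^2\,dt.
\]
Since $S_T=\int_0^T s(X_t)\,dt$, adding $S_T$ to both sides yields exactly \eqref{optimal contract}.

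I do not expect a genuine technical obstacle: once Proposition~\ref{proposition : verification Principal reduced} has supplied the optimal $(Y_0^{\star},Z^{\star})$ and once the identification $\Zc=\Xi$ of Proposition~\ref{Proposition : existence BSDE} tells us that $\xi^{\star}=Y_T^{\Rc,Z^{\star}}$ is well-defined in $\Xi$, the rest is algebraic manipulation of the semimartingale decomposition of $Y$ under $\P^{\hat{\a}}$. The only point requiring care is verifying $Y_0^{\star}=\Rc$; if the verification theorem does not state this explicitly, I would complete it by a short direct comparison, observing that on $\Zc$ the map $Y_0\mapsto J_0^{P}(Y_T^{Y_0,Z},\P^{\hat{\a}})$ is strictly decreasing (because $U_P$ is decreasing in the contract payment), so the constraint $Y_0\ge\Rc$ is necessarily binding at the optimum.
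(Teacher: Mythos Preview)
Your proposal is correct and matches the paper's approach: the paper likewise derives $Y_0^{\star}=\Rc$ from the monotonicity in $Y_0$ (via the factorization $Y^{Y_0,Z}=Y_0+Y^{0,Z}$ inside the exponential utility) as part of setting up Proposition~\ref{proposition : verification Principal reduced}, and then treats the Corollary as a ``straightforward decomposition'' obtained by substituting $H(X_t,Z^{\star}_t)=h(X_t,Z^{\star}_t,\hat{\a}_t)$ and the dynamics $dX_t=\mu(X_t,\hat{\a}_t)\,dt+\sigma(X_t)\,dW_t^{\P^{\hat{\a}}}$ into \eqref{eq : contrat revelateur}. Your only hesitation---whether $Y_0^{\star}=\Rc$ is explicit---is unfounded: it is established just before the verification proposition, so no additional comparison argument is needed.
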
 
	This optimal contract consists in a terminal payment to the producer of the random amount $Y_T^{\Rc,Z^{\star}}$, which incites him to follow the recommended effort. Note that any different effort (from the producer) would be sub-optimal in terms of his utility by Proposition \ref{Agent's response}.

	\begin{Remark}\label{remark : summary all}
		{\upshape
		We can make some observations on the remuneration of the producers and the optimal recommended effort:\\
		\rmi Under smoothness assumptions on consumer's certainty equivalent $u$ (the solution of PDE \eqref{eq : pde_u}), the recommended effort $\brak{\hat{\a}(X^C_t,Z^{\star,C}_t)}_{t\in[0,T]}$ is a feedback control as a function of $X_t^C$ and $\partial_{x^C}u(t,X^C_t,X^D_t)$. Recall 
		 \begin{align}
		 	\hat{\a}\brak{X^C_t,Z^{\star,C}_t} &= \alpha_{min}\lor\brak{\Frac{Z^{\star,C}_tX_t^C-\kappa_1\underline{X}_t^C}{(\underline{X}_t^C)^2\kappa_2}}\land \alpha_{max},\tag{\ref{eq : recommended effort}}\\
		 	&\approx\Frac{Z^{\star,C}_t-\kappa_1}{X_t^C\kappa_2}\mbox{ for }\abs{\alpha_{min}},\alpha_{max},x_{\infty}\rightarrow +\infty,\tag{\ref{eq : recommended effort approx}}
		 \end{align}	
		and from the definitions \eqref{eq : hat_z} and \eqref{eq : control xc}, and the approximation \eqref{eq : hat z approx} when $\abs{\alpha_{min}},\alpha_{max},x_{\infty}\rightarrow +\infty$,
		\begin{align}
		\begin{split}
			Z^{\star,C}_t &=\frac{\etap(\sigma^Cx^C)^2 + \frac{1}{\kappa_2}\brak{\frac{X_t^C}{\underline{X}_t^C}}^2}{(\etaa+\etap)(\sigma^CX_t^C)^2+\frac{1}{\kappa_2}\brak{\frac{X_t^C}{\underline{X}_t^C}}^2}\partial_{x^C}u(t,X_t^C,X_t^D) + \frac{\kappa_1}{\kappa_2}\frac{\brak{\frac{X_t^C}{\underline{X}_t^C}}-\brak{\frac{X_t^C}{\underline{X}_t^C}}^2}{(\etaa+\etap)(\sigma^CX_t^C)^2+\frac{1}{\kappa_2}\brak{\frac{X_t^C}{\underline{X}_t^C}}^2},\\
			&\approx \frac{\etap\brak{\sigma^{C}X_t^{C}}^{2}+\frac{1}{\kappa_2}}{\brak{\etaa+\etap}\brak{\sigma^{C}X_t^{C}}^{2}+\frac{1}{\kappa_2}} \partial_{x^C}u(t,X_t^C,X_t^D).
		\end{split}
		\end{align}	
		Therefore, for $t\in [0,T]$ and a given position $(X_t^C,X_t^D)$ the rate of investment $\hat{\alpha}$ recommended by consumer is a function of the capacity $X_t^C$ and $\partial_{x^C}u(t,X_t^C,X_t^D)$; the sensitivity of his certainty equivalent with respect to capacity at time $t$. The latter term depends on the triplet $(t,X_t^C,X_t^D)$, and so the recommended effort depends obviously on $X^C$, and implicitly on $t$ and $X^D$ through the sensitivity of consumer $\partial_{x^C}u(t,X_t^C,X_t^D)$. This point will be further highlighted with numerical experiments.\\
		\rmii Since the effort $\brak{\hat{\a}(X^C_t,Z^{\star,C}_t)}_{t\in[0,T]}$ is defined as a rate (in $\edg{\mbox{Year}}^{-1}$), it can be better understood by observing the process $\brak{X^C_t \hat{\a}(X^C_t,Z^{\star,C}_t)}_{t\in[0,T]}$ which corresponds to the actual construction or dismantling of power plants (in $\edg{\mbox{GW/Year}}$) which is given from the previous expressions as
		\begin{align}\label{eq : forme approximative de alpha}
			\begin{split}
			X^C_t\hat{\a}\brak{X^C_t,Z^{\star,C}_t} &\approx \frac{1}{\kappa_2}\frac{\etap\brak{\sigma^{C}X_t^{C}}^{2}+\frac{1}{\kappa_2}}{\brak{\etaa+\etap}\brak{\sigma^{C}X_t^{C}}^{2}+\frac{1}{\kappa_2}} \partial_{x^C}u(t,X_t^C,X_t^D) -\frac{\kappa_1}{\kappa_2},\\
			&\approx\frac{w(X_t^C)\partial_{x^C}u(t,X_t^C,X_t^D)-\kappa_1}{\kappa_2},
			\end{split}
		\end{align}
		with $w(X_t^C)\in (0,1)$ a positive weight function related to the risk aversion of producer (because $\etaa>0$). In the extreme case $\etaa\rightarrow 0$, $w(X_t^C)=1$ and the recommended effort becomes clear; the (linear) marginal cost of construction is $\kappa_1$, and so the optimal control is to construct if $\partial_{x^C}u(t,X_t^C,X_t^D)>\kappa_1$ and dismantle power plants if $\partial_{x^C}u(t,X_t^C,X_t^D)<\kappa_1$. In particular, $\hat{\alpha}$ depends only on  the certainty equivalent of consumer (and $X^C$), not producer, since the latter is compensated by the contract instead. The general case $\etaa>0$, with $0<w(X_t^C)<1$ needs further considerations in terms of the magnitude of the sensitivity $\partial_{x^C}u(t,X_t^C,X_t^D)$ to offset the weight $w(X_t^C)$, but leads to similar results. \\
		\no\rmiii $Y_T^{\Rc,Z^{\star}}$ covers all the costs the producer has to pay to follow the recommended capacity policy \eqref{eq : recommended effort} and to produce electricity to match the demand. Therefore, the optimal contract compensates those costs taking into account what the producer is earning on the spot market. We recall below producer's costs (we omit the truncation function for exposition clarity) 
		\begin{align}
		\begin{split}
		\underbrace{\int_0^Tc^{A}\brak{X_t,\hat{\a}\brak{X^C_t,Z^{\star,C}_t}}dt}_{\mbox{{Producer's costs}}}=&\underbrace{\int_0^T\kappa_1 \hat{\a}\brak{X^C_t,Z^{\star,C}_t}X_t^{C} + \kappa_2\frac{\brak{\hat{\a}\brak{X^C_t,Z^{\star,C}_t} X_t^{C}}^{2}}{2}dt}_{\mbox{Construction costs}}\\&+\underbrace{\int_0^Ta X_t^{C}dt}_{\mbox{Maintenance costs}}+\underbrace{\int_0^Tb (X_t^{C}\land X_t^{D})dt}_{\mbox{Production costs}}, 
		\end{split}
		\end{align}
		\rmvi $Y_T^{\Rc,Z^{\star}}$ shares the risk (realized uncertainties on demand and capacity) between producers and consumers, by transferring part of the randomness to the agent, while providing him with a risk compensation at the same time, to overcome his risk-aversion: 
		\begin{align}
		\mbox{{Risk part}}=\underbrace{\int_{0}^{T}Z^{\star}_t\cdot\sigma\brak{X_t}dW_t^{\hat{\a}}}_{\mbox{{Risk shared}}} + \underbrace{\frac{\eta_{A}}{2}\int_0^T|\sigma\brak{X_t}Z^{\star}_t|^2dt}_{\mbox{Risk compensation}},
		\end{align}
		and the risk shared can be interpreted as a ``reward for good luck'' and a ``punishment for bad luck'' as in \cite{hoffmann2010reward}, for both of the external noises $W^C$ and $W^D$, which is possible since the contract is defined \emph{ex-ante} and paid \emph{ex-post}. In particular, our model accounts for the risk on the uncontrolled demand $X^D$ in two ways; through the recommended effort (as explained in \rmi\!\! and \rmii\!\!\!), and the optimal contract via the ``Risk shared'' term.\\
		\rmv $Y_T^{\Rc,Z^{\star}}$ is a random variable which depends on the scenario. In particular, its value changes as the uncertainties change, and might even become negative. This means that agent might earn less or more than his total costs, depending on the outcome of uncertainties (for example very sunny or windy years might lead to low spot prices and therefore to a higher capacity remuneration). Nevertheless, in expectation, agent (producer) is guaranteed to earn $\Rc$; the cash equivalent of the reservation utility.\\  
		\rmvi We can rewrite the decomposition \eqref{optimal contract} as follow:
		\begin{align}\label{eq : decomposition en lettres}
		\mbox{Capacity remuneration}+\mbox{Spot compensation}=\Rc+\mbox{{Producer's costs}}+\mbox{{Risk shared}}+\mbox{Risk compensation}.
		\end{align} }
	\end{Remark}
	
	\subsection{Producer's participation constraint: the problem without capacity payment}
	
	\no In absence of a capacity payment, producer's only income is the spot compensation and therefore his problem is a standard Markovian stochastic control problem:
	\begin{align}\label{eq : uncontrolled producer}
	\hat{V}_0^A:=V^A_0\brak{0}=\sup_{\P^{\a}\in\Pc}\E^{\P^{\a}}\edg{U_A\brak{S_T-\int_0^Tc^A\brak{X_t,\a_t}dt}},\mbox{ for $\P^{\a}\in\Pc$}.
	\end{align}
	In this case, the consumer has no bargaining power and no control on investment decisions in capacity, and is then subject to shortage risk. The producer does not care anymore about consumer's value function, and there is no guarantee that a criteria such as the LoLE constraint is satisfied (recall that the Loss of Load Expectation (LoLE) is the targeted maximum number of hours of shortage per year, set at 3 hours per year for most European countries).\\
	
	Because of the structure of the spot function (decreasing in $x^C-x^D$), the producer makes a compromise between having few installed capacities (less than in the case with a capacity payment) to save maintenance costs and increase the spot prices, and enough capacities to satisfy (part of) the demand, to earn more on the spot market (since only sold energy generates a cash flow).  Remark that this kind of arbitrage can be seen in practice even among producers in perfect competition.\\
	
	The resolution of problem \eqref{eq : uncontrolled producer} is the object of Proposition \ref{proposition : verification pb sans contrat} reported in Appendix \ref{appendix : participation constraint}. In the absence of a contract, producer's value function $\hat{V}^A_0$, \ie the solution  to problem \eqref{eq : uncontrolled producer} given in Proposition \ref{proposition : verification pb sans contrat} provides a good proxy for the participation constraint, which we define as follow :
	\begin{align}\label{eq : reservation utility}
	\Rc:= U_A^{-1}\brak{\hat{V}_0^A} \lor 0.    
	\end{align}
	
	\no The maximum is taken in the previous equation between 0 and $U_A^{-1}\brak{\hat{V}_0^A}$ as the producer has two choices:  to operate the power plants if $U_A^{-1}\brak{\hat{V}_0^A}\geq 0$ and earn the spot price which provides a utility $\hat{V}_0^A$, or (if  $U_A^{-1}\brak{\hat{V}_0^A}<0$), to stop all activities which would lead \mbox{to 0} earnings (assuming that we neglect any agency costs related to bankruptcy).	
	\section{Numerical results and interpretations}
		In this section our model is numerically solved for a stylized system, based on the French electricity power system.	We implement the optimal capacity contract and optimal policy, by numerically solving the PDE \eqref{eq : pde_u} describing consumer's value function,
	with parameters calibrated on the French power system. Then we observe multiple scenarios and the evolution of state variables under this policy. A more precise description of the numerical resolution procedure is provided in Appendix \ref{Appendix : Numerical method}.
	\subsection{Case study: the French power system}
	
	We consider a time horizon $T = 5\edg{\mbox{Years}}$, and we discretize it with a time step $\Delta t = \frac{1}{400}\edg{\mbox{Years}}$ for the diffusion of state variables, roughly speaking, over one time step per day.

	\no The state variables $X^{C}$ and $X^{D}$ are expressed in GW, and the contract and costs
	(quantities inside of the utility function) are in $10^6$\euro, ($\edg{\mbox{M\euro}}$). As stated earlier, $X^C$ is the instantaneous overall available capacity, and the control is only on the peak power plants which are assumed to be gas turbine.
	
	\subsubsection{Capacity and demand}
	
	 We use the generation capacity, demand and spot prices available online\footnote{The French TSO RTE website for capacity and demand \mbox{\url{https://clients.rte-france.com}}, and the EPEX SPOT website \url{https://www.epexspot.com} for spot prices.} for the time period 29/06/2009-15/12/2014. Remark that we stop at $2014$ because the latest available capacity data is provided in that year, as the French TSO stopped publishing available capacity records on an aggregated basis per technology. Nevertheless, the French production mix did not change a lot in the past period, and we can reasonably assume that the uncertainties on capacity generation remain unchanged too.\\ 
	 We start by calibrating the parameters of SDE \eqref{eq:dyn-SDE} modeling the dynamics of $X^C$ and $X^D$. As mentioned in \mbox{Remark \ref{seasonality},} we only consider deseasonalized state variables in our model. Therefore, to deseasonalize the input data, we use a locally weighted scatterplot smoothing algorithm implemented in the software R; the function ``STL'' which decomposes time series into three components: a trend, a seasonal component and a residual noise. This algorithm extracts the trend 
	by averaging locally, then computes the seasonality on residuals by averaging across a given frequency. Once the seasonal component is computed, it is subtracted from the original
	time series to get the deseasonalized data. We apply this procedure twice; once for the annual seasonality and another time for the weekly seasonality.\\

	The demand $X^D$ modeled as the exponential of an Orstein--Uhlenbeck process is calibrated by linear regression of the returns of daily data fixed at 7 p.m., the hour of the day with the highest demand.\\ 
	As for the capacity $X^C$, we take the daily sum of the different generation technology capacities: nuclear, gas, coal, fuel, hydro-power (reservoir and run-of-the-river) and then $\sigma^C$ is calibrated as to have simulated trajectories with similar behaviour with historical (observed) capacity data.\\

	Table \ref{tab : capa et dem params} summarizes our estimated parameters for capacity and demand processes, and we can see in figure \ref{fig : Simu vs hist} a comparison between historical (deseasonalized) data with generated scenarios of demand and capacity with our calibrated parameters. 
\begin{center}
\begin{minipage}[c]{.7\textwidth}
	\begin{table}[H]
		\centering{}%
		\begin{tabular}{|c|c|c|c|}
			\cline{2-4} 
			\multicolumn{1}{c|}{} & Parameter & Value & Unit\tabularnewline
			\hline 	    	\multicolumn{1}{|c|}{\begin{tabular}[c]{@{}c@{}}Available\\ generation capacity\end{tabular}} & $x_{0}^{C}$ & 90 & $\edg{\mbox{GW}}$\tabularnewline
			\cline{2-4} 
			& $\sigma^{C}$ & 0.1 & $\edg{\mbox{Year}}^{-\frac{1}{2}}$\tabularnewline
			\hline 
			\multirow{4}{*}{Demand} & $x_{0}^{D}$ & 60 & $\edg{\mbox{GW}}$\tabularnewline
			\cline{2-4} 
			& $\mu^{D}$ & 61.92 & $\edg{\mbox{Year}}^{-1}$\tabularnewline
			\cline{2-4} 
			& $\exp\brak{m^{D}}$ & 60 & $\edg{\mbox{GW}}$\tabularnewline
			\cline{2-4} 
			& $\sigma^{D}$ & 0.86 & $\edg{\mbox{Year}}^{-\frac{1}{2}}$\tabularnewline
			\hline 
		\end{tabular}
		\caption{Calibrated parameters for capacity and demand. This table provides the set of parameters for which our model fits the deseasonalized data of the French generation capacity and demand over the time period $29/06/2009-15/12/2014$.}
		\label{tab : capa et dem params}
	\end{table}		
\end{minipage}
\end{center}
	\graphicspath{ {Plots/} }
	\begin{figure}[H]
		\begin{subfigure}{.5\textwidth}
			\centering
			\includegraphics[scale=0.4]{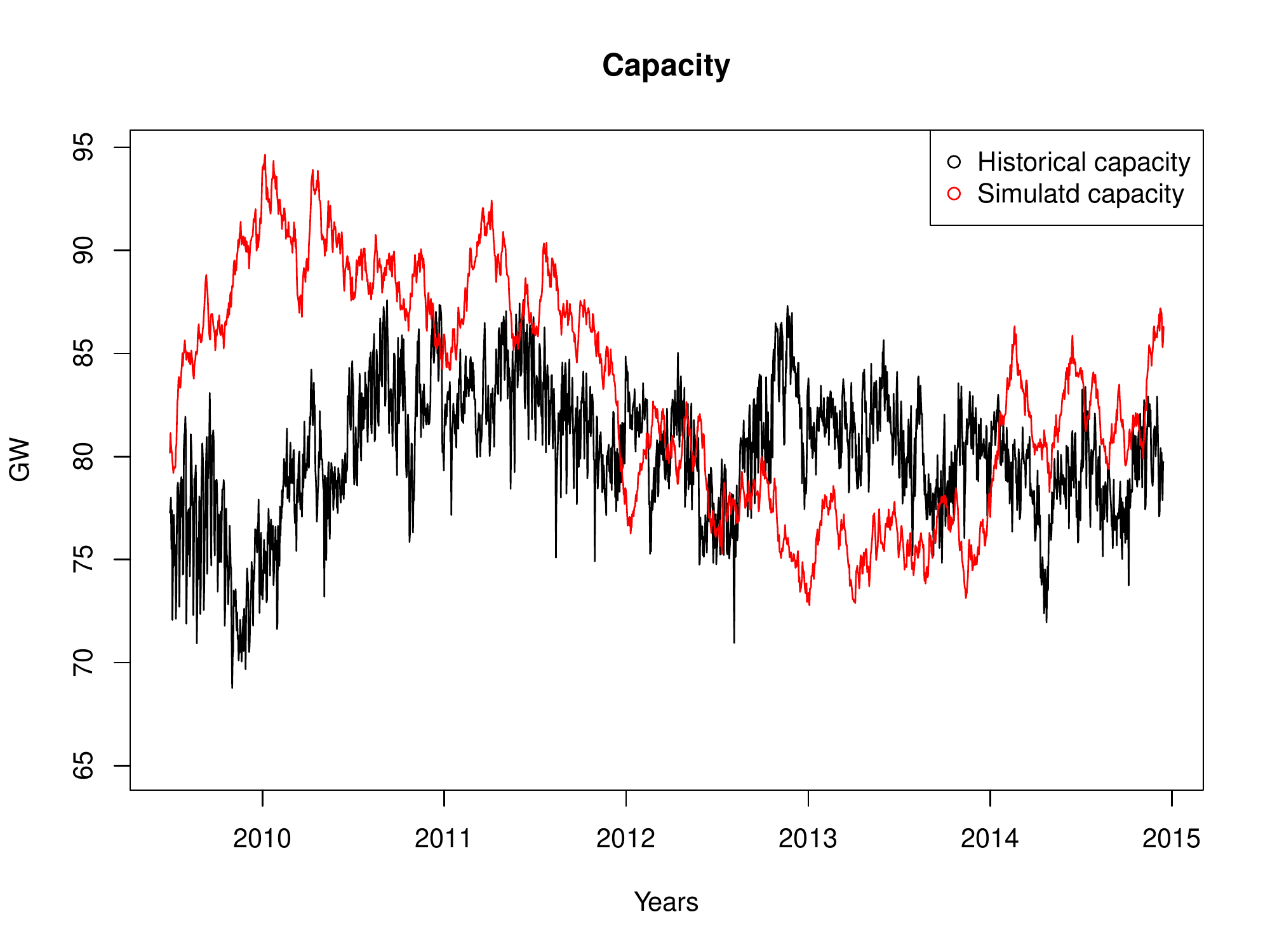}
			\caption{Capacity process}
		\end{subfigure}%
		\begin{subfigure}{.5\textwidth}
			\centering
			\includegraphics[scale=0.4]{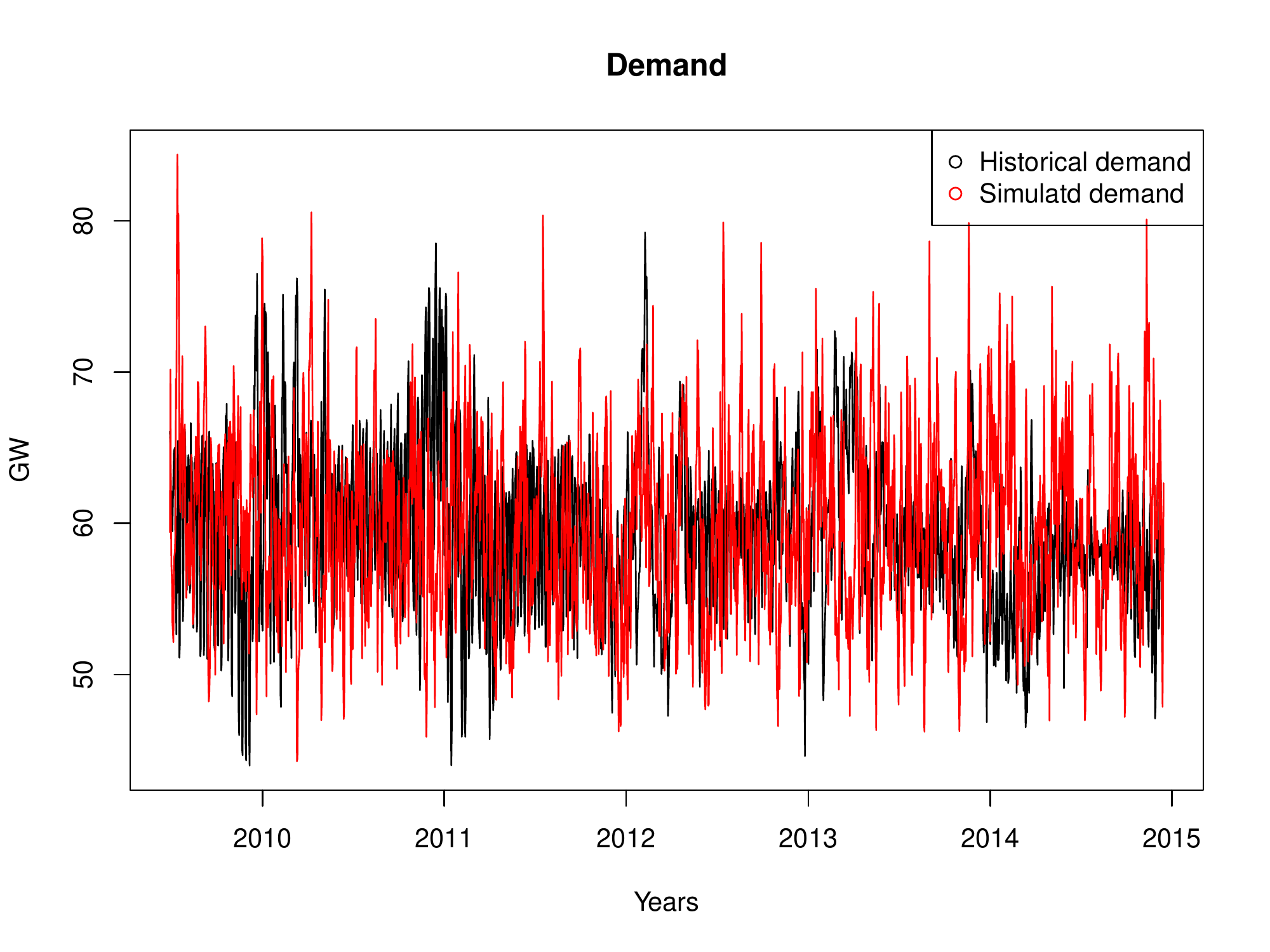}
			\caption{Demand process}
		\end{subfigure}
		\captionsetup{width=.7\linewidth}
		\caption{Comparison between historical data and simulated processes. The black line represents the evolution of the historical data (generation capacity (GW) in the left figure and demand (GW) in the right figure), plotted against a simulated trajectory (in red) with our model and the calibrated parameters of table \ref{tab : capa et dem params}.}
		\label{fig : Simu vs hist}
	\end{figure}

	\subsubsection{Spot price function}
	
	 We calibrate our spot price function $P$ defined in \eqref{eq : prix spot} using historical data, and taking one price per day, at 7 p.m.; the same as for demand data (the hour of daily demand peak).
    The calibration is simply done by taking the $\log$ of the time series, and then applying a linear regression. Remark that the spot price function $P$ is completely characterized by the capacity and demand, so the seasonality is naturally accounted for.
	Table \ref{tab : spot function} summarizes our choice of the spot price function $P$ with its calibrated parameters, and the figure \ref{fig : calibration_spot} represents a comparison between historical and simulated spot prices.    
	
	\begin{table}[H]
		\centering{}%
		\begin{tabular}{|c|c|c|c|c|}
			\cline{2-5} 
			\multicolumn{1}{c|}{} & Model:  $\frac{\edg{\mbox{M\euro}}}{\edg{\mbox{GW}}\edg{\mbox{Year}}}$ & Parameter & Value & Unit\tabularnewline
			\hline 
			\multirow{2}{*}{Spot price function} & \multirow{2}{*}{$P(x)=\beta_{0}e^{-\beta_{1}\brak{x^{C}-x^{D}}}$} & $\beta_{0}$ & 102.8 &$\edg{\mbox{\euro/MWh}}$ \tabularnewline
			\cline{3-5} 
			&  & $\beta_{1}$ & $335.3\x10^{-4}$ & $\edg{\mbox{GW}}^{-1}$\tabularnewline
			\hline 
		\end{tabular}
		\captionsetup{width=.7\linewidth}
		\caption{Spot price model and its calibrated parameters. The first column recalls the model of electricity spot price as a function of the capacity margin $(x^C-x^D)$, and the last three columns provide the parameters of this model calibrated on the French electricity spot prices.}
		\label{tab : spot function}
	\end{table}

	\begin{figure}[H]
		\centering
		\includegraphics[width=.5\textwidth]{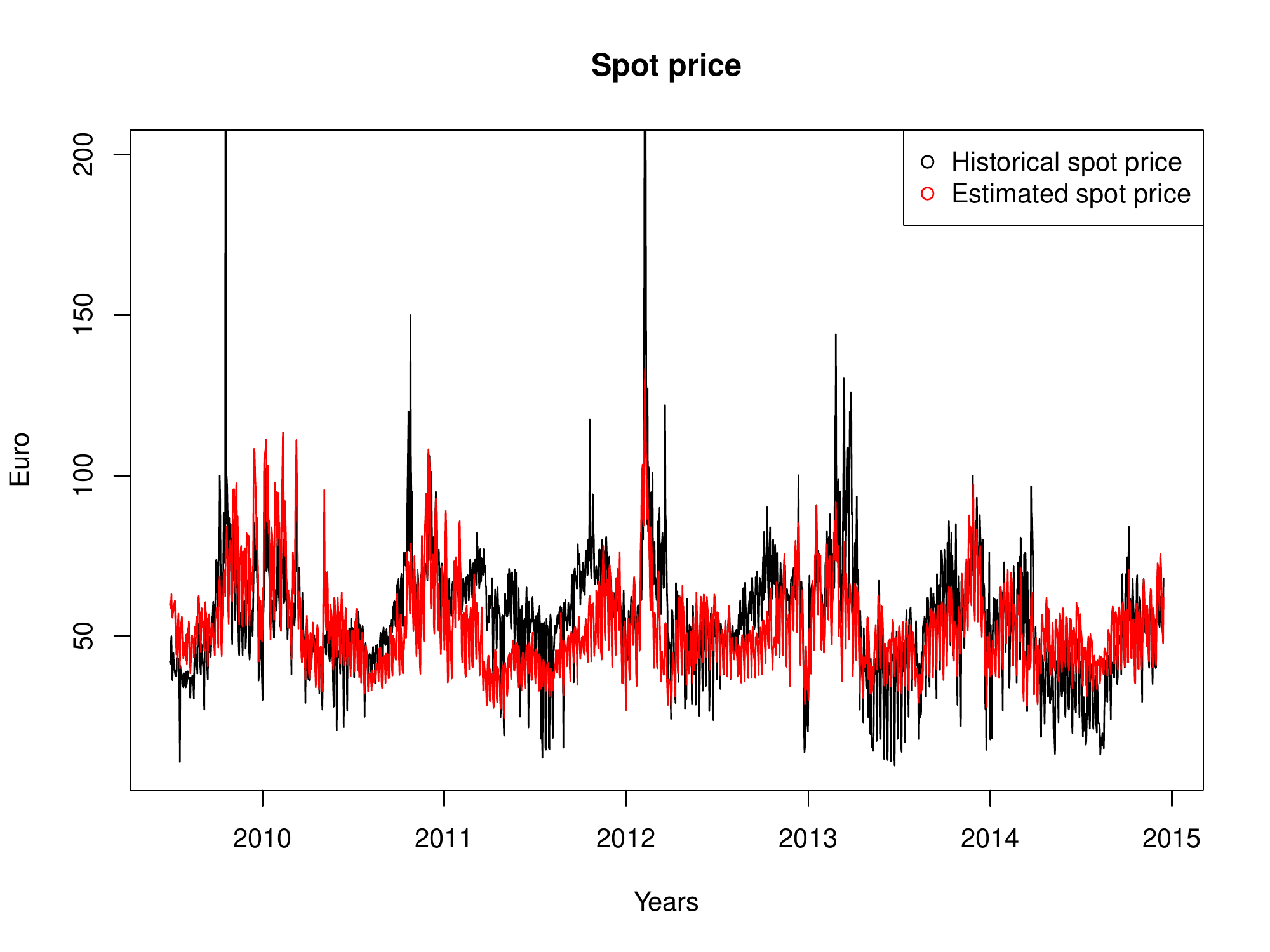} 
		\captionsetup{width=.5\linewidth}
		\caption{Historical and estimated spot prices in euros. In black
			the historical spot prices, and in red the reconstruction of the
			spot price using the function $P$ from table \ref{tab : spot function} and the historical
			realizations of demand and capacity generation.}
		\label{fig : calibration_spot}
	\end{figure}
	 Remark that our model reproduces quite well the behaviour of spot prices but without the largest peaks. This is coherent with a market with a low price cap (a price cap which can be seen for example in many European countries).
	
	\subsubsection{Costs}
	Electricity supplier has to take into account the construction, maintenance and production costs of different power plants. These costs are provided by the French TSO and WEO 2018\footnote{See the \emph{``WEO 2018 report''} and \emph{``Impact assessment of the French Capacity Market, 2018.''} by the French TSO RTE.}.
	The cost of maintenance $a$ and the cost of production $b$ in $\edg{\mbox{\euro/MWh}}$ are estimated as weighted averages between the different costs of technologies, where the weights used for maintenance are the same as for the installed capacities, and those for production costs are taken as the proportions of production; cf Table \ref{tab : weights of technologies}.\\
	
	\begin{table}[H]
	\centering{}
    \begin{tabular}{|c|c|c|}
    \hline
    Technology                    & Installed capacities & Percentage of production \\ \hline
    Nuclear                       & 48 \%                & 72 \%                    \\ \hline
    Coal                          & 2 \%                 & 1 \%                     \\ \hline
    Gas and Fuel                 & 8 \%                 & 4 \%                     \\ \hline
    Wind turbines and Photovoltaic & 18 \%                & 7 \%                     \\ \hline
    Hydropower                    & 19 \%                & 12 \%                    \\ \hline
    Bioenergy                     & 6 \%                 & 4 \%                     \\ \hline
\end{tabular}
\captionsetup{width=.7\linewidth}
\caption{Different technologies and their weights. The first column lists the technologies present in the French mix. The second column represents the percentages in terms of installed capacities for each technology. The third column represents the percentages in terms of energy produced.}
\label{tab : weights of technologies}
\end{table}
	
The cost of construction per unit $\kappa_1$ in $\edg{\mbox{\euro/MWh}}$ is taken as the equivalent annual cost of a gas turbine power plant 
instead of a weighted average since only peak power plants are used for the control (construction or dismantling). The equivalent annual cost of a gas turbine power plant with a total cost of investment $C_{\mbox{{\tiny Total cost}}} = 550\edg{\mbox{\euro/kW}}$-- which should not be confused with the levelized cost of energy, and is computed by dividing the annual cost of investment by the total number of hours per year (8760 hours)-- , a lifetime $T_{\mbox{{\tiny Gas Turbine}}} = 30\edg{\mbox{Years}}$, and a discount rate $r = 8\%$ is computed as $\kappa_1=\frac{nr C_{\mbox{{\tiny Total cost}}}}{1-\brak{1+r}^{T_{\mbox{{\tiny Gas Turbine}}}}}=122.13\edg{\mbox{\euro/kW}}$, where $n$ is the number of upcoming annuities approximated by $n\simeq 2.5$. A more precise computation requires to take $n$ as the number of annuities left to pay during the contract time (between 1 and 5 years), but we chose to simplify and take an average value $n = 2.5$.\\
The adjustment coefficient $\kappa_2$ is taken as $\kappa_2= 2\x \kappa_1$ where $\kappa_2$ is in $\edg{\mbox{\euro/(MWh$\x$MW)}}$. Different sensitivities with respect to this parameter are then performed and ensure that results are stable within a reasonable range of parameters.\\  

A summary of our calibrated parameters for producer's costs can be found in table \ref{tab : costs}.
	
\begin{table}[H]
	\centering{}%
	\begin{tabular}{|c|c|c|c|}
		\cline{2-4} 
		\multicolumn{1}{c|}{} & Parameter & Value & Unit\tabularnewline
		\hline 
		\multirow{4}{*}{Producer's costs} & $\kappa_1$ & 122.13
			& $\edg{\mbox{\euro/kW}}$\tabularnewline
			\cline{2-4} 
			& $\kappa_2$ & $31.8\x 10^{-4}$ & $\edg{\mbox{\euro/(MWh$\x$MW)}}$ \tabularnewline
			\cline{2-4}
			& $a$ & 75.35 & $\edg{\mbox{\euro/(kW$\x$year)}}$\tabularnewline
			\cline{2-4} 
			& $b$ & 17.6 & $\edg{\mbox{\euro/MWh}}$\tabularnewline
			\hline 
		\end{tabular}
	\captionsetup{width=.7\linewidth}
	\caption{Calibrated parameters for construction, maintenance and production costs. $\kappa_1$ and $\kappa_2$ are set to the equivalent annual cost of a gas turbine, and $a$ and $b$ are weighted
		averages (with weights from table \ref{tab : weights of technologies}) of the costs of maintenance and production of all technologies.}
	\label{tab : costs}
\end{table}

	\subsubsection{Utility, disutility and risk aversion}
	
	Risk aversions and utility preferences, $\theta, k,\eta_{A}$, $\eta_{P}$ and the participation constraint $\Rc$, are parameters less straightforward to calibrate.\\
	
	The participation constraint (or cash equivalent of the reservation utility) $\Rc$ is defined as a function of the solution to producer's problem in absence of capacity payment, and given by \ref{eq : reservation utility}. $\Rc$ is computed by numerically solving the PDE \eqref{eq : pde agent solo}.\\
	
	The parameter $\theta$ is expressed in $\edg{\mbox{\euro/MWh}}$, and reflects consumers' satisfaction in consumption per GW over time, or how much they are willing to pay for the electricity. This parameter is calibrated as the \emph{Value of Lost Load} (VoLL)\footnote{See again \emph{``Impact assessment of the French Capacity Market, 2018.''} by RTE.}.\\
	
	We assume that producer should be more risk averse than consumer. The reason of this assumption is that consumer's risk aversion embeds the aversion to shortage represented by the term in $k$. Roughly speaking, as soon as there is a shortage, consumer's utility starts to decrease because of the term $\etap \x k (X^D-X^C)^+$. We can interpret it by saying that consumer is willing to accept more financial risk than producer, in exchange of offsetting the risk of having a shortage.
	We choose therefore $\etaa>\etap$ and we take values inspired by the calibration in \cite{aidpt}. A sensitivity analysis is then performed on these parameters and we find that our results are not affected if producer is more risk averse than the consumer or the contrary.\\
	
	Regarding the parameter $k$, we use a further constraint which is that the average number of shortage hours per year should be reasonable. Indeed, recall that for most European electricity systems, the targeted maximum number of hours of shortage (LoLE), is 3 hours \cite{Newberry16}. Therefore we calibrate these parameters by an iterative procedure, \ie by repeatedly solving the problem, diffusing the state variables, computing the total period of shortage and adjusting the parameters until we attain a reasonable number of shortage hours per year. A possible set of consumer's preferences and  risk aversions of both parties is given in the table \ref{tab : Principal pref and risk aversions}. Of course, one could always argue that this set of parameters is not unique because of the degrees of liberties compared to the number of constraints, but
	this set seems quite reasonable and produces stable numerical results, which we present in the next section along with a sensitivity analysis.\\
	
	\begin{table}[H]
		\centering{}%
		\begin{tabular}{|c|c|c|c|}
			\cline{2-4} 
			\multicolumn{1}{c|}{} & Parameter & Value & Unit\tabularnewline
			\hline 
			\multirow{2}{*}{Consumer's preferences} & $\theta$ & $20000$ & $\edg{\mbox{\euro/MWh}}$\tabularnewline
			\cline{2-4} 
			& $k$ & $200000$ & $\edg{\mbox{\euro/MWh}}$\tabularnewline
			\hline 
			\multirow{2}{*}{Risk aversions} & $\eta_{A}$ & $0.852\x10^{-4}$ & $\edg{\mbox{M\euro}}^{-1}$\tabularnewline
			\cline{2-4} 
			& $\eta_{P}$ & $0.8094\x10^{-5}$ & $\edg{\mbox{M\euro}}^{-1}$\tabularnewline
			\hline 
			Participation constraint & $\mathcal{R}$ & $2.8$ & $\edg{\mbox{\euro/MWh}}$ \tabularnewline
			\hline
		\end{tabular}
		\captionsetup{width=.7\linewidth}
		\caption{Choice of risk aversions, and calibrated preferences. This table provides estimates for risk aversions $\etaa$ and $\etap$, and consumers' utility for consumption $\theta$ (the VoLL), together with their aversion to shortage $k$, and the minimal payment required by producer $\Rc$. This set of parameters generates scenarios with a reasonable number of shortage hours.}
		\label{tab : Principal pref and risk aversions}
	\end{table}

	\subsection{Numerical results : Comparison between the system with and without a CRM}
	
	Once the parameters fixed, we simulate $N=5000$ scenarios and compare three different cases; one ``without a CRM'', where producer adjusts capacities to maximize his utility, another ``with a CRM''; using the optimal policy for both consumer (optimal compensation \eqref{optimal contract}) and producer (recommended effort \eqref{eq : recommended effort}), and a third one with no capacity adjustment (``No adjustment''), \ie no building or dismantling of capacities; leaving them subject to external uncertainties. Table \ref{table : Comparison between policies} summarizes the results of our simulations.
	
	\begin{table}[H]
		\centering
		\resizebox{0.8\linewidth}{!}{
			\begin{tabular}{lrrrrrr}
				\toprule
\multicolumn{1}{c}{ } & \multicolumn{2}{c}{Without CRM} & \multicolumn{2}{c}{With a CRM} & \multicolumn{2}{c}{No adjustment} \\
\cmidrule(l{3pt}r{3pt}){2-3} \cmidrule(l{3pt}r{3pt}){4-5} \cmidrule(l{3pt}r{3pt}){6-7}
  & mean & sd & mean & sd & mean & sd\\
\midrule
\addlinespace[-1em]
\multicolumn{7}{l}{\textbf{}}\\
\hspace{1em}Shortage hours per year [Hours] & 6165.4 & 1459.1 & 2.2 & 7.4 & 178.4 & 594.3\\
\hspace{1em}Average Spot price [euro/MWh] & 146.5 & 30 & 37.6 & 9.6 & 43 & 16.4\\
\hspace{1em}Average Margin [GW] & -6 & 8.2 & 32.4 & 9.0 & 28.1 & 11.4\\
\hspace{1em}Spot revenues [euro/MWh] & 134.6 & 25.6 & 37.8 & 9.6 & 43.1 & 16.2\\
\hspace{1em}Capacity payment [euro/MWh] & NA & NA & 12.3 & 13.2 & NA & NA\\
\hspace{1em}Spot + Capacity payment[euro/MWh] & 134.6 & 25.6 & 50.1 & 10.9 & 43.1 & 16.2\\
\hspace{1em}\hspace{1em}\hspace{1em}Participation constraint [euro/MWh] & NA & NA & 2.8 & 0.0 & NA & NA\\
\hspace{1em}\hspace{1em}\hspace{1em}Risk shared [euro/MWh] & NA & NA & -0.2 & 11.7 & NA & NA\\
\hspace{1em}\hspace{1em}\hspace{1em}Risk compensation [euro/MWh] & NA & NA & 14.9 & 3.9 & NA & NA\\
\hspace{1em}\hspace{1em}\hspace{1em}Total costs [euro/MWh] & 77.4 & 9.3 & 32.7 & 3.2 & 30.5 & 1.6\\
\hspace{1em}\hspace{1em}\hspace{1em}\hspace{1em}\hspace{1em}Construction and dismantling [euro/MWh] & 50.3 & 9.7 & 1.8 & 3.7 & 0 & 0\\
\hspace{1em}\hspace{1em}\hspace{1em}\hspace{1em}\hspace{1em}Maintenance [euro/MWh] & 9.5 & 0.7 & 13.2 & 1.3 & 12.9 & 1.6\\
\hspace{1em}\hspace{1em}\hspace{1em}\hspace{1em}\hspace{1em}Production [euro/MWh] & 17.6 & 0 & 17.6 & 0.0 & 17.6 & 0\\
\bottomrule
		\end{tabular}}
	\captionsetup{width=.8\linewidth}
	\caption{Comparison between different policies. This table regroups the average and standard deviation of the most relevant quantities over $N=5000$ simulations. The first two columns represent the scenarios generated without a CRM, \ie where producer controls the capacity and his only income is the spot revenue. The next two columns represent our proposed CRM, with scenarios generated following the recommended effort and a capacity payment. The last columns provide the results for scenarios generated without any control on capacity and without capacity payment.}
	\label{table : Comparison between policies}
	\end{table}
	
	\subsubsection{The system evolution without a capacity payment}
	
	We start by analyzing the system without a capacity payment. As mentioned before, the producer has market power and no incentive to satisfy the LoLE constraint, since his goal is to maximize his utility function instead of just offsetting his marginal costs. On the contrary, as his only compensation is from the spot market, his optimal strategy consists in finding the equilibrium between high enough spot prices (corresponding to low or even negative capacity margins) and high enough available capacity as spot compensation is $\int_0^TP(X_t)X_t^C\land X_t^Ddt$. We can see from table \ref{table : Comparison between policies} that producer settles for a -6 GW average, which corresponds to an average spot price of 146 euros/MWh. It follows from this negative equilibrium average margin that the system is in shortage situation most of the period $\edg{0,T}$, which is confirmed by the numerical results (6165 shortage hours per year).\\
   
    In figure \ref{fig : violent_cap_et_dem_ag_sol}, we can observe one scenario without a CRM. As stated before, producer's optimal strategy is to decrease the capacity level which leads to the spot prices increasing, see figure \ref{fig : violent_spot_price_Ag_sol}. This decrease in capacity continues even after reaching the average demand level (about $60$ GW) and attains an equilibrium (around 20 GW in the scenario which is more severe than average). This explains the high construction and dismantling costs mainly due to dismantling actions. 
	
	\begin{figure}[H]
		\centering
		\begin{subfigure}{.4\textwidth}
			\centering
			\includegraphics[scale=0.4]{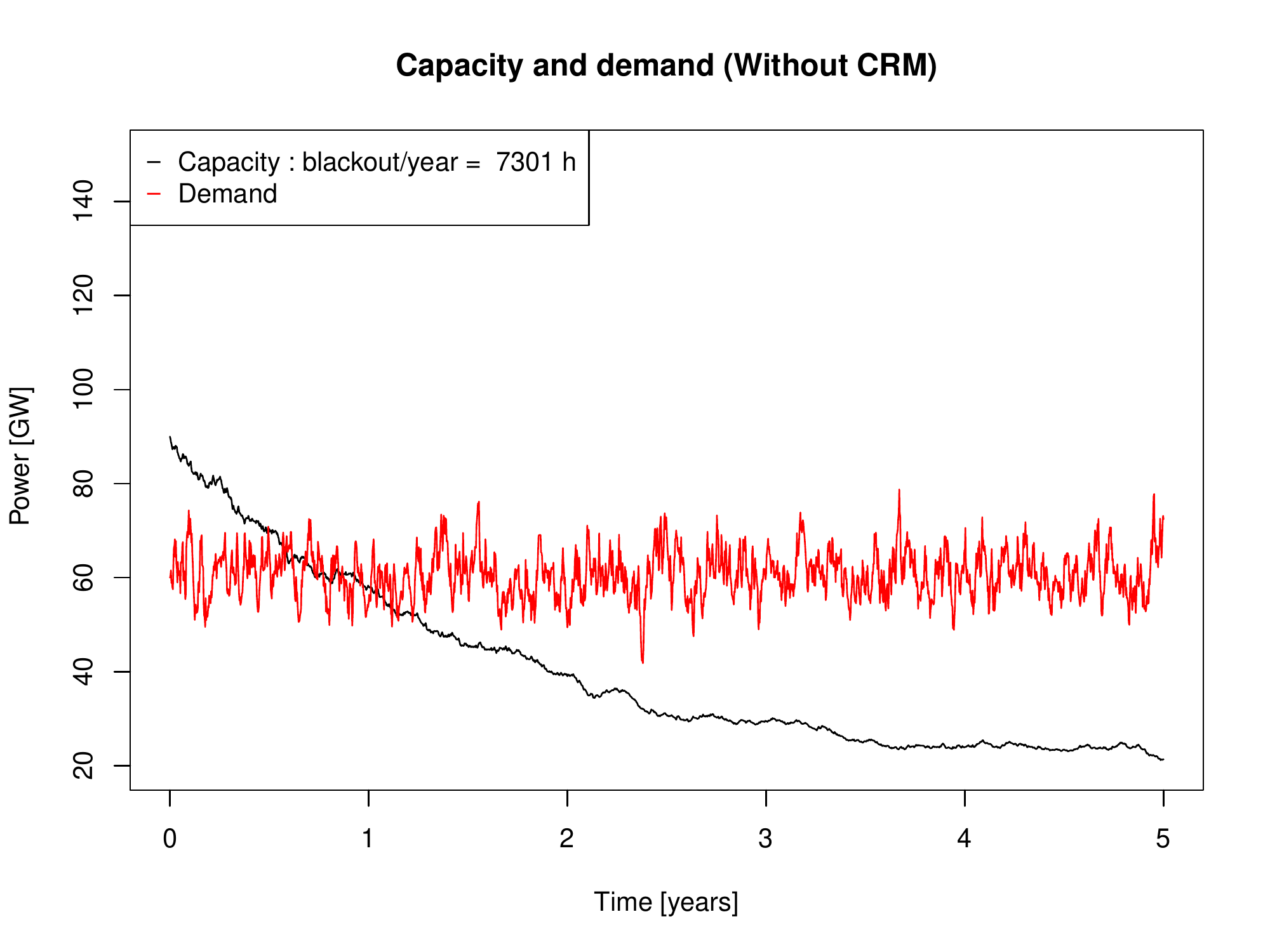}
			\captionsetup{width=.8\linewidth}
			\caption{Capacity and demand without a CRM. In red the demand with a constant average, and in black the capacity generation decreasing over time and suggesting consistent dismantling of powerplants and an increasing number of shortage hours per year.}
			\label{fig : violent_cap_et_dem_ag_sol}
		\end{subfigure}
		\begin{subfigure}{.4\textwidth}
			\centering
			\includegraphics[scale=0.4]{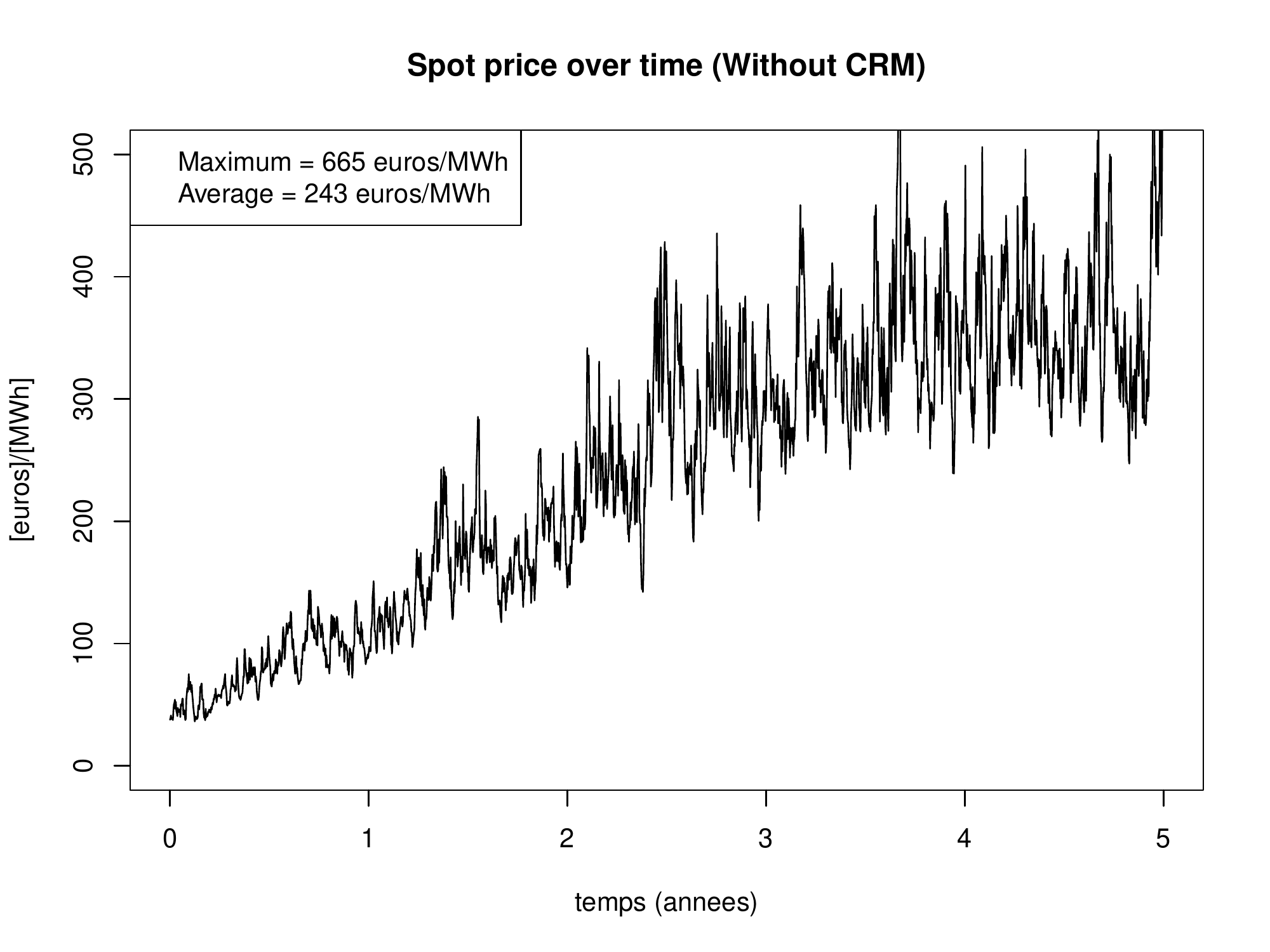}
			\captionsetup{width=.8\linewidth}
			\caption{Spot price without a CRM. The evolution of spot price is the reverse of the margin (capacity-demand), increasing over time and reaching a maximum of 665 euros/MWh near maturity and an average of 243 euros/MWh on the whole period.}
			\label{fig : violent_spot_price_Ag_sol}
		\end{subfigure}
		\captionsetup{width=.8\linewidth}
		\caption{Evolution over time of the state variables without a CRM. The capacity and demand (GW) and the corresponding spot price (Euros/MWh), for a scenario where producer controls the generation capacity, and his only income is the spot revenue.}
		\label{fig : state variables ag solo}
	\end{figure}
	
	The figures \ref{fig : violent_cap_et_dem_ag_sol} and \ref{fig : violent_spot_price_Ag_sol} illustrate that without a CRM, producer will be better off with low capacity and high spot revenues, and therefore if we aim at keeping a reasonable level of available capacity (which implies lower spot prices), it is necessary to provide him with a complementary compensation to replace his losses in spot revenues. We highlight the fact that the absence of a capacity payment would have much less drastic impacts in the real life than what we observe in our numerical simulations. This is due from one side to the regulation authorities which would not allow for such levels of shortage to occur, and from the other side because of the presence of multiple producers in competition who might decide to invest more --breaking the market power--, or even new actors (producers) willing to invest and enter the market in such favorable conditions (\ie with a spot price much higher than costs.)\\	
	
	The loss of spot revenues incurred by producer when keeping high capacity levels is partly captured by his participation constraint, since $\Rc$ also accounts for the change in construction, maintenance and production costs. In fact, the more profitable the system without a CRM to the producer, the higher $\Rc$, and the more inciting the contract (CRM) needs to be. In our setting $\Rc=2.8\mbox{\euro/MWh}$, and is decreasing in $\sigma^C$ and increasing in $\sigma^D$. When the volatility of capacity is high, the efforts of producer have less and less impact on $X^C$ -and therefore on the system- and his utility (or its cash equivalent $\Rc$) is lower. On the other side, whenever the volatility of demand is high, the probability of shortage increases and when capacity margin becomes low this drives the spot prices up in an amplified manner which gives the producer more utility.\\ 
	
	Similar to the volatility of capacity $\sigma^C$, the parameters $\kappa_2$ and $x_0^C$ have the same impact on $\Rc$; the cost of control (in this case, the cost of shutting down powerplants) becomes higher with $\kappa_2$ which lowers producer's utility, while a positive change in the initial value $x_0^C$ increases the capacity margin and decreases the spot prices and $\Rc$ as a result. Finally, and obviously, higher spot prices (because of higher spot levels $\beta_0$) increase $\Rc$.\\
	
	Note that it is not possible to infer from the first column of table \ref{table : Comparison between policies} the participation constraint $\Rc=2.8\mbox{\euro/MWh}$, since the utility function is concave we have $\Rc= U_A^{-1}\brak{\E \edg{U_A \brak{\text{Spot}+ \text{Capacity payment} -  \text{Total Costs}}}}$ which is lower than what can be read from the table and which corresponds to $\E \edg{\text{Spot} }+ \E \edg{\text{Capacity payment}} -  \E \edg{\text{Total Costs}} $.
	\subsubsection{Analysis of the system evolution under the optimal policy}
    Coming back to table \ref{table : Comparison between policies}, we can see that introducing the CRM drastically improves the security of the system; (an average of 2 hours shortage per year--\emph{respecting the LoLE constraint}-- compared to 178 hours per year when there is no capacity adjustment and 6165 hours per year when producer has market power.) Remark that this also reduces consumer's payments: it is less costly for the consumer to pay for capacity and the spot prices (which is in average rather low because the system margin is high) than paying only the spot prices ``without CRM'' where spot prices are very high.\\
   
    Observe also that when comparing the system with a CRM and without capacity adjustment, we see that the average margin is positive in both cases and quite high (32 GW and 28 GW) so one would expect that these two settings would be quite similar. However, we see that we obtain a substantial gain in the average number of shortage hours per year when following the dictated policy (with 32 GW capacity margin), going from 178 hours per year to only 2 hours per year. This owes to the design of the contract in the CRM taking into account the magnitude of uncertainties and other characteristics of the system.\\
	
	In order to better interpret producer's optimal policy, we select and analyze two of the $5000$ simulated scenarios; a severe scenario --the one with the highest number of shortage hours over the period of simulation, selected a posteriori-- and a Favorable scenario. Table \ref{table : Different scenarios outcomes} provides the outcomes of these scenarios compared with the average scenario.	
	
	\begin{table}[H]
		
		\centering
		\resizebox{0.8\linewidth}{!}{
			\begin{tabular}{lrrr}

\toprule
  & Average scenario & Favorable scenario & Severe scenario\\
\midrule
\addlinespace[-1em]
\multicolumn{4}{l}{\textbf{}}\\
\hspace{1em}Shortage hours per year [Hours] & 2.2 & 0.0 & 113.9\\
\hspace{1em}Average Spot price [euro/MWh] & 37.6 & 31.9 & 59.4\\
\hspace{1em}Average Margin [GW] & 32.4 & 36.0 & 17.3\\
\hspace{1em}Spot revenues [euro/MWh] & 37.8 & 32.1 & 59.7\\
\hspace{1em}Capacity payment [euro/MWh] & 12.3 & 57.9 & 21.7\\
\hspace{1em}Spot + Capacity payment[euro/MWh] & 50.1 & 90.0 & 81.4\\
\hspace{1em}\hspace{1em}\hspace{1em}Participation constraint [euro/MWh] & 2.8 & 2.7 & 2.7\\
\hspace{1em}\hspace{1em}\hspace{1em}Risk shared [euro/MWh] & -0.2 & 43.4 & -40.7\\
\hspace{1em}\hspace{1em}\hspace{1em}Risk compensation [euro/MWh] & 14.9 & 12.7 & 52.1\\
\hspace{1em}\hspace{1em}\hspace{1em}Total costs [euro/MWh] & 32.7 & 31.2 & 67.3\\
\hspace{1em}\hspace{1em}\hspace{1em}\hspace{1em}\hspace{1em}Construction and dismantling [euro/MWh] & 1.8 & 0.0 & 38.6\\
\hspace{1em}\hspace{1em}\hspace{1em}\hspace{1em}\hspace{1em}Maintenance [euro/MWh] & 13.2 & 13.6 & 11.1\\
\hspace{1em}\hspace{1em}\hspace{1em}\hspace{1em}\hspace{1em}Production [euro/MWh] & 17.6 & 17.6 & 17.6\\
\bottomrule
		\end{tabular}}
	\captionsetup{width=.8\linewidth}
	\caption{Comparison between different scenarios with CRM. This table regroups the results of the diffusion of $N=5000$ scenarios with the CRM and following the recommended effort to assess the extreme outcomes that might occur. The first column represents the average over all scenarios for reference, the second column provides the results for one ``favorable'' scenario with 0 shortage hours, and the third column provides the results of a ``severe'' scenario; the one with the maximal number of shortage hours in our simulations.}
	\label{table : Different scenarios outcomes}
	\end{table}
	
	We plot first the evolution of state variables in figure \ref{fig : capa et demande}. We can see in red the evolution of demand, and in black the available capacity. The demand process is by construction a mean-reverting process. However, the capacity is a geometric Brownian motion. So the capacity has a priori no reason to exhibit a mean-reverting behavior which is nevertheless observed in \ref{fig : capa et demande} on the right figure (b). This mean-reversion can be explained by the effort rate $\hat{\alpha}$ which readjusts the capacity depending on the randomness, and the level of security fixed by consumers' preferences. This readjustment can also be seen in the difference between construction costs in table \ref{table : Different scenarios outcomes}; the severe scenario having the highest cost suggesting a policy with intensive construction.   
	
	\begin{figure}[H]
		\centering
		\begin{subfigure}{.45\textwidth}
			\centering
			\includegraphics[scale=0.4]{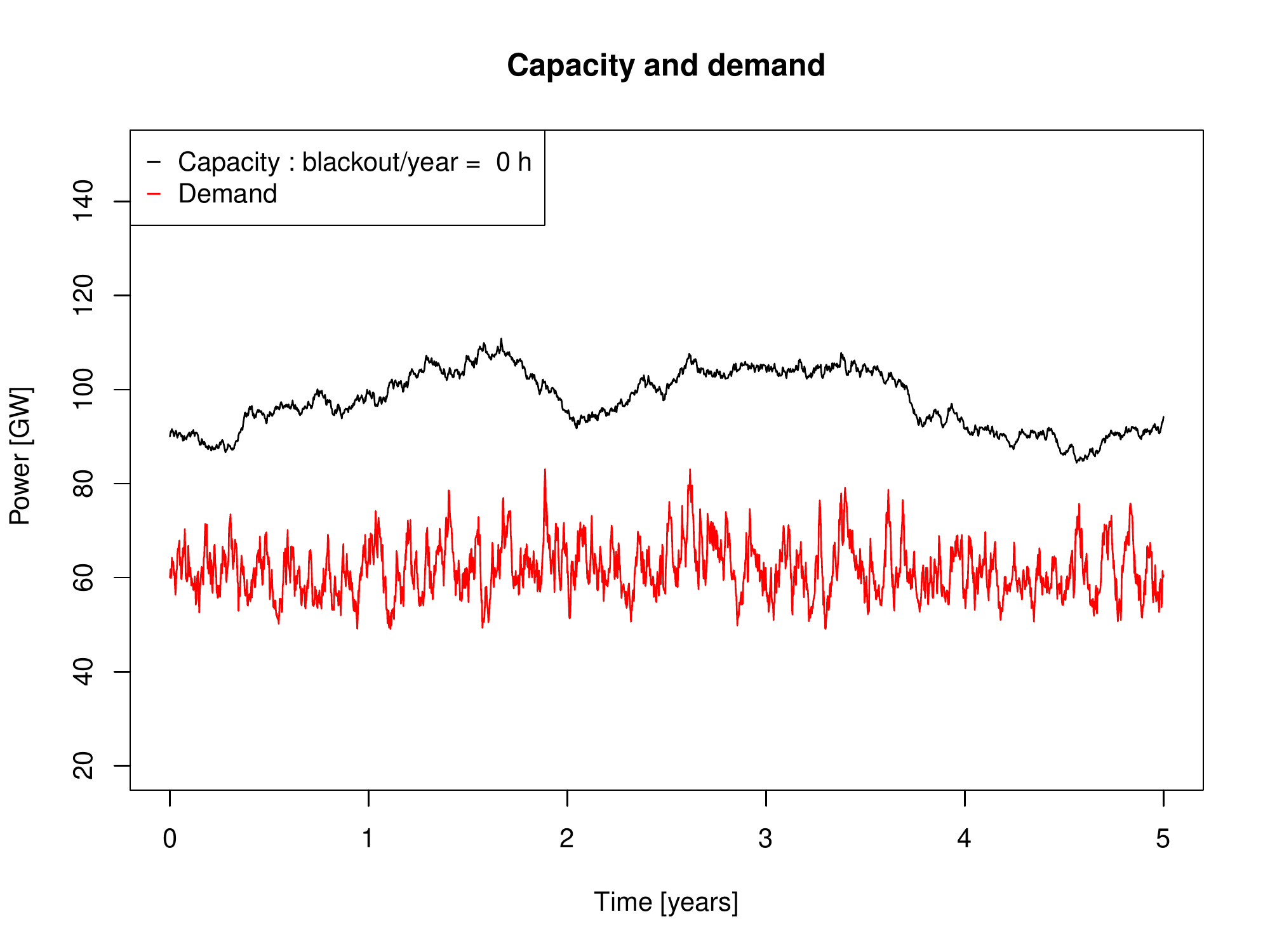}
			\caption{Favorable scenario}
		\end{subfigure}
		\begin{subfigure}{.45\textwidth}
			\centering
			\includegraphics[scale=0.4]{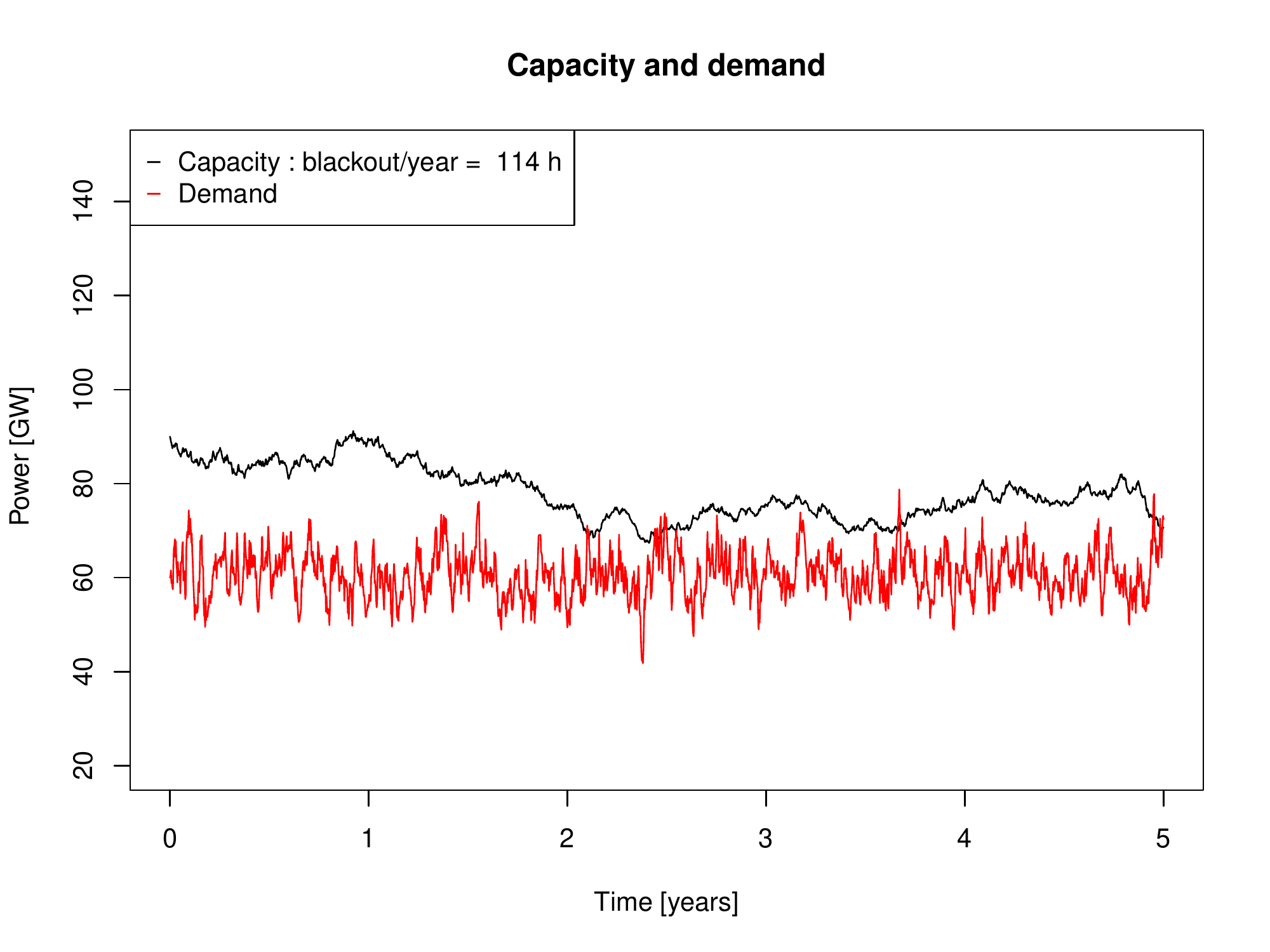}
			\caption{Severe scenario}
		\end{subfigure}
		\captionsetup{width=.8\linewidth}
		\caption{Evolution of capacity and demand under the recommended effort $\hat{\a}$. We compare the evolution of state variables for the ``favorable'' and ``severe'' scenarios. In the favorable scenario (left), the capacity (black) remains above the demand (red) with a high margin, as opposed to the severe scenario (right) where capacity is drawn down from external uncertainties and goes beyond the demand process a few times corresponding to the occurrences of shortage.}
		\label{fig : capa et demande}
	\end{figure}
	
	Figure \ref{fig:capa_non_optimal} provides an interpretation of producer's optimal control, by comparing two policies
	. In black, we see the capacity evolution obtained with producer's optimal policy, and in red the capacity without adjustment, \ie $\a_t=0$ for $t\in\edg{0,T}$. This is interpreted as a comparison between our model and a the ``No adjustment'' policy model in which producer sets an initial capacity margin (30 GW in this case) and the system is then only impacted by the uncertainties. 
	
	\begin{figure}[H]
		\centering
		\begin{subfigure}{.45\textwidth}
			\centering
			\includegraphics[scale=0.4]{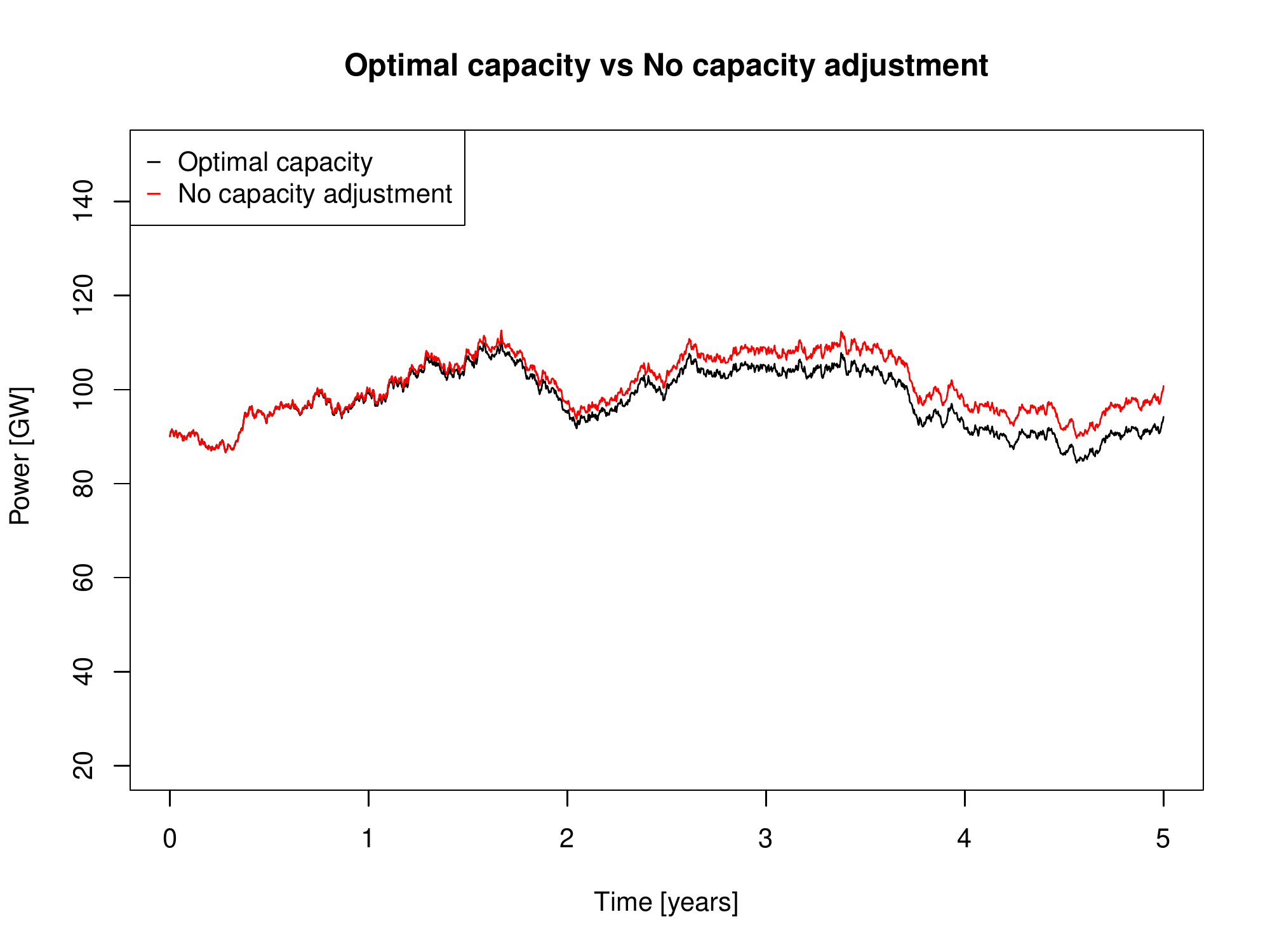}
			\caption{Favorable scenario}
		\end{subfigure}%
		\begin{subfigure}{.45\textwidth}
			\centering
			\includegraphics[scale=0.4]{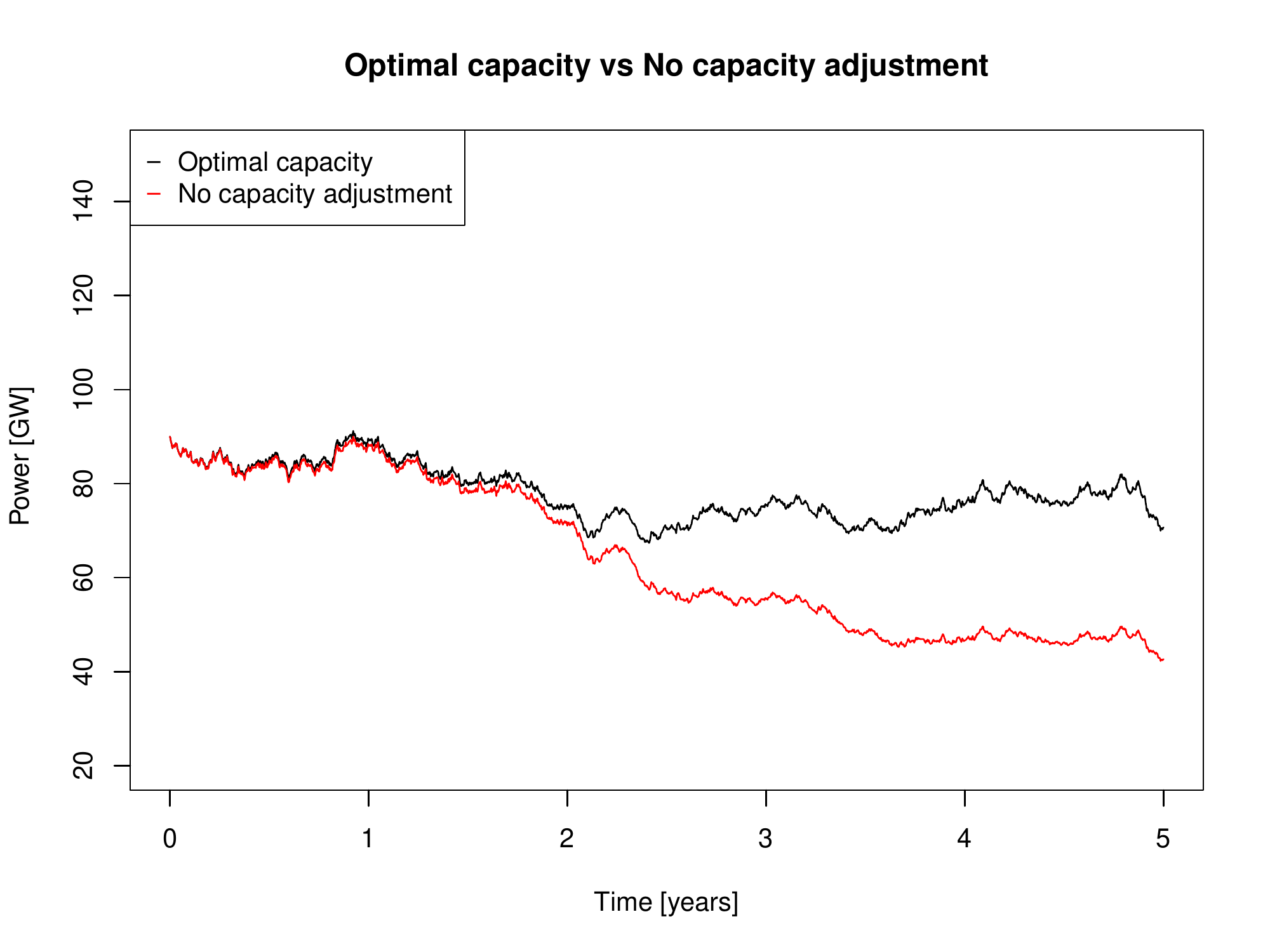}
			\caption{Severe scenario}
		\end{subfigure}
		\captionsetup{width=.8\linewidth}
		\caption{Comparison between producer's optimal policy (the recommended effort) and a ``no adjustment'' policy for two different scenarios. The black line represents the controlled capacity for the favorable (left) and severe (right) scenarios. The externalities in these scenarios are made explicit with the uncontrolled capacity processes (red), and the impact of the control can be seen from the difference between the black line and the red line. As expected, the favorable scenario has positive externalities and the control is slightly negative, while the severe scenario has negative externalities and a positive control.}
		\label{fig:capa_non_optimal}
	\end{figure}
	
	The figure shows that the Favorable scenario is a scenario where the capacities naturally experience favorable outcomes. For example, the inputs of hydro-powerplants, the load factor of wind and photovoltaic should have been very high, or no major failure of power plant should have been observed. On the contrary, the severe scenario is a scenario where capacities uncertainties are very unfavorable (strong drop of capacity after year 2, which can be seen in the ``no capacity adjustment policy''). We can see that optimal policy absorbs this shock, at least partially because of the costs of construction preventing producer from restoring a higher capacity margin.\\
	
	A heuristic observation of consumer's value function suggests that whenever the capacity margin is tight (when $x^C - x^D$ is small), there is a high risk of shortage and a low satisfaction from consumption. We can guess then that $\partial_{x^C}u\gg 0$ as a higher capacity level will make consumer's situation better. It makes sense then for consumer to recommend a positive control $\hat{\a}$ as suggested from \eqref{eq : forme approximative de alpha}, which is confirmed numerically as $\hat{\a}$ takes higher values, and pushes the capacity process up. This represents a typical situation where we can see the implicit dependence of $\hat{\alpha}$ on $x^D$ (in particular on $x^C-x^D$) as mentioned in Remark \ref{remark : summary all} (\rmi\!\! and \rmii\!\!\!). The same pattern is observed at year 2 and 2.5 in the severe scenario where the producer invests to counteract a negative shock in capacity. However, between year 3 and 4.5, as it becomes very expensive to keep a positive capacity margin, the optimal control does not follow the shock and a serie of shortages occurs. This implies that starting from some threshold, consumers are willing to accept a shortage instead of paying a very high price to avoid it.\\
	
	Finally, remark that from consumer's perspective, the total payment (capacity remuneration + spot) is higher than average in the extreme scenarios, whether favorable or severe. Indeed, when the scenario is severe, the spot prices are high, and the capacity compensation is also high because of construction costs, and so consumer has to pay a lot for both. On the other hand, when the scenario is favorable, the spot prices are low but the consumer still needs to incentivize the producer with a high capacity compensation and share the positive risks (the ``reward for good luck'').

	\subsection{Analysis of the optimal contract}
	\subsubsection{Decomposition of capacity payment}
	
	To understand how the contract is designed, we use the decomposition \eqref{eq : decomposition en lettres} suggested in section \ref{ section : contract decomposition}, recalled below: 	
	\begin{align}
	\mbox{Capacity remuneration ($\xi^{\star}$)}+\mbox{Spot compensation}&=\Rc+\mbox{{Producer's costs}}+\mbox{{Risk shared}}+\mbox{Risk compensation}.\tag{\ref{eq : decomposition en lettres}}
	\end{align}

	This decomposition is represented for the favorable and severe scenarios in figures \ref{fig : best_xi_rand_Dec_xi_costs} and \ref{fig : violent_Dec_xi_costs}. In each of these figures, the first bar represents the total compensation (the left hand side terms of equality \eqref{eq : decomposition en lettres}), and the second bar corresponds to the right hand side. Remark that in the favorable scenario all the components are positive, and add up to the total compensation, while in the severe scenario, there is a negative component which is the risk shared. Nevertheless in this case also the algebraic sum of the components is equal to the total compensation.
	
	\begin{figure}[H]
		\centering
		\begin{subfigure}{.5\textwidth}
			\centering
			\includegraphics[scale=0.4]{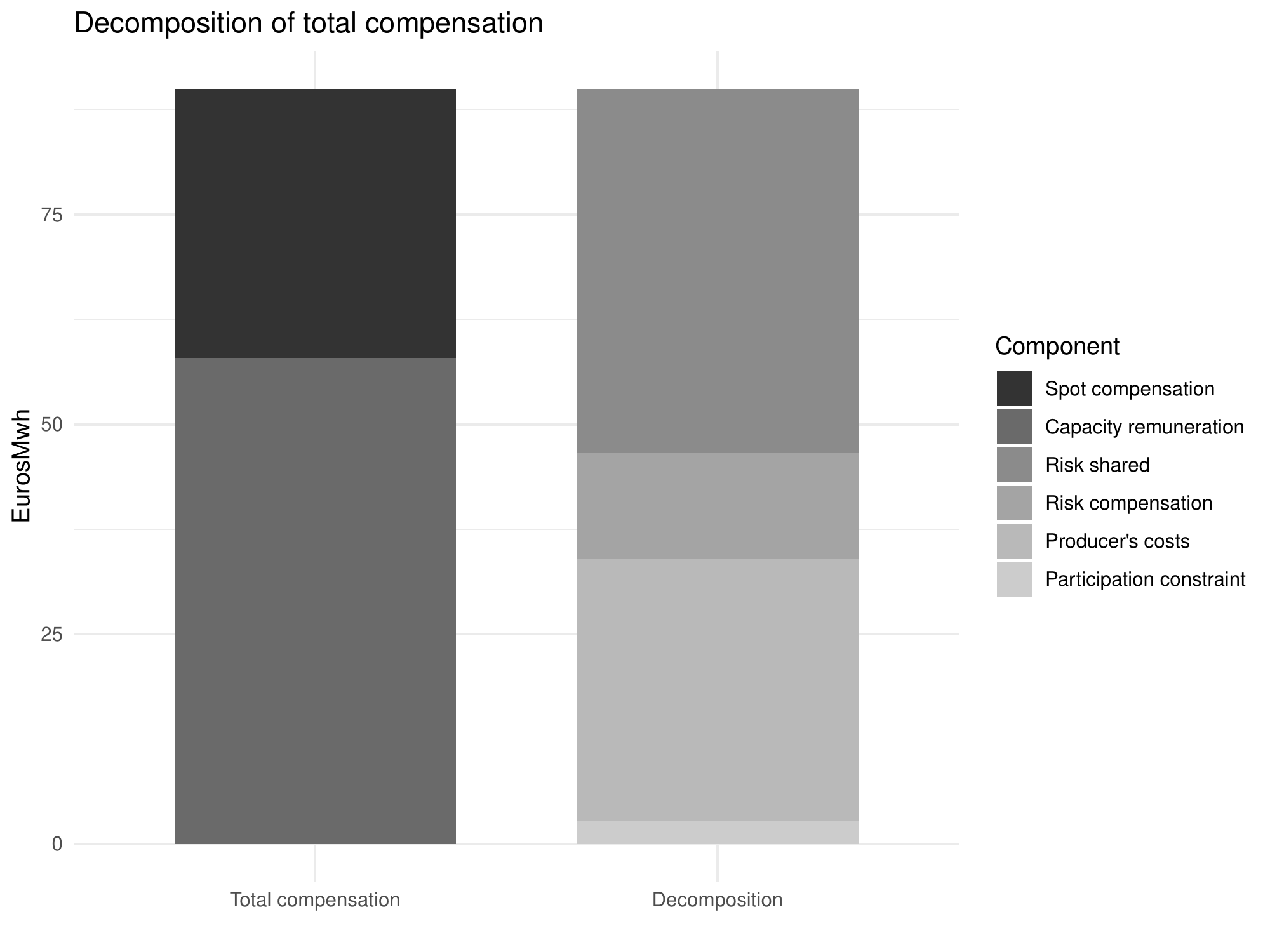}
			\captionsetup{width=.8\linewidth}
			\caption{Favorable scenario. The left bar aggregates the total compensation : (spot revenues + capacity remuneration), and is equal to the right bar (participation constraint + producer's costs + risk shared + risk compensation) where we can see that all these terms are positive for the favorable scenario.}
			\label{fig : best_xi_rand_Dec_xi_costs}
		\end{subfigure}%
		\begin{subfigure}{.5\textwidth}
			\centering
			\includegraphics[scale=0.4]{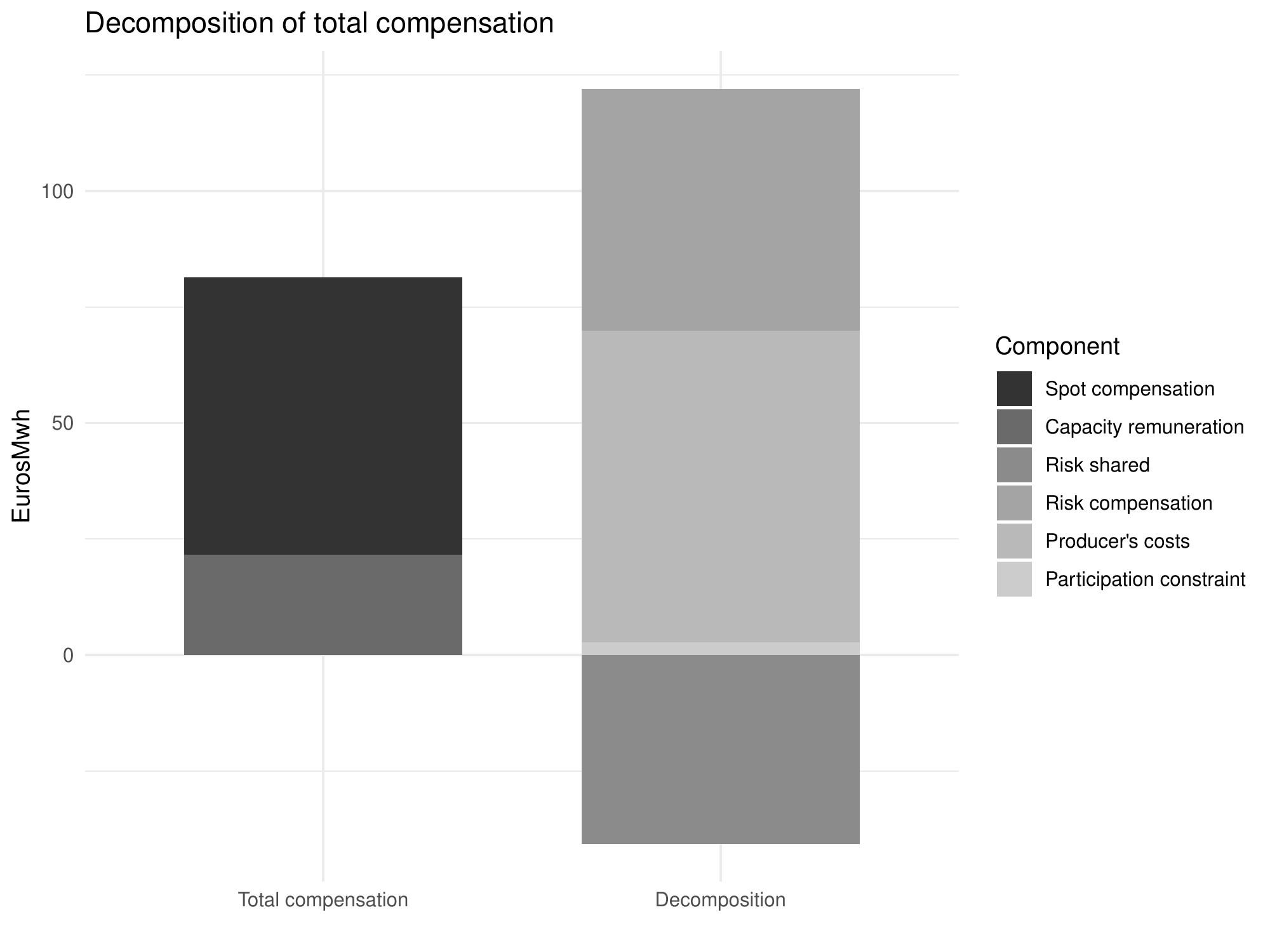}
			\captionsetup{width=.8\linewidth}
			\caption{Severe scenario. The left bar is the sum (spot revenues + capacity remuneration). The right bar has a part above zero representing the sum (participation constraint + producer's costs+risk compensation), and a negative ``risk shared'' component because of the unfavorable uncertainties.} 			
			\label{fig : violent_Dec_xi_costs}
		\end{subfigure}
		\captionsetup{width=.8\linewidth}
		\caption{Decomposition of $\xi^{\star}$ for the favorable and severe scenarios highlighting the equality \mbox{(spot + capacity) = (participation constraint + producer's costs + risk shared + risk compensation).}}
		\label{fig:decomposition contract costs}
	\end{figure}
	
%
	In the severe scenario --figure \ref{fig : violent_Dec_xi_costs}-- the realized randomness is very unprofitable to the system. Our contract automatically shares this negative randomness with the producer (negative ``risk shared''). On the contrary, in the favorable scenario --figure \ref{fig : best_xi_rand_Dec_xi_costs}-- where capacity outcomes are naturally high and profitable, this positive randomness is also shared with the producer but positively.\\
	
	In addition, the contract accounts for the costs needed to implement the optimal policy, the remuneration from spot and the risk compensation. Remark that, as expected, the risk compensation is positive in both cases, even when the shared randomness is positive.  This helps to offset the impact of the risk shared, for example in the severe scenario the negative risk shared (-40 euros/MWh) is completely canceled by the risk compensation (52 euros/MWh), see Table \ref{table : Different scenarios outcomes}.\\

	In the severe scenario the compensation for costs (the difference between blue and green) is quite high (this is mainly linked to the high costs needed to follow the optimal policy, consisting in investing a lot), whereas under the favorable scenario, this part is much limited (investment to be made are small). However, the remuneration obtained from the spot market in the favorable scenario (with a high capacity margin) is much less than in the severe scenario (with a low capacity margin). This leads to a low capacity payment under the severe scenario compared to the favorable scenario.
	
	\subsubsection{Link between capacity payment and spot compensation}\label{subsection : Link between capacity payment and spot compensation}
	 Table \ref{table : Comparison between policies ratios} provides another comparison between the system without a CRM, with a CRM and with no adjustment, but this time with regard to the occurrences of missing money (when total costs are more than spot revenues) and the scenarios with negative net revenues for producer, along with the role played by spot in total compensation in the case of a CRM, and the percentage of scenarios with a negative capacity remuneration.

	\begin{table}[H]
		
		\centering
		\resizebox{0.8\linewidth}{!}{
			\begin{tabular}{lrrr}
				\toprule
				& Without CRM & With a CRM & No adjustment\\
\midrule
Spot revenues/Total revenues [\%] & 100 & 70 & 100\\
Scenarios with missing money [\%] & 2 & 28 & 26\\
Scenarios with negative capacity remuneration [\%] & NA & 18 & NA\\
Scenarios with negative total compensation [\%] & NA & 0 & NA\\
Scenarios with negative net revenues [\%] & 2 & 5 & 26\\
\bottomrule
		\end{tabular}}
	\captionsetup{width=.8\linewidth}
	\caption{Comparison between policies outcomes for producer with and without CRM. Each column of this table provides the average ratio (Spot revenues/Total revenues) on a sample of $N=5000$ simulated scenarios, together with the percentages of scenarios with missing money (total costs>spot revenues), the scenarios with (capacity remuneration<0) and the scenarios with (spot revenue+capacity remuneration)<0, and those with negative net revenues (total compensation - total costs)<0. The first column represents the scenarios without a CRM, where producer controls the capacity and his only income is the spot revenue. The second column represents our CRM, with scenarios following the recommended effort and a capacity payment. The third column provides the results for scenarios generated without any control on capacity and without capacity payment.}
	\label{table : Comparison between policies ratios}
	\end{table}
	
	 Regarding the system with a CRM, we obtain an average number of shortage hours per year less than three hours, which satisfies the LoLE constraint. The revenue provided by the capacity payment is about $30\%$ of the total compensation (compared to $70\%$ for spot market).\\

	In our simulations $\E^{\P^{\a^{\star}}}\edg{\frac{S_T+\xi^{\star}}{S_T}}\in\edg{1.42,1.43}$ with a $95\%$ confidence level and a standard deviation of $0.53$. This corresponds roughly to a partition of total revenues into $70\%$ from spot market and $30\%$ from capacity compensation. However, as we can see in figure \ref{fig : rat_tot_sur_spot}, the distribution of $\frac{S_T+\xi^{\star}}{S_T}$ can take values less than 1  (even negative theoretically, but not observed in practice (cf. Table \ref{table : Comparison between policies ratios})) when $\xi^{\star}$ is negative. Whenever they occur, the negative capacity prices could be interpreted as a reimbursement from producers when their revenues from the spot market are high, similar to reliability options used in Italy and Ireland, and which by definition require such a money transfer when the spot price exceeds a certain level \cite{bhagwat2019reliability}.\\
	
	\begin{figure}[H]
		\centering
		\includegraphics[width=.4\textwidth]{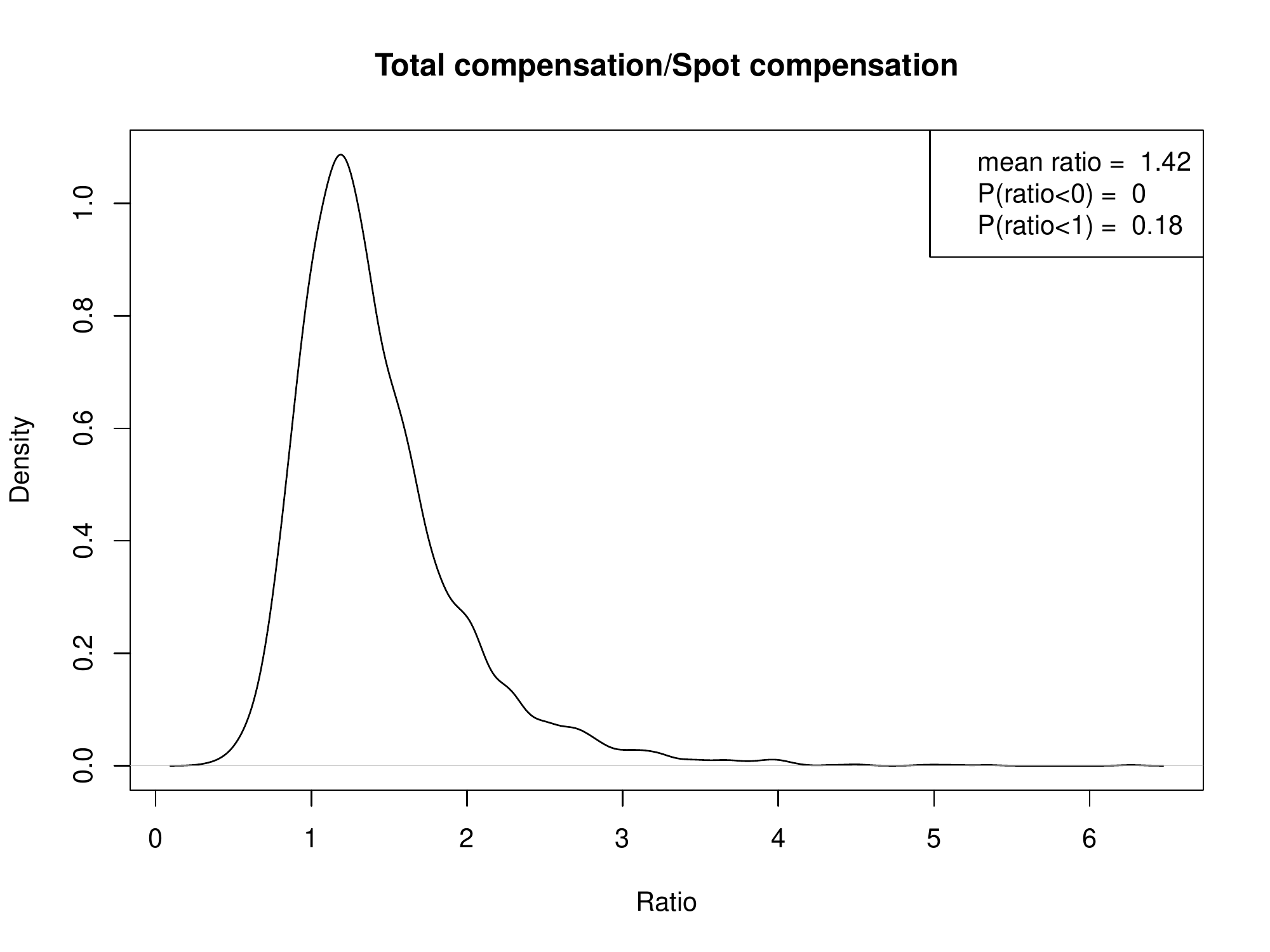}
		\captionsetup{width=.8\linewidth}
		\caption{Distribution of $\frac{S_T+\xi^{\star}}{S_T}$ with a CRM. The ratio (total compensation/spot revenues) is observed on $N=5000$ scenarios, with an average of $1.42$, and a positive probability ($0.18$) of being lower than $1$, \ie of having (total compensation < spot revenues).}
		\label{fig : rat_tot_sur_spot}
	\end{figure}
	
	We investigate more this ratio in figure \ref{fig:decomposition contract spot} by decomposing the total compensation into spot revenues and capacity remuneration for the favorable and severe scenarios, together with another extreme scenario.
	
	\begin{figure}[H]
		\centering
		\begin{subfigure}{.5\textwidth}
			\centering
			\includegraphics[scale=0.4]{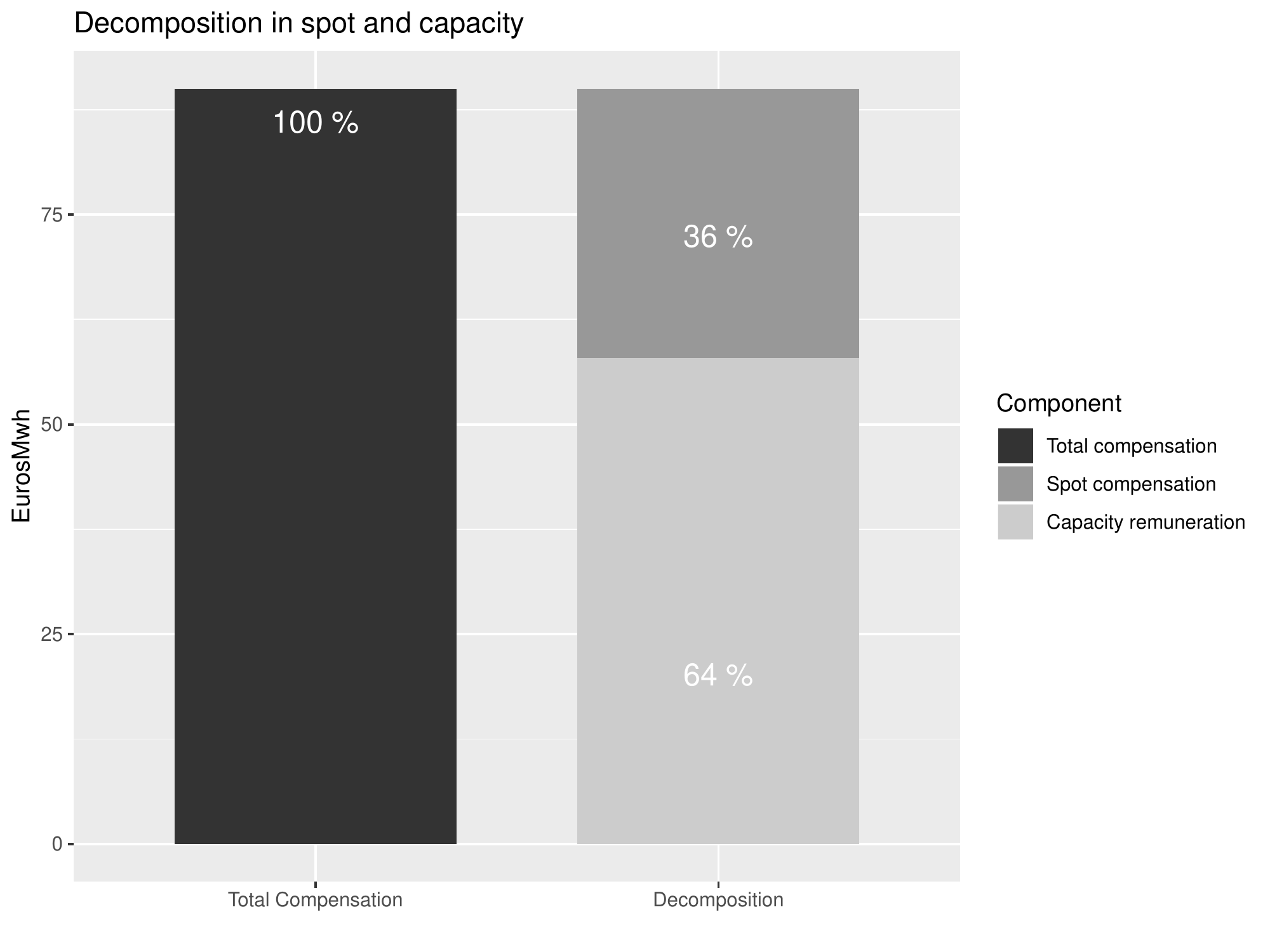}
			\captionsetup{width=.7\linewidth}
			\caption{Favorable scenario. This scenario corresponds to high capacity levels, with low spot prices, so the capacity remuneration revenues have the biggest share in the total compensation.}
			\label{fig:decomposition contract spot random}
		\end{subfigure}%
		\begin{subfigure}{.5\textwidth}
			\centering
			\includegraphics[scale=0.4]{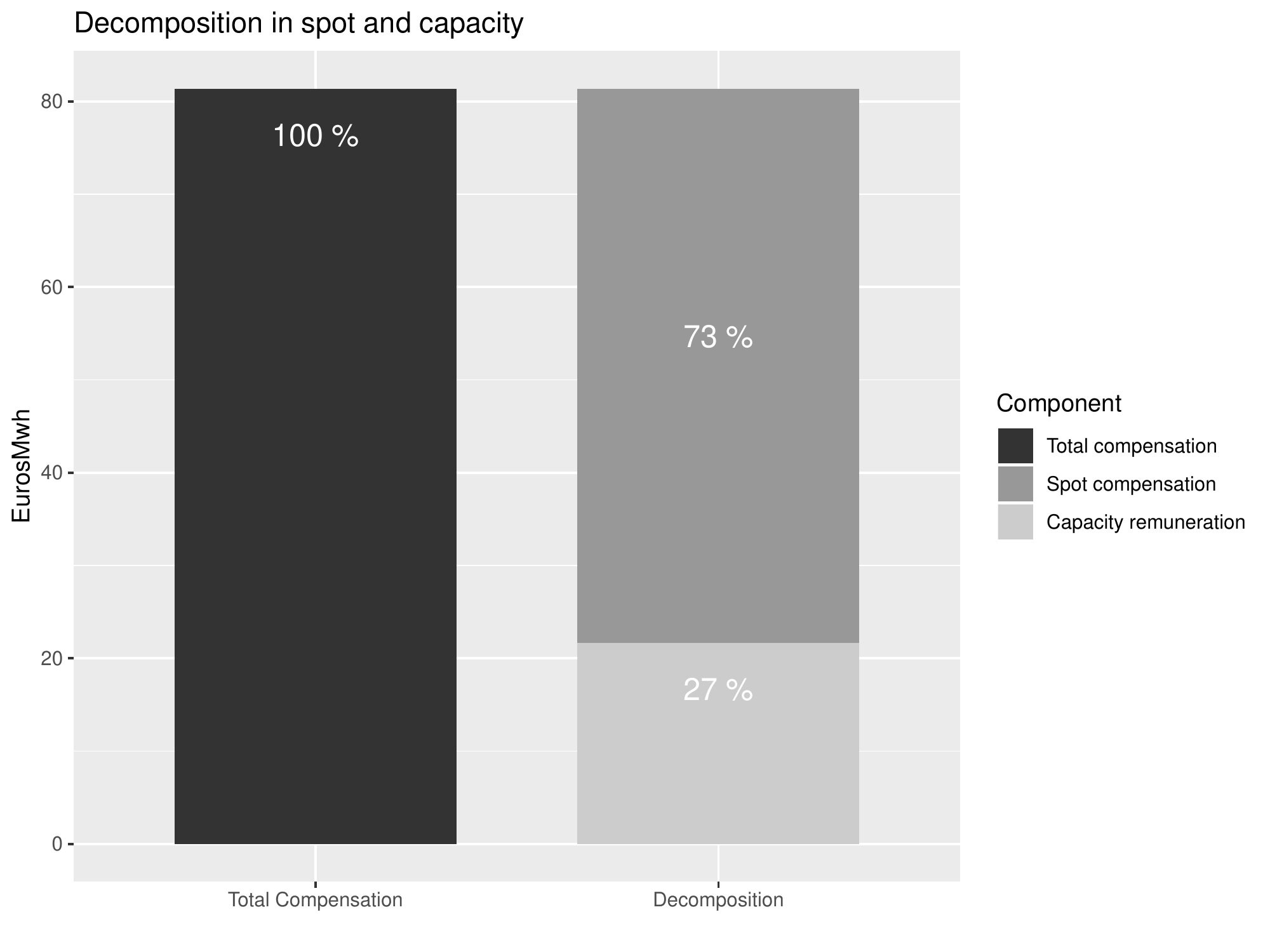} 	
			\captionsetup{width=.7\linewidth}
			\caption{Severe scenario. This scenario is characterized by low capacities, and tight generation margins, which increase the spot revenues, whereas the capacity remuneration is low because of the (negative) risk shared.}
			\label{fig:decomposition contract spot random severe}
		\end{subfigure}
		\begin{subfigure}{.5\textwidth}
			\centering
			\includegraphics[scale=0.4]{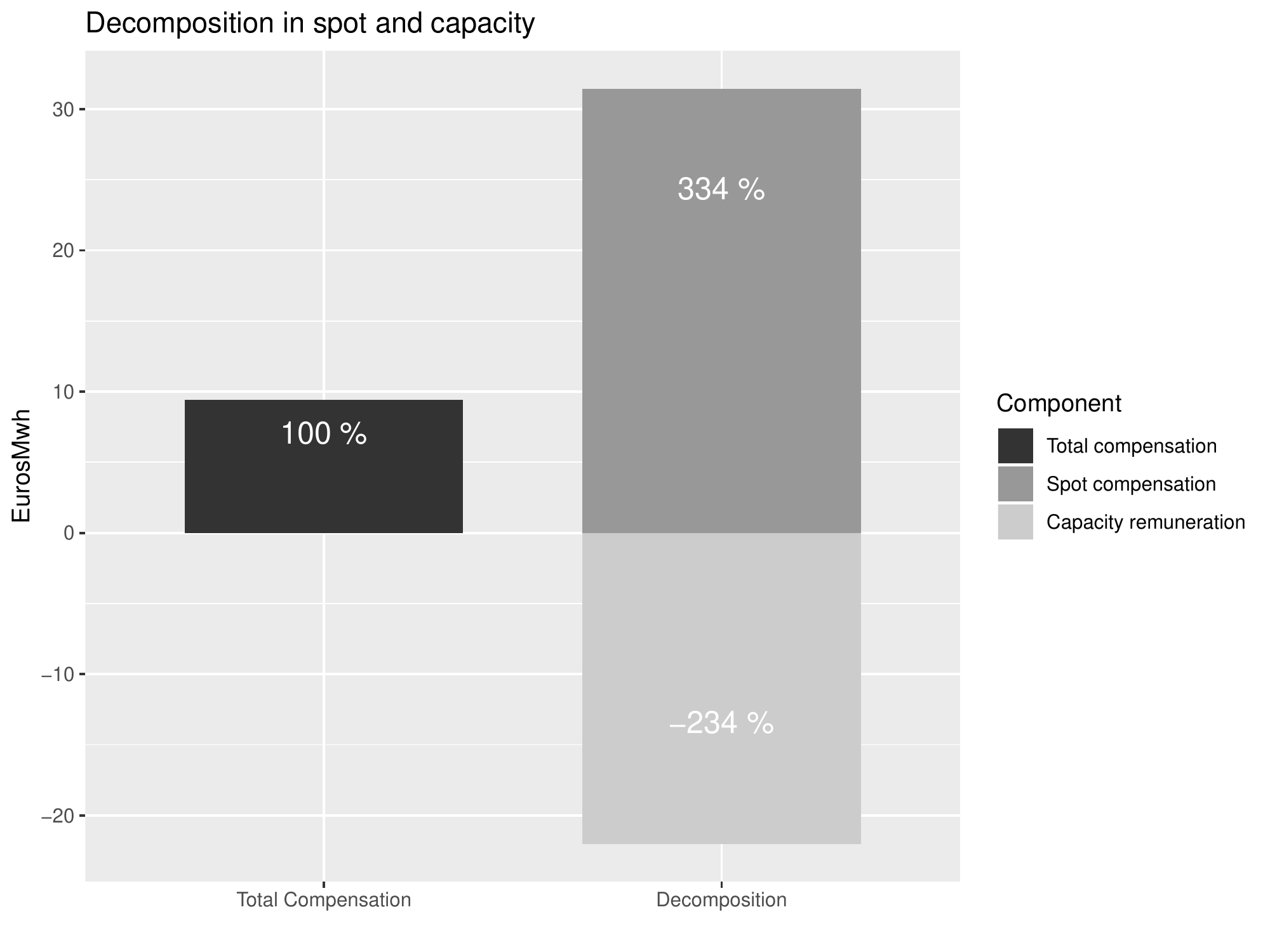}
			\captionsetup{width=.8\linewidth}
			\caption{Scenario with the lowest $\frac{S_T+\xi^{\star}}{S_T}$ ratio. This particular extreme scenario shows the ``punishement for bad luck'' where the capacity remuneration is negative (because of the risk shared term), which considerably decreases the total compensation compared to the spot compensation.}
			\label{fig:decomposition contract spot extreme highest}
		\end{subfigure}
		\captionsetup{width=.8\linewidth}
		\caption{Decomposition of $\xi^{\star}$ in terms of spot and capacity payment for different scenarios. In the three figures, the left bar represents the total compensation, and the right bar represents the decomposition total compensation = (spot compensation + capacity remuneration).}
		\label{fig:decomposition contract spot}
	\end{figure}

	In the favorable scenario previously analyzed, figure \ref{fig:decomposition contract spot random} shows that the capacity payment is going to be more active than in the severe scenario \ref{fig:decomposition contract spot random severe} to complement spot remuneration. For the favorable scenario, the capacity payment represents 64\% of the total remuneration of the producers, compared to 27\% in the severe scenario. This is explained by the capacity margin, which dictates the level of spot prices and the need for a complementary compensation. Figure \ref{fig:decomposition contract spot extreme highest} illustrates a scenario where the capacity margin is very little during all the period leading to high spot prices, accompanied by a negative capacity remuneration.\\ 

     Observe finally that the probability of getting negative net revenues drops from 26\% in the case of no adjustment (very uncertain for producer) to 5\% in the case with a CRM. It remains howerver slightly more than in the case without a CRM (2\%) which is natural since in this scenario the producer's only goal is to maximize his utility.
	
	\subsubsection{When is there missing money or a negative capacity payment}
	
	We investigate in this section ``unfavorable'' scenarios. Over the 5000 scenarios simulated earlier with a CRM, we select the ones where there is a missing money (MM) (where the spot remuneration is less than producer's total costs (28\% of scenarios)), and the scenarios where the capacity remuneration is negative (NCR); (18\% of scenarios), and the ones where there is a missing money and the capacity remuneration is negative (MM and NCR); (1\% of scenarios). 
	
	\begin{table}[H]
		\centering
		\resizebox{0.4\linewidth}{!}{
		\begin{tabular}{lrrr}
			\toprule
			& MM & NCR & MM and NCR\\
			\midrule
			Percentage [\%] & 28 & 18 & 1\\
			\bottomrule
		\end{tabular}}
	\captionsetup{width=.8\linewidth}
	\caption{Missing money and negative capacity payment with a CRM. This table focuses on our model for CRM, and presents the percentages of ``pathological'' scenarios over $N=5000$; the scenarios with missing money (spot compensation < total costs), those with a negative capacity remuneration, and those with both missing money and negative capacity remuneration.}
	\label{table : Missing money and negative capacity payment}
	\end{table}
	
	We compute the same indicators as before for two groups of the selected scenarios (the ones with missing money ``MM'' and the ones with a negative capacity remuneration ``NCR''), and we discard the third group (with both ``MM and NCR'') since it concerns only 1\% of the scenarios (average quantities are indeed meaningless in that case as the number of scenarios are very low). We summarize the results in table \ref{table : Singular policies}. Note that these indicators are computed with a Monte-Carlo method using different sizes of samples, and so do not have the same accuracy because of the different confidence intervals.	
	\begin{table}[H]
		\centering
		\resizebox{0.8\linewidth}{!}{
			\begin{tabular}{lrrrrrrrr}
				\toprule
\multicolumn{1}{c}{ } & \multicolumn{2}{c}{Reference} & \multicolumn{2}{c}{MM} & \multicolumn{2}{c}{NCR} \\
\cmidrule(l{3pt}r{3pt}){2-3} \cmidrule(l{3pt}r{3pt}){4-5} \cmidrule(l{3pt}r{3pt}){6-7}
  & mean & sd & mean & sd & mean & sd\\
\midrule
\addlinespace[-1em]
\multicolumn{7}{l}{\textbf{}}\\
\hspace{1em}Shortage hours per year [Hours] & 2.2 & 7.4 & 0.4 & 5.1 & 3.5 & 8.7\\
\hspace{1em}Average Spot price [euro/MWh] & 37.6 & 9.6 & 25.8 & 4.9 & 44.6 & 6.5\\
\hspace{1em}Average Margin [GW] & 32.4 & 9.0 & 43.9 & 6.9 & 26.2 & 4.7\\
\hspace{1em}Spot revenues [euro/MWh] & 37.8 & 9.6 & 26.0 & 4.9 & 44.8 & 6.5\\
\hspace{1em}Capacity payment [euro/MWh] & 12.3 & 13.2 & 22.7 & 11.6 & -6.4 & 5.3\\
\hspace{1em}Spot + Capacity payment[euro/MWh] & 50.1 & 10.9 & 48.7 & 10.4 & 38.4 & 7.7\\
\hspace{1em}\hspace{1em}\hspace{1em}Participation constraint [euro/MWh] & 2.8 & 0.0 & 2.8 & 0.0 & 2.8 & 0.0\\
\hspace{1em}\hspace{1em}\hspace{1em}Risk shared [euro/MWh] & -0.2 & 11.7 & 0.1 & 11.2 & -14.6 & 8.2\\
\hspace{1em}\hspace{1em}\hspace{1em}Risk compensation [euro/MWh] & 14.9 & 3.9 & 13.1 & 2.8 & 16.7 & 4.7\\
\hspace{1em}\hspace{1em}\hspace{1em}Total costs [euro/MWh] & 32.7 & 3.2 & 32.8 & 2.7 & 33.5 & 4.1\\
\hspace{1em}\hspace{1em}\hspace{1em}\hspace{1em}\hspace{1em}Construction and dismantling [euro/MWh] & 1.8 & 3.7 & 0.3 & 2.8 & 3.5 & 4.5\\
\hspace{1em}\hspace{1em}\hspace{1em}\hspace{1em}\hspace{1em}Maintenance [euro/MWh] & 13.2 & 1.3 & 14.9 & 1.0 & 12.4 & 0.7\\
\hspace{1em}\hspace{1em}\hspace{1em}\hspace{1em}\hspace{1em}Production [euro/MWh] & 17.6 & 0.0 & 17.6 & 0.0 & 17.6 & 0.0\\
\bottomrule
		\end{tabular}}
	\captionsetup{width=.8\linewidth}
	\caption{Missing money and negative capacity remuneration with a CRM. This table regroups the average and sd of the relevant quantities for a simulation of $N=5000$ scenarios with the recommended policy. The first two columns provide the averages and sds over all the scenarios for reference. The second two columns provide the average and sd of the same quantites but only on the scenarios with missing money (spot revenues<total costs), and the last two columns provide the averages and sds over the scenarios where the capacity payment is negative.}
	\label{table : Singular policies}
	\end{table}

	The major effect which explains missing money and negative capacity remuneration is the margin of the system. Missing money often comes with high margin while a negative capacity remuneration happens mostly with a low margin.\\
	
	A close look at table \ref{table : Singular policies} shows that there is missing money whenever the average margin is high (43 GW) in average, and so spot revenues are low (26 euros/MWh), and are not enough to cover total costs especially since maintenance costs are higher than average (14.9 euros/MWh compared to 13.2 euros/MWh) which overrules the fact that construction costs are close to zero. The CRM completes producer's earnings on spot market since it compensates him systematically for his costs (whether high or low), and this can be seen in table \ref{table : Singular policies} by the fact that average total compensation is above costs.\\
	
    The scenarios with negative capacity remuneration are more subtle to understand. In fact, they occur because our CRM is designed to take into account the spot compensation and complement it only when needed. In other words, if we recall again the decomposition of total compensation \eqref{eq : decomposition en lettres}: 	
    \begin{align}
    \mbox{Capacity remuneration}+\mbox{Spot compensation}&=\Rc+\mbox{{Producer's costs}}+\mbox{{Risk shared}}+\mbox{Risk compensation},\tag{\ref{eq : decomposition en lettres}}
    \end{align}
    we can see that negative capacity remuneration is equivalent to 
    \begin{align}\label{eq : inequality negative capacity}
    \mbox{Spot compensation}&>\Rc+\mbox{{Producer's costs}}+\mbox{{Risk shared}}+\mbox{Risk compensation}.
    \end{align}
    
    This happens whenever the risk shared is negative (as confirmed by table \ref{table : Singular policies} : -14.6 euros/MWh), suggesting unfavorable uncertainties and low capacity margins (26 GW compared to 32 GW in average), which increases spot revenues (44.8 euros/MWh) and consolidates the inequality \eqref{eq : inequality negative capacity}. A typical scenario with a negative capacity remuneration would be one with a consistent low demand and high capacity without much realized volatility. This would keep the risk compensation low, with negative risk shared and high spot compensation. Nevertheless, observe that even in such conditions (unfavorable uncertainties), producer manages to keep an acceptable number of shortage hours per year (3.5 hours per year).
	
	\subsection{The optimal capacity payment in other setups}\label{section : sensitivities}
	We test our system under different conditions, in the case where there are more renewable energies; and thus more uncertainties in the system (associated with higher $\sigma^C$ ($50\%$ higher)), or when there is a demand response program applied on consumers, \ie assuming demand volatility $\sigma^D$ is lower ($50\%$ lower).\\
	We also test different risk aversions; with more risk averse consumers or more risk averse producers (multiplying by 10 the risk aversion parameter each time).
		
	\subsubsection{More renewable energies or a demand response program ($\sigma^C$ and $\sigma^D$)}
	
	A brief summary of the numerical results with perturbations of volatilites can be found in table \ref{table : ratios sensitivities}. 
	
	\begin{table}[H]
		\centering
		\resizebox{0.8\linewidth}{!}{
			\begin{tabular}{lrrr}
\toprule
  & Reference & More renewables & Demand response program\\
\midrule
Spot revenues/Total revenues [\%] & 70 & 41 & 98\\
Scenarios with missing money [\%] & 28 & 74 & 25\\
Scenarios with negative capacity remuneration [\%] & 18 & 0 & 60\\
Scenarios with negative total compensation [\%] & 0 & 0 & 0\\
Scenarios with negative net revenues [\%] & 5 & 0 & 13\\
\bottomrule
		\end{tabular}}
	\captionsetup{width=.8\linewidth}
	\caption{Impacts of a shock in volatility on the CRM. This table presents the variation of total compensation composition and the percentages of scenarios with missing money, negative capacity remuneration, negative total compensation and negative net revenues. The first column recalls the reference results with our calibrated parameters. The second column provides the results when the volatility of capacity is $50\%$ higher, interpreted as an increase of the proportion of renewable energies (which are more variable). The third column provides the results when the volatility of demand is $50\%$ lower, interpreted as the introduction of some demand-response program driving consumers to have less variable demand.}
	\label{table : ratios sensitivities}
	\end{table}
	
	We can see that introducing more renewable energies in the system yields a higher percentage of scenarios with missing money (74\% as opposed to 28\%), and to positive net revenues all the time. In this context, the capacity compensation is never negative (in the observed scenarios), and plays a much more important role in complementing producer revenues (59\% of total revenues instead of only 30\%). This is similar to what we observed in section \ref{subsection : Link between capacity payment and spot compensation}, when comparing a severe and favorable scenario.\\
	
	By looking further into table \ref{table : Different volatility}, we can explain the missing money by the high costs of construction and dismantling (7.6 euros/MWh compared to 1.8 euros/MWh), and the role of capacity payment by the considerable amount of risk compensation (34.5 euros/MWh compared to 14.9 euros/MWh).\\
	
	So in summary the more uncertainties in production, the more the need for a capacity mechanism in order to cope with the random electricity demand and the random available capacity, since otherwise the number of shortage hours would increase. The capacity mechanism results in increasing the capacity margin of the system. As the system is longer in terms of capacity, spot prices are lower. Therefore, in average, consumers have to pay a higher capacity remuneration when the capacity volatility increases, and producers receive a higher total compensation, and higher earnings even though the spot compensation decreases.
	
	\begin{table}[H]
		
		\centering
		\resizebox{0.8\linewidth}{!}{
			\begin{tabular}{lrrr}
				\toprule
				& Reference & More renewables & Demand response program\\
				
\midrule
\addlinespace[-1em]
\multicolumn{4}{l}{\textbf{}}\\
\hspace{1em}Shortage hours per year [Hours] & 2.2 & 2.8 & 0.7\\
\hspace{1em}Average Spot price [euro/MWh] & 37.6 & 32.5 & 40.9\\
\hspace{1em}Average Margin [GW] & 32.4 & 39.1 & 29.9\\
\hspace{1em}Spot revenues [euro/MWh] & 37.8 & 32.6 & 40.7\\
\hspace{1em}Capacity payment [euro/MWh] & 12.3 & 41.0 & -3.2\\
\hspace{1em}Spot + Capacity payment[euro/MWh] & 50.1 & 73.6 & 37.4\\
\hspace{1em}\hspace{1em}\hspace{1em}Participation constraint [euro/MWh] & 2.8 & 0.0 & 2.2\\
\hspace{1em}\hspace{1em}\hspace{1em}Risk shared [euro/MWh] & -0.2 & -0.4 & -0.1\\
\hspace{1em}\hspace{1em}\hspace{1em}Risk compensation [euro/MWh] & 14.9 & 34.5 & 4.0\\
\hspace{1em}\hspace{1em}\hspace{1em}Total costs [euro/MWh] & 32.7 & 39.4 & 31.3\\
\hspace{1em}\hspace{1em}\hspace{1em}\hspace{1em}\hspace{1em}Construction and dismantling [euro/MWh] & 1.8 & 7.6 & 0.8\\
\hspace{1em}\hspace{1em}\hspace{1em}\hspace{1em}\hspace{1em}Maintenance [euro/MWh] & 13.2 & 14.2 & 12.9\\
\hspace{1em}\hspace{1em}\hspace{1em}\hspace{1em}\hspace{1em}Production [euro/MWh] & 17.6 & 17.6 & 17.6\\
\bottomrule
		\end{tabular}}
	\captionsetup{width=.8\linewidth}
	\caption{Impacts of a shock in volatility on the number of shortage hours, and the costs and revenues of the CRM. The first column recalls the reference results with our calibrated parameters. The second column provides the results when the volatility of capacity is $50\%$ higher, interpreted as an increase of the proportion of renewable energies (which are more variable). The third column provides the results when the volatility of demand is $50\%$ lower, interpreted as the introduction of some demand-response incentive driving consumers to have less variable demand.}
	\label{table : Different volatility}
	\end{table}
	
The third column of tables \ref{table : ratios sensitivities} and \ref{table : Different volatility} summarizes numerical results when the demand volatility is lower, which is a model for a demand response program; \ie we assume that consumer is somehow incentivized to behave in a more predictable manner, so that the uncertainty on demand fluctuations ($\sigma^D$) is lower.\\
	
In this case, the security of the system can be ensured with a lower capacity margin (29.9 GW instead of 32.4 GW) since there are less uncertainties. Therefore the average spot price and the spot revenues are higher (40.9 euros/MWh instead of 37.6 euros/MWh), so we are likely to obtain the inequality \eqref{eq : inequality negative capacity} which we recall 
\begin{align}
\mbox{Spot compensation}&>\Rc+\mbox{{Producer's costs}}+\mbox{{Risk shared}}+\mbox{Risk compensation},\tag{\ref{eq : inequality negative capacity}}
\end{align} 
which is equivalent to having a negative capacity remuneration especially since none of the terms on the right hand side should be high. This explains the high percentage of scenarios with negative capacity remuneration (60\%).\\
    
These conditions suggest an auto-regulated market and less need for a CRM, since spot revenues represent $98\%$ of total compensation, even though there is missing money $25\%$ percent of the time, and often a negative capacity remuneration (60\% of the time). We can guess that taking away the capacity remuneration would lead to an equilibrium situation with more uncertainties for producer: he incurs losses when there is a missing money, but these losses are balanced by the scenarios where the spot price is high, and he does not have to ``pay'' for capacity remuneration which is now a cost for him rather than a revenue, making his net revenues negative in 13\% of the scenarios. Nevertheless, the total remuneration of producer (spot + capacity payment) is still above in average the total costs: in average with the CRM, the producer is going to earn money.

	\subsubsection{More risk aversion producer or consumer ($\etaa$ and $\etap$)}
	We analyze the impacts of the CRM when producer or consumer is more risk averse. The numerical results are summarized in tables \ref{table : More risk aversions} and \ref{table : More risk aversions ratios}.
	
	\begin{table}[H]
		\centering
		\resizebox{0.8\linewidth}{!}{
			\begin{tabular}{lrrr}
				\toprule
  & Reference & Risk averse producer & Risk averse consumer\\
\midrule
\addlinespace[-1em]
\multicolumn{4}{l}{\textbf{}}\\
\hspace{1em}Shortage hours per year [Hours] & 2.2 & 0.8 & 0.5\\
\hspace{1em}Average Spot price [euro/MWh] & 37.6 & 32.8 & 34.4\\
\hspace{1em}Average Margin [GW] & 32.4 & 36.6 & 34.9\\
\hspace{1em}Spot revenues [euro/MWh] & 37.8 & 33.0 & 34.6\\
\hspace{1em}Capacity payment [euro/MWh] & 12.3 & 51.2 & 380.6\\
\hspace{1em}Spot + Capacity payment[euro/MWh] & 50.1 & 84.2 & 415.3\\
\hspace{1em}\hspace{1em}\hspace{1em}Participation constraint [euro/MWh] & 2.8 & 0.0 & 2.7\\
\hspace{1em}\hspace{1em}\hspace{1em}Risk shared [euro/MWh] & -0.2 & -0.1 & -0.3\\
\hspace{1em}\hspace{1em}\hspace{1em}Risk compensation [euro/MWh] & 14.9 & 49.7 & 378.7\\
\hspace{1em}\hspace{1em}\hspace{1em}Total costs [euro/MWh] & 32.7 & 34.6 & 34.1\\
\hspace{1em}\hspace{1em}\hspace{1em}\hspace{1em}\hspace{1em}Construction and dismantling [euro/MWh] & 1.8 & 3.2 & 2.9\\
\hspace{1em}\hspace{1em}\hspace{1em}\hspace{1em}\hspace{1em}Maintenance [euro/MWh] & 13.2 & 13.8 & 13.6\\
\hspace{1em}\hspace{1em}\hspace{1em}\hspace{1em}\hspace{1em}Production [euro/MWh] & 17.6 & 17.6 & 17.6\\
\bottomrule
		\end{tabular}}
	\captionsetup{width=.8\linewidth}
	\caption{Impacts of a shock in risk aversion parameters on the number of shortage hours, and the costs and revenues of the CRM. The first column recalls the reference results with our calibrated parameters. The second column provides the results when producer's risk aversion is higher ($10\x\etaa$). The third column provides the results when consumer's risk aversion is higher ($10\x\etap$).}
	\label{table : More risk aversions}
	\end{table}
	
	The first observation is that a more risk averse producer has a zero participation constraint, which means that the utility gained from a system without a CRM is negative.\\
	
	Apart from this observation, we have in both cases the same impacts but with different magnitudes. We can see that the costs of construction are slightly higher, with lower spot revenues. This leads to an increase in occurrences of missing money scenarios, and a larger part of total revenues coming from capacity compensation (always positive, and representing 41\% and 92\% from total compared to 30\%). This capacity compensation comes mainly from the risk compensation (51.2 euros/MWh and 380 euros/MWh compared to 15 euros/MWh in the reference case).    
	
	 \begin{table}[H]
		\centering
		\resizebox{0.8\linewidth}{!}{
			\begin{tabular}{lrrr}
				\toprule
  & Reference & Risk averse producer & Risk averse consumer\\
\midrule
Spot revenues/Total revenues [\%] & 70 & 39 & 8\\
Scenarios with missing money [\%] & 28 & 53 & 41\\
Scenarios with negative capacity remuneration [\%] & 18 & 0 & 0\\
Scenarios with negative total compensation [\%] & 0 & 0 & 0\\
Scenarios with negative net revenues [\%] & 5 & 0 & 0\\
\bottomrule
		\end{tabular}}
	\captionsetup{width=.8\linewidth}
	\caption{Impacts of a shock in risk aversion parameters on the CRM. This table presents the variation of total compensation composition and the percentages of scenarios with missing money, negative capacity remuneration, negative total compensation and negative net revenues. The first column recalls the reference results with our calibrated parameters. The second column provides the results when producer's risk aversion is higher ($10\x\etaa$). The third column provides the results when consumer's risk aversion is higher ($10\x\etap$).}
	\label{table : More risk aversions ratios}
	\end{table}
	
	To summarize, whenever one of the two parties is more risk averse, it becomes very costly for consumer to pay for capacity remuneration. A risk averse producer would require more risk compensation (the risk compensation is proportional to producer's risk aversion by definition). From the other hand, a risk averse consumer would be ready to spend a lot to avoid potential shortage or blackout. The consequence is that the producer gets positive total compensation and positive net revenues 100\% of the time.

	\section{Conclusion}
		In this paper, we provide some insights on how electricity producers and consumers could share the financial and physical risks (and uncertainties) to ensure the security of the system. We propose a CRM based on contract theory, which incentivizes producers to perform an optimal level of effort to maintain and develop new power-plants. It takes the form of a contract and a recommended effort, with payment adjusted to the uncertainty of outcomes (weather, outages...) ensuring to producers the right level of average earnings and financial risks while accounting for the spot revenues, as long as they follow the recommended effort.\\
	
	This means that the CRM does not disrupt the spot market operations, which remains therefore able to ensure the short-term optimal economical dispatch. Given a predefined level of security for the system, the capacity mechanism we propose provides the right level of investment needed to insure it, and gives us insights to challenge real implementations of capacity markets.\\
	
	One of our main results is that we point out the necessity of a CRM. This is reinforced with the level of randomness of the capacity and demand. Higher share of random renewable production in the electrical system means that a higher fraction of the compensation of the producers needs to come from a capacity remuneration system. As a matter of fact, in that case the volume of installed capacities should be more important to ensure the security of the system. Meanwhile, the spot price decreases (the spot price is decreasing with respect to installed capacities, which is consistent with the increase of the  supply curve), and so it is essential to support installed capacities. \\
	
	In the meantime, the higher the volatility of consumption or supply, the higher the volatility of the capacity market. The mechanism we proposed also enables to study how risks should be share between the producers and the consumers. The increase of the financial risk is principally supported by the producers who are then compensated by a higher average revenue. However, even when there is a capacity payment, the number of hours of shortage increases when consumption or production become more volatile. In fact, if consumers do not modify the virtual value they associate to shortage, it is economically optimal for them to accept more hours of shortage instead of increasing suppliers' compensation. The capacity payment enables to share the financial risk between the producers and consumers depending on their risk aversion. The producers accept to take more financial risk if it comes with an increase of their average revenue. This is also what happens when the consumers want to reduce the physical risk.\\
	
	Finally, we propose some variants for further research. It would be interesting to challenge our results by using other spot functions, especially ones that could reach higher peaks than the function \eqref{eq : prix spot}. Furthermore, we assumed that the demand process is completely exogenous and uncontrolled. We explored this aspect of the model to a certain extent in section \ref{section : sensitivities} by studying the sensitivities of our results with respect to the volatility of demand. Adding a direct control on the demand process would be relevant but would also change drastically the resolution methodology and is therefore an open research question. The same can be said about the assumption that dismantling power plants has the same cost as building new ones.\\  
	
	In addition, representing several technologies to produce electricity would also make a lot of sense. This could be the starting point for a future work by considering for example the possibility to control the volatility of the capacity process or to have several producers with different technologies instead of one. We also used a continuous time setting, which is convenient for modeling and computations, but requires using small time steps to discretize and approximate the optimal control (in our simulations we used a daily time step). This is equivalent to ignoring the delay needed to build new power plants which is not completely realistic, but remains quite common in such models because decision-making delay is hard to capture. Finally, our model only represents one design of CRM (a kind of sophisticated capacity payment). Further work may include other possible market designs.

	\section{Appendix}
		\subsection{Mathematical framework and weak formulation}\label{appendix : Rigorous mathematical framework and weak formulation}
	This section is devoted to the mathematical formulation of the problem. For $T\in\brak{0,\ +\infty}$ a fixed maturity, we denote by $\Omega:=\Cc\brak{\edg{0,T}, \mathbb{R}^{2}}$
	the space of continuous functions from $\edg{0,\ T}$ to $\R^{2}$. The system is described by the state variable $X:=\brak{X^{C},X^{D}}^{T}$ which
	is the canonical process on $\Omega$.  Finally we endow $\Omega$ with its Borel $\sigma$-algebra $\Fc_{T}$ and define the completed filtration $\F$
	generated by the process $X$.\\
	
	We define the reference probability measure as the weak solution of the controlled equation \eqref{eq:dyn-SDE} with a constant control set to zero, \ie $\alpha_t=0$ for $t\in \edg{0,T}$, on the space $\brak{\Omega,\Fc_{T}}$ and we denote it $\P^{0}$. It is characterized as the unique probability measure such that $\P^{0}\circ\brak{X_0}^{-1}=\delta_{x_0}$ for some $x_0\in\R_+^2$  and the processes $\brak{X_t-\int_0^t\mu\brak{X_s,0}ds}_{t\in\edg{0,T}}$ is a $\brak{\P^{0} , \F}$--martingale with $\left<X\right>_t=\int_0^t\sigma\brak{X_s}\sigma^{\intercal}\brak{X_s}ds$ for $t\in \edg{0,T}$. Note that existence and uniqueness of this measure are insured by the existence of a unique strong solution to the corresponding SDE. 
	Therefore there exists a 2--dimensional $\P^{0}$--Brownian motion $W^{\P^{0}}$ such that
	\begin{align}
	X_t = X_0 +\int_0^t\mu\brak{X_s,0}ds + \int_0^t\sigma\brak{X_s}dW^{\P^{0}}_s, \mbox{ for $t\in\edg{0,T}$}.
	\end{align}
	Notice that under $\P^{0}$ the component $X^C$ is a martingale (the drift part of $X^C$ is zero when the control is constantly equal to $0$).\\  
\begin{Definition}[Admissible controls]\label{def : admissible controls}
		Recall from section \ref{subsection Model, state variables and control} that $\Uc$ is the set of $\F$--predictable processes valued in $[\alpha_{min},\alpha_{max}]$. For $\a\in \Uc$, the Novikov's criterion is satisfied (from the boundedness of $\a$):
		\begin{align}\label{eq : controle admissible}
		\E^{\P^{0}}\edg{e^{\frac{1}{2}\int_0^T\brak{\frac{\alpha_t}{\sigma^C}}^2dt}} <+\infty,
		\end{align}	
		and so the process 
		\begin{align}
		\Ec\brak{\int_0^t\frac{\alpha_s}{\sigma^C}dW^{C,\P^{0}}_s}_{t\in[0,T]}\mbox{ is a }\brak{\P^{0} , \F}\mbox{-UI martingale,}
		\end{align}
		and we can define the induced probability measure $\P^{\a}$ on $\brak{\Omega,\Fc_{T}}$ as the equivalent measure to $\P^{0}$ with its Radon–Nikodym derivative
		\begin{align}
		\begin{cases}
		\frac{\mathrm{d} \P^{\a}}{\mathrm{d} \P^{0}} := \Ec\brak{\int_0^T\frac{\alpha_s}{\sigma^C}dW^{C,\P^{0}}_s},\\
		L_t^{\a}:=\left.\frac{\mathrm{d} \P^{\a}}{\mathrm{d}\P^{0}}\right|_{\Fc_t},
		\end{cases}
		\end{align}
		where $W^{C,\P^{0}}$ is the capacity component of the Brownian motion. We denote by $\Pc$ the collection of probability measures $\P^{\alpha}$ induced by the set of controls $\a\in\Uc$.
	\end{Definition}
	\no By applying Girsanov's theorem, we can see that for an admissible control $\a$, the process $X$ has the law of SDE \eqref{eq:dyn-SDE} under $\P^{\a}$.\\ Remark that the demand component is not affected by this change of probability, \ie \mbox{$\P^{0}\circ\brak{X^D}^{-1}=\P^{\alpha}\circ\brak{X^D}^{-1}$}, for every admissible $\a$, which is consistent with our model: the control is only on (the drift) of the capacity.\\
		\begin{Definition}
		The set $\Xi$ of admissible contracts is defined as the collection of $\Fc_T$--measurable random variables $\xi$ satisfying
		\begin{align}\label{eq : admissible contracts}
		\begin{cases}
		V_{0}^{A}\brak{\xi}\geq U_A\brak{\Rc},\\
		{\Pc}^{\star}\brak{\xi} \neq \emptyset,\\
		\Sup_{\P^{\a}\in\Pc}\E^{\P^{\a}}\edg{e^{(\etaa\lor\etap) (1+\delta)\abs{\xi}}}<\infty,\mbox{ for some }\delta>0.
		\end{cases}
		\end{align} 
	\end{Definition}
	\no The definition of admissible contracts imposes the existence of an optimal control for agent which satisfies the participation constraint, and non-degeneracy conditions for both principal and agent problems.
%
%
\subsection{Definition and derivation of the class of revealing contracts $\Zc$}
\subsubsection{Definition}\label{appendix : definition of the revealing contracts}
\no Let $\Vc$ be the set of $\F$-predictable processes $Z$ valued in $\R^2$ satisfying
\begin{align}
\begin{cases}
\E^{\P^{\a}}\edg{\Ec\brak{-\etaa\int_0^T Z_t\cdot \sigma\brak{X_t}dW_t^{\a}}}=1,\mbox{ for }\P^{\a}\in\Pc,\\
\Sup_{\P^{\a}\in\Pc}\E^{\P^{\a}}\edg{e^{(\etaa\lor\etap) (1+\delta)\abs{Y_T^{0,Z}}}}<\infty,
\end{cases}
\end{align}
where we recall 
\begin{align}
Y_{t}^{Y_0,Z}=Y_{0}+\int_{0}^{t}Z_{s}\cdot dX_{s}-\int_{0}^{t}H\brak{X_{s},Z_{s}}ds,\mbox{ for  }t\in \edg{0,T},\tag{\ref{eq : contrat revelateur}}
\end{align}
and $H$ defined as in \eqref{eq big hamiltonian}.
The class of revealing contracts $\Zc$ is then the subset of $\Xi$ containing the terminal values of $\brak{Y_{t}^{Y_0,Z}}_{t\in[0,T]}$ parameterized by the two control variables $(Y_0,Z)\in(\R\x\Vc)$;
\begin{align}
\Zc = \crl{Y_T^{Y_0,Z}\mbox{ for some } \brak{Y_0,Z}\in\R\x\Vc\mbox{ with }Y_T^{Y_0,Z}\in\Xi}.\tag{\ref{eq : ensemble des contrats revelateur}}
\end{align}
\subsubsection{Derivation of the class $\Zc$}\label{appendix : Formal derivation of the revealing contracts}
The aim of this subsection is to explain the intuition behind the class of revealing contracts $\Zc$ in the spirit of \cite{Sannikov08}, which is a key ingredient in the resolution of our problem. Rather than providing a rigorous treatment of this question (for which we refer the reader to \cite{cvitanicpt}), we only present here the main ideas behind it.\\
	\no The main goal of introducing the class $\Zc$ which--we recall--plays the role of a performance index, is to overcome the non-markovianity of the principal-agent problem, and to make agent's response ``predictable'' by principal. This is achieved by the martingale optimality principle.\\
	
	\no We start by recalling agent's value function:
	\begin{align}
	V_{0}^{A}\brak{\xi} = \sup_{\P^{\a}\in\Pc}\E^{\P^{\a}}\edg{U_{A}\brak{\xi+\int_{0}^{T}s(X_t)dt-\int_{0}^{T}c^{A}\brak{X_{t},\a_{t}}dt}}.
	\end{align}
	We first restrict the contracts to the ones which are terminal values of some diffusion process, \ie of the form $\xi = Y_T^{Y_0,Z}$ where $Y_0$ and $Z$ are new control variables, and $Y_T^{Y_0,Z} = Y_0+\int_0^TZ_t\cdot dX_t + \int_0^T g\brak{t,X_t,Y_t,Z_t}dt$, where $g$ is a deterministic function to be determined.\\
	
	Roughly speaking, this allows us to reduce to the markovian case by recapturing principal's ``missing'' information and plugging it into the new process $Y_T^{Y_0,Z}$. This is done through the change of control variables from $\xi$ to $Y_0$ and $Z$. We stress here that this not a mathematical proof but just an explanation, since $Y_T^{Y_0,Z}$ is a solution to some SDE and has a priori no reason to exist (so far we didn't impose any restrictions on the function $g$).\\
	
	Now that we are in a markovian setting, we want to get rid of the moral-hazard, \ie to have a predicable response from the agent. This is possible by a careful choice of the function $g$. In fact, by the structure of the exponential utility $U_A(x)=-e^{-\etaa x}$, we have by a simple application of It\^o formula that the process defined by $U_{A}\brak{Y_t^{Y_0,Z}+\int_{0}^t\brak{s(X_r) -c^{A}(X_r,\a_r)}dr}_{t\in\edg{0,T}}$, will be a $\brak{\P^{\a},\F}$-supermartingale and a martingale only for some $\hat{\a}$ if we choose $g$ such that   
	\begin{align}
	g\brak{t,x,y,z} = - H\brak{x,z}\mbox{ for }(t,x,y,z)\in\edg{0,T}\x\R_+^2\x\R\x\R^2,     
	\end{align}
	with $H$ the hamiltonian defined in \eqref{eq big hamiltonian}. This gives us first an upper bound on agent's optimal control $V_{0}^{A}\brak{Y_T^{Y_0,Z}}\leq U_A\brak{Y_0}$ by the supermartingale property, and we have that this bound is attained for the control induced by $\P^{\hat{\a}}$ by construction.\\
	
	Therefore, since agent is rational and aims at maximizing his utility, he chooses the control $\hat{\a}$, which is a deterministic function of the pair $\brak{X_t,Z_t}_{t\in\edg{0,T}}$, both observable by principal.
	
	\subsection{Solving Producer's problem: proof of Proposition \ref{Agent's response}}\label{Appendix : proof of Agent's response}	
	Let $Y_T^{Y_0,Z}\in\Zc$ and $\P^{\alpha}\in\Pc$. By definition
	\begin{align}
	\begin{split}
	J_0^A\brak{Y_T^{Y_0,Z}, \P^{\a}}&=\E^{\P^{\a}}\edg{U_{A}\brak{Y_T^{Y_0,Z}+\int_{0}^{T}\brak{s(X_t)-c^A(X_{t},\a_{t})}dt}},\\
	&=\E^{\P^{\a}}\edg{U_{A}\brak{Y_{0}+\int_{0}^TZ_t\cdot dX_t-\int_{0}^TH\brak{X_t,Z_t}dt+\int_{0}^{T}\brak{s(X_t)-c^A(X_{t},\a_{t})}dt}}.
	\end{split}
	\end{align}
	Using \eqref{eq : small hamiltonian} :
	\begin{align}\label{eq: int de zdx}
	\int_{0}^TZ_{s}\cdot dX_{s}&= \int_{0}^Th\brak{X_t,Z_t,\alpha_t}dt+\int_{0}^T\sigma\brak{X_t}Z_t\cdot dW_t^{\a}-\int_{0}^{T}\brak{s(X_t)-c^A\brak{X_t,\a_t}}dt+\frac{\etaa}{2}\int_{0}^T|\sigma\brak{X_t}Z_t|^2dt, 
	\end{align} 
	and therefore replacing $U_A$ with its expression and injecting \eqref{eq: int de zdx} we get
	\begin{align}
	\begin{split}
	J_0^A\brak{Y_T^{Y_0,Z}, \P^{\a}}&=\E^{\P^{\a}}\edg{-e^{-\etaa\brak{Y_{0}+\int_{0}^Th\brak{X_t,Z_t,\alpha_t}-H\brak{X_t,Z_t}dt+\int_{0}^T\sigma\brak{X_t}Z_t\cdot dW_t^{\a}+\frac{\etaa}{2}\int_{0}^T|\sigma\brak{X_t}Z_t|^2dt}}}\\
	&=U_A\brak{Y_0}\E^{\P^{\a}}\edg{e^{\etaa\int_{0}^T\crl{H\brak{X_t,Z_t}-h\brak{X_t,Z_t,\alpha_t}}dt}\Ec\brak{-\etaa\int_0^T Z_t\cdot \sigma\brak{X_t}dW_t^{\a}}}.
	\end{split}
	\end{align}
	Since $Z\in\Vc$ we have that 
	\begin{align}
		\E^{\P^{\alpha}}\edg{\Ec\brak{-\etaa\int_0^T Z_t\cdot \sigma\brak{X_t}dW_t^{\a}}}=1,
	\end{align}
	and we can define the probability measure $\widetilde{\P^{\a}}$ equivalent to $\P^{\a}$ with the Radon-Nikodym derivative
	\begin{align}
	\Frac{\diff\widetilde{\P^{\a}}}{\diff\P^{\a}}:=\Ec\brak{-\etaa\int_0^T Z_t\cdot \sigma\brak{X_t}dW_t^{\a}},
	\end{align}
	\no and so 
	\begin{align}
	J_0^A\brak{Y_T^{Y_0,Z}, \P^{\a}}&=U_A\brak{Y_0}\E^{\widetilde{\P^{\a}}}\edg{e^{\etaa\int_{0}^T\crl{H\brak{X_t,Z_t}-h\brak{X_t,Z_t,\alpha_t}}dt}}.
	\end{align}
	Recalling that $U_A\brak{Y_0}<0$, and $H\brak{X_t,Z_t}-h\brak{X_t,Z_t,\alpha_t}\geq 0$ with equality if and only if $\alpha_t=\hat{\a}\brak{X^C_t,Z^C_t}$ for all $t$ in $\edg{0,T}$, we obtain the upper bound 
	\begin{align}
	J_0^A\brak{Y_T^{Y_0,Z},\P^{\a}}&\leq U_A\brak{Y_0}\mbox{ for all }\P^{\a}\in \Pc,
	\end{align} 
	which is attained for $\brak{\hat{\a}\brak{X^C_t,Z^C_t}}_{t\in[0,T]}$ since it is an admissible control (as it is a progressively measurable process valued in $[\alpha_{min},\alpha_{max}]$) and so
	\begin{align}
	J_0^A\brak{Y_T^{Y_0,Z},\P^{\hat{\a}}}&= U_A\brak{Y_0},
	\end{align}
	which yields
	\begin{align}
	V_{0}^{A}\brak{Y_T^{Y_0,Z}}  =  U_{A}\brak{Y_0},
	\end{align}
	and $\brak{\hat{\a}\brak{X^C_t,Z^C_t}}_{t\in[0,T]}$ is producer's optimal response given the contract $Y^{Y_0,Z}_T$. 
	\qed
		\subsection{Proof of Proposition \ref{Proposition : existence BSDE}\label{appendix Proposition : existence BSDE}}
	
	For readers familiar with BSDE theory, Proposition \ref{Proposition : existence BSDE} can be seen as an existence result for BSDEs with a quadratic generator. The following proof is largely inspired by \cite{eliep} and \cite{el2018optimal}, is classical in the non-Markovian stochastic control theory and relies on the \emph{Agent's continuation utility} as a natural candidate for the solution of the BSDE.\\
	
	We start by defining Agent's continuation utility, and prove that it satisfies the Dynamic Programming Principle. Then we use the assumptions on the set of admissible contracts and the properties of Agent's continuation utility to conclude the proof of Proposition \ref{Proposition : existence BSDE}. 
	\begin{Definition}
		Let $\tau$ be a stopping time valued in $\edg{t,T}$. We denote by $\Uc_{\tau}$ the restriction of (Agent's) controls to $\edg{\tau,T}$.
		We define the dynamic version of Agent's objective function for a given $\xi\in\Xi$ as 
		\begin{align}
		J_{\tau}^{A}\brak{\xi,\P^{\a}}:=\E_{\tau}^{\P^{\a}}\edg{U_{A}\brak{\xi+\int_{\tau}^{T}s(X_r)dr-\int_{\tau}^{T}c^{A}\brak{X_r,\a_r}dr}}\mbox{ and } \Jc^A_{\tau}\brak{\xi}
		:=\brak{J_{\tau}^{A}\brak{\xi,\P^{\a}}}_{\a\in\Uc_\tau},
		\end{align}
		and his continuation utility 
		\begin{align}\label{eq : continuation utility}
		V_{\tau}^A\brak{\xi} := \esssup_{\a\in\Uc_{\tau}}J_{\tau}^{A}\brak{\xi,\P^{\a}}.
		\end{align}
	\end{Definition}
	
	Remark that for any $\P^{\a}\in\Pc$, the conditional expectation $\E_{\tau}^{\P^{\a}}$ depends only on the restriction of $\a$ on $\edg{\tau,T}$. It is then defined without ambiguity for $\a\in\Uc_{\tau}$.
	
	\begin{Lemma}\label{lemma : DPP}
		For $\xi\in\Xi$, $t\in\edg{0,T}$, and $\tau$ an $\F$-stopping time in $\edg{t,T}$, we have that\\
		\rmi The family $\Jc^A_{\tau}\brak{\xi}$ satisfies the lattice property, therefore the limiting sequence approaching $V_{\tau}^A\brak{\xi}$ can be chosen to be non-decreasing, \ie there exists a sequence of $\brak{\P^{{\a}^n}}_{n\geq 0}$ such that
		\begin{align}
		V_{\tau}^A\brak{\xi} = \lim_{n\rightarrow+\infty}\uparrow J_{\tau}^{A}\brak{\xi,\P^{{\a}^n}}.
		\end{align}\\
		\rmii The dynamic programming principle for Agent's value function holds, i.e. for $\tau_1$ and $\tau_2$ two stopping times such that $0\leq \tau_1\leq \tau_2\leq T$:
		\begin{align}
		V_{\tau_1}^A\brak{\xi} = \esssup_{\P^{\a}\in\Pc}\E_{\tau_1}^{\P^{\a}}\edg{V_{\tau_2}^A\brak{\xi}e^{\etaa\int_{\tau_1}^{\tau_2}\brak{c^{A}(X_t,\a_t)-s(X_t)}dt}}.
		\end{align}
		
	\end{Lemma}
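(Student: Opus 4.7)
The plan is to handle the two statements in sequence, first proving the lattice property since it will feed into the essential-supremum manipulations needed for the DPP.

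For part \rmi, I would show the family $\Jc^A_\tau(\xi)$ is directed upward by a standard concatenation argument. Given $\P^{\a^1},\P^{\a^2}\in\Pc$, set $A:=\{J^A_\tau(\xi,\P^{\a^1})\geq J^A_\tau(\xi,\P^{\a^2})\}\in\Fc_\tau$ and define the control $\a^3:=\a^1\1_{A}+\a^2\1_{A^c}$ on $[\tau,T]$. Because both $\a^i$ take values in the compact interval $[\alpha_{min},\alpha_{max}]$ and $A$ is $\Fc_\tau$-measurable, $\a^3$ is $\F$-predictable on $[\tau,T]$, hence admissible. Since the conditional expectation $\E^{\P^{\a^3}}_\tau$ depends only on the restriction of $\a^3$ to $[\tau,T]$, one obtains $J^A_\tau(\xi,\P^{\a^3})=\max(J^A_\tau(\xi,\P^{\a^1}),J^A_\tau(\xi,\P^{\a^2}))$ a.s. This lattice property implies (by Neveu's classical result) that $V^A_\tau(\xi)$ is the a.s.\ increasing limit of a countable sequence $J^A_\tau(\xi,\P^{\a^n})$.

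For part \rmii, the key observation is the multiplicative decomposition afforded by the exponential utility: for any $\a\in\Uc_{\tau_1}$,
\[
U_A\!\brak{\xi+\int_{\tau_1}^T\!\!(s(X_r)-c^A(X_r,\a_r))dr}=U_A\!\brak{\xi+\int_{\tau_2}^T\!\!(s(X_r)-c^A(X_r,\a_r))dr}\cdot e^{\etaa\int_{\tau_1}^{\tau_2}(c^A(X_r,\a_r)-s(X_r))dr}.
\]
Taking $\E^{\P^\a}_{\tau_1}[\,\cdot\,]$ and using the tower property with $\Fc_{\tau_2}$, together with the fact that the exponential factor is $\Fc_{\tau_2}$-measurable, yields $J^A_{\tau_1}(\xi,\P^\a)=\E^{\P^\a}_{\tau_1}\!\edg{J^A_{\tau_2}(\xi,\P^\a)\,e^{\etaa\int_{\tau_1}^{\tau_2}(c^A(X_r,\a_r)-s(X_r))dr}}$. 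Since $V^A_{\tau_2}(\xi)\geq J^A_{\tau_2}(\xi,\P^\a)$ and both are negative while the exponential factor is positive, the inequality is preserved under multiplication; after taking the essential supremum over $\a\in\Uc_{\tau_1}$ this establishes the ``$\leq$'' direction of the DPP.

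For the reverse inequality, fix an arbitrary $\tilde{\a}\in\Uc_{\tau_1}$ restricted to $[\tau_1,\tau_2]$ and use part \rmi to pick a sequence $(\a^n)\subset\Uc_{\tau_2}$ with $J^A_{\tau_2}(\xi,\P^{\a^n})\uparrow V^A_{\tau_2}(\xi)$ a.s. Define the concatenated control $\hat{\a}^n:=\tilde{\a}\1_{[\tau_1,\tau_2]}+\a^n\1_{(\tau_2,T]}$, which is admissible. Because $\P^{\hat{\a}^n}$ agrees with $\P^{\tilde{\a}}$ up to time $\tau_2$, and the integrand on the right is an increasing sequence of negative random variables dominated from below by an integrable bound (supplied by the exponential-moment condition built into the definition of $\Xi$ and of $\Pc$), monotone convergence gives
\[
\lim_n J^A_{\tau_1}(\xi,\P^{\hat{\a}^n})=\E^{\P^{\tilde{\a}}}_{\tau_1}\!\edg{V^A_{\tau_2}(\xi)\,e^{\etaa\int_{\tau_1}^{\tau_2}(c^A(X_r,\tilde{\a}_r)-s(X_r))dr}}\leq V^A_{\tau_1}(\xi),
\]
and taking the essential supremum over $\tilde{\a}$ closes the argument.

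The main obstacle I anticipate is justifying the exchange of limit and conditional expectation in the concatenation step: one must verify that the integrability bound \eqref{eq : admissible contracts} on $\xi$, uniform in $\P^\a\in\Pc$, is strong enough to dominate the product $J^A_{\tau_2}(\xi,\P^{\a^1})\cdot e^{\etaa\int_{\tau_1}^{\tau_2}(c^A-s)dr}$ in $L^1(\P^{\tilde{\a}})$, which requires combining the boundedness of $\tilde{\a}$ with the polynomial bounds on $c^A$ and $s$ and the finite exponential moments of $X$. The measurability of the selection $(\a^n)$ with respect to the filtration at $\tau_2$, although standard, also deserves attention so that $\hat{\a}^n$ is genuinely $\F$-predictable.
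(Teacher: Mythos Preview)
Your proposal is correct and follows essentially the same route as the paper: the lattice property is obtained by the same $\Fc_\tau$-measurable splicing of two controls, and the DPP is proved by the same two-inequality scheme (tower property for ``$\leq$'', concatenation plus part \rmi and monotone convergence for ``$\geq$''). The paper is more explicit about one point you gloss over, namely the Bayes-rule manipulations with the Radon--Nikodym densities $L_t^\a$ needed to show rigorously that $\E_{\tau_2}^{\P^{\tilde{\a}\otimes_{\tau_2}\a^n}}[\,\cdot\,]=J_{\tau_2}^A(\xi,\P^{\a^n})$ and that the outer conditional expectation under $\P^{\tilde{\a}\otimes_{\tau_2}\a^n}$ reduces to one under $\P^{\tilde{\a}}$; your sentence ``$\P^{\hat{\a}^n}$ agrees with $\P^{\tilde{\a}}$ up to time $\tau_2$'' is the right intuition but would need these computations spelled out.

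Regarding your anticipated obstacle: it is less serious than you suggest. Because $c^A$ and $s$ are defined through the truncation $\underline{x}=x\wedge x_\infty$, both are bounded functions, so the factor $e^{\etaa\int_{\tau_1}^{\tau_2}(c^A-s)dr}$ is deterministically bounded. The sequence $J^A_{\tau_2}(\xi,\P^{\a^n})$ is then an increasing sequence of nonpositive random variables multiplied by a bounded positive factor, and monotone convergence applies directly without needing to invoke the exponential-moment condition on $\xi$ or any growth bounds on $X$.
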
  
	\begin{proof}
		\rmi We consider two controls $\a$ and $\a'$ in $\Uc_{\tau}$. We define then 
		\begin{align}
		\tilde{\a}:=\a\mathds{1}_{\crl{J_{\tau}^{A}\brak{\xi,\P^{{\a}}}\geq J_{\tau}^{A}\brak{\xi,\P^{{\a'}}}}} + \a'\mathds{1}_{\crl{J_{\tau}^{A}\brak{\xi,\P^{{\a}}}< J_{\tau}^{A}\brak{\xi,\P^{{\a'}}}}}
		\end{align}
		Then $\tilde{\a}\in\Uc_{\tau}$ and from the definition of $\tilde{\a}$ we have the inequality
		\begin{align}
		J_{\tau}^{A}\brak{\xi,\P^{{\tilde{\a}}}}&\geq \max\brak{J_{\tau}^{A}\brak{\xi,\P^{{\a}}},J_{\tau}^{A}(\xi,\P^{{\a'}})},
		\end{align}
		which proves the lattice property, implying \rmi\! \cite[Proposition VI.I.I,
		p121]{neveu72}.\\

		\no \rmii The proof of this part is similar to the one in \cite[Proposition 6.2]{CK93}. We proceed in two steps proving each of the two inequalities. The first inequality is a direct consequence of the tower property. In fact, for $0\leq \tau_1\leq \tau_2\leq T$, we have by definition
		
		\begin{align}
		\begin{split}
		V_{\tau_1}^A\brak{\xi} &= \esssup_{\a\in\Uc_{\tau_1}}\E_{\tau_1}^{\P^{\a}}\edg{-e^{-\etaa\brak{\xi-\int_{\tau_1}^{T}(c^{A}(X_r,\a_r)-s(X_r))dr}}},\\
		&=\esssup_{\a\in\Uc_{\tau_1}}\E_{\tau_1}^{\P^{\a}}\edg{-e^{-\etaa\brak{\xi-\int_{\tau_1}^{\tau_2}\brak{c^{A}(X_r,\a_r)-s(X_r)}dr-\int_{\tau_2}^{T}\brak{c^{A}(X_r,\a_r)-s(X_r)}dr}}}.
		\end{split}
		\end{align}
		By the tower property of the expectation we write
		\begin{align}
		V_{\tau_1}^A\brak{\xi} &= \esssup_{\a\in\Uc_{\tau_1}}\E_{\tau_1}^{\P^{\a}}\edg{e^{\etaa\int_{\tau_1}^{\tau_2}\brak{c^{A}(X_r,\a_r)-s(X_r)}dr}\E_{\tau_2}^{\P^{\a}}\edg{-e^{-\etaa \brak{\xi -\int_{\tau_2}^{T}\brak{c^{A}(X_r,\a_r)-s(X_r)}dr}}}}.
		\end{align}
		
		\no Using Bayes rule and remarking that $\E_{\tau_2}^{\P^{\a}}\edg{-e^{-\etaa \brak{\xi -\int_{\tau_2}^{T}\brak{c^{A}(X_r,\a_r)-s(X_r)}dr}}}$ depends only on values of $\a$ after $\tau_2$, we have that for an arbitrary $\a\in\Uc$    
		\begin{align}
		\begin{split}
		\E_{\tau_2}^{\P^{\a}}\edg{-e^{-\etaa \brak{\xi -\int_{\tau_2}^{T}\brak{c^{A}(X_r,\a_r)-s(X_r)}dr}}}
		&\leq \esssup_{\a\in\Uc}\E_{\tau_2}^{\P^{\a}}\edg{U_{A}\brak{\xi-\int_{\tau_2}^{T}\brak{c^{A}(X_r,\a_r)-s(X_r)}dr}},\\
		&=\esssup_{\a\in\Uc_{\tau_2}}\E_{\tau_2}^{\P^{\a}}\edg{U_{A}\brak{\xi-\int_{\tau_2}^{T}\brak{c^{A}(X_r,\a_r)-s(X_r)}dr}},\\
		&= V_{\tau_2}^A\brak{\xi},
		\end{split}
		\end{align}
		and then 
		\begin{align} \label{eq : first ineq}
		V_{\tau_1}^A\brak{\xi} &\leq \esssup_{\P^{\a}\in\Pc}\E_{\tau_1}^{\P^{\a}}\edg{V_{\tau_2}^A\brak{\xi}e^{\etaa\int_{\tau_1}^{\tau_2}\brak{c^{A}(X_r,\a_r)-s(X_r)}dr}}.
		\end{align}
		We proceed next to prove the second inequality. Consider $\a\in\Uc$ and $\nu\in\Uc_{\tau_2}$. Define then the concatenation of the two controls for $0\leq t \leq T$ as $\brak{\a\otimes_{\tau_2}\nu}_t:= \a_t\mathds{1}_{0\leq t< \tau_2} + \nu_t\mathds{1}_{\tau_2 \leq t \leq T}$, where $\tau_2$ is an $\F$--stopping time.\\
		
		\no We have then $(\a\otimes_{\tau_2}\nu)\in\Uc$ and by definition of the essential supremum (where we denote $\E_{\tau_1}^{\a\otimes_{\tau_2}\nu}$ instead of $\E_{\tau_1}^{\P^{\a\otimes_{\tau_2}\nu}}$):
		\begin{align}
		\begin{split}
		V_{\tau_1}^A\brak{\xi} &\geq \E_{\tau_1}^{\a\otimes_{\tau_2}\nu}\edg{-e^{-\etaa\brak{\xi-\int_{\tau_1}^T\brak{c^{A}(X_r,(\a\otimes_{\tau_2}\nu)_r)-s(X_r)}dr}}},\\
		&=\E_{\tau_1}^{\a\otimes_{\tau_2}\nu}\edg{-e^{-\etaa\brak{-\int_{\tau_1}^{\tau_2}\brak{c^{A}(X_r,\a_r)-s(X_r)}dr-\int_{\tau_2}^T\brak{c^{A}\brak{X_r,\nu_r}-s(X_r)}dr}}e^{-\etaa\xi}},\\
		&=\E_{\tau_1}^{\a\otimes_{\tau_2}\nu}\edg{e^{\etaa\int_{\tau_1}^{\tau_2}\brak{c^{A}(X_r,\a_r)-s(X_r)}dr}\E_{\tau_2}^{\a\otimes_{\tau_2}\nu}\edg{-e^{-\etaa\brak{\xi-\int_{\tau_2}^T\brak{c^{A}\brak{X_r,\nu_r}-s(X_r)}dr}}}}.
		\end{split}
		\end{align}
		\no Using again Bayes formula on the conditional expectation w.r.t $\Fc_{\tau_2}$, we have that 
		\begin{align}
		\E_{\tau_2}^{\a\otimes_{\tau_2}\nu}\edg{-e^{-\etaa\brak{\xi-\int_{\tau_2}^T\brak{c^{A}\brak{X_r,\nu_r}-s(X_r)}dr}}}&=
		\E_{\tau_2}^{0}\edg{-\frac{L_{T}^{\a\otimes_{\tau_2}\nu}}{L_{\tau_2}^{\a\otimes_{\tau_2}\nu}}e^{-\etaa\brak{\xi-\int_{\tau_2}^T\brak{c^{A}\brak{X_r,\nu_r}-s(X_r)}dr}}}.
		\end{align}
		Now notice that $\frac{L_{T}^{\a\otimes_{\tau_2}\nu}}{L_{\tau_2}^{\a\otimes_{\tau_2}\nu}}=\frac{L_{T}^{\nu}}{L_{\tau_2}^{\nu}}$ (as stated earlier the change of measure applied to the conditional expectation depends only on the control after $\tau_2$). We have therefore 
		\begin{align}
		\begin{split}
		\E_{\tau_2}^{\a\otimes_{\tau_2}\nu}\edg{-e^{-\etaa\brak{\xi-\int_{\tau_2}^T\brak{c^{A}\brak{X_r,\nu_r}-s(X_r)}dr}}}&=
		\E_{\tau_2}^{0}\edg{-\frac{L_{T}^{\nu}}{L_{\tau_2}^{\nu}}e^{-\etaa\brak{\xi-\int_{\tau_2}^T\brak{c^{A}\brak{X_r,\nu_r}-s(X_r)}dr}}},\\
		&=J_{\tau_2}^A\brak{\xi,\P^{\nu}}.
		\end{split}
		\end{align} 
		Thus we obtain the following inequality 
		\begin{align}
		V_{\tau_1}^A\brak{\xi} &\geq \E_{\tau_1}^{\a\otimes_{\tau_2}\nu}\edg{e^{\etaa\int_{\tau_1}^{\tau_2}\brak{c^{A}(X_r,\a_r)-s(X_r)}dr}J_{\tau_2}^A\brak{\xi,\P^{\nu}}}.
		\end{align}
		We use again Bayes Formula for the change of measure and the tower property of conditional expectation leading to 
		\begin{align}
		\begin{split}
		V_{\tau_1}^A\brak{\xi} &\geq \E_{\tau_1}^{0}\edg{\frac{L_T^{\a\otimes_{\tau_2}\nu}}{L_{\tau_1}^{\a\otimes_{\tau_2}\nu}}e^{\etaa\int_{\tau_1}^{\tau_2}\brak{c^{A}(X_r,\a_r)-s(X_r)}dr}J_{\tau_2}^A\brak{\xi,\P^{\nu}}},\\
		&=\E_{\tau_1}^{0}\edg{\E_{\tau_2}^{0}\edg{\frac{L_T^{\a\otimes_{\tau_2}\nu}}{L_{\tau_1}^{\a\otimes_{\tau_2}\nu}}\frac{L_{\tau_2}^{\a\otimes_{\tau_2}\nu}}{L_{\tau_2}^{\a\otimes_{\tau_2}\nu}}e^{\etaa\int_{\tau_1}^{\tau_2}\brak{c^{A}(X_r,\a_r)-s(X_r)}dr}J_{\tau_2}^A\brak{\xi,\P^{\nu}}}},\\
		&=\E_{\tau_1}^{0}\edg{\E_{\tau_2}^{0}\edg{\frac{L_T^{\a\otimes_{\tau_2}\nu}}{L_{\tau_2}^{\a\otimes_{\tau_2}\nu}}}\frac{L_{\tau_2}^{\a\otimes_{\tau_2}\nu}}{L_{\tau_1}^{\a\otimes_{\tau_2}\nu}}e^{\etaa\int_{\tau_1}^{\tau_2}\brak{c^{A}(X_r,\a_r)-s(X_r)}dr}J_{\tau_2}^A\brak{\xi,\P^{\nu}}},\\
		&=\E_{\tau_1}^{0}\edg{\frac{L_{\tau_2}^{\a\otimes_{\tau_2}\nu}}{L_{\tau_1}^{\a\otimes_{\tau_2}\nu}}e^{\etaa\int_{\tau_1}^{\tau_2}\brak{c^{A}(X_r,\a_r)-s(X_r)}dr}J_{\tau_2}^A\brak{\xi,\P^{\nu}}}.
		\end{split}
		\end{align}
		Now recall that for $0\leq t\leq \tau_2$ we have by definition $\brak{\a\otimes_{\tau_2}\nu}_t=\a_t$, and therefore $\frac{L_{\tau_2}^{\a\otimes_{\tau_2}\nu}}{L_{\tau_1}^{\a\otimes_{\tau_2}\nu}}=\frac{L_{\tau_2}^{\a}}{L_{\tau_1}^{\a}}$ leading to 
		\begin{align}
		V_{\tau_1}^A\brak{\xi} &\geq \E_{\tau_1}^{0}\edg{\frac{L_{\tau_2}^{\a}}{L_{\tau_1}^{\a}}e^{\etaa\int_{\tau_1}^{\tau_2}\brak{c^{A}(X_r,\a_r)-s(X_r)}dr}J_{\tau_2}^A\brak{\xi,\P^{\nu}}},\nonumber \\
		&= \E_{\tau_1}^{\a}\edg{e^{\etaa\int_{\tau_1}^{\tau_2}\brak{c^{A}(X_r,\a_r)-s(X_r)}dr}J_{\tau_2}^A\brak{\xi,\P^{\nu}}}.\label{eq: for dpp}
		\end{align}
		The inequality \eqref{eq: for dpp} holds for $\a\in\Uc$ and $\nu\in\Uc_{\tau_2}$, we can then by virtue of \rmi choose a sequence $\brak{\nu^n}_{n\in\N}$ of controls in $\Uc_{\tau_2}$ such that 
		\begin{align}
		V_{\tau_2}^A\brak{\xi} = \lim_{n\rightarrow+\infty}\uparrow J_{\tau_2}^{A}\brak{\xi,\P^{{\nu}^n}},
		\end{align}
		then we have by the monotone convergence theorem that for $\a\in\Uc$ 
		\begin{align}
		\begin{split}
		V_{\tau_1}^A\brak{\xi} &\geq \lim_{n\rightarrow+\infty}\uparrow\E_{\tau_1}^{\a}\edg{e^{\etaa\int_{\tau_1}^{\tau_2}\brak{c^{A}(X_r,\a_r)-s(X_r)}dr}J_{\tau_2}^A\brak{\xi,\P^{{\nu}^n}}}\\
		&=\E_{\tau_1}^{\a}\edg{e^{\etaa\int_{\tau_1}^{\tau_2}\brak{c^{A}(X_r,\a_r)-s(X_r)}dr}\lim_{n\rightarrow+\infty}\uparrow J_{\tau_2}^A\brak{\xi,\P^{{\nu}^n}}}\\
		&=\E_{\tau_1}^{\a}\edg{e^{\etaa\int_{\tau_1}^{\tau_2}\brak{c^{A}(X_r,\a_r)-s(X_r)}dr}V_{\tau_2}^A\brak{\xi}},
		\end{split}
		\end{align}
		concluding the proof of Lemma \ref{lemma : DPP}.
		\qed
	\end{proof}
	\subsubsection*{Proof of Proposition \ref{Proposition : existence BSDE}}
	Now that we proved the Dynamic Programming Principle, we move to the existence of the BSDE.  We have by definition $\Zc\subset\Xi$. To prove the second inclusion, we fix some $\xi\in\Xi$, and define agent's continuation utility as in \eqref{eq : continuation utility}.\\ 
	
	By virtue of Lemma \ref{lemma : DPP}, the family $\brak{V_{\tau}^A\brak{\xi}e^{\etaa\int_0^{\tau}\brak{c^{A}\brak{X_r,\a_r}-s(X_r)}dr}}_{\tau\in\Tc_{0,T}}$ is a $\brak{\P^{\a},\F}$--supermartingale system. Therefore, by the results of \cite{LD81}, it can be aggregated by a unique $\F$-optional process up to indistinguishability, which coincides with $\brak{V_t^A\brak{\xi}e^{\etaa\int_0^t\brak{c^{A}\brak{X_r,\a_r}-s(X_r)}dr}}_{t\in\edg{0,T}}$ and remains a $\brak{\P^{\a},\F}$--supermartingale, which then admits a c\`ad-l\`ag modification since the filtration considered satisfies the usual conditions.\\
	
	Then, from the admissibility constraint of the contract $\xi$; that is $\Pc^{\star}\brak{\xi}\neq\emptyset$, there exists some probability measure $\P^{\a^{\star}\brak{\xi}}$ (referred to as $\P^{\a^\star}$ to ease notations) such that $V_t^A\brak{\xi} = J_{t}^{A}\brak{\xi,\ \P^{\a^\star}}$, for $t\in\edg{0,T}$, and so the process $\brak{J_{t}^{A}\brak{\xi,\P^{\a^\star}}e^{\etaa\int_0^t\brak{c^{A}\brak{X_r,\a_r}-s(X_r)}dr}}_{t\in\edg{0,T}}$ is a $\brak{\P^{\a},\F}$--supermartingale, for $\P^{\a}\in\Pc$, while the processes $\brak{J_{t}^{A}\brak{\xi, \P^{\a^\star}}e^{\etaa\int_0^t\brak{c^{A}\brak{X_r,\a^{\star}_r}-s(X_r)}dr}}_{t\in\edg{0,T}}$ is a $\brak{\P^{{\a}^{\star}},\F}$--UI martingale.\\
	In fact the integrability is guaranteed by the ceiling function in $c^A$ and $s$, together with the admissibility condition on the contract $\xi$; for $t\in[0,T]$
	\begin{align}
	\begin{split}
		\E^{\P^{\a^{\star}}}\edg{\abs{J_{t}^{A}\brak{\xi, \P^{\a^\star}}e^{\etaa\int_0^t\brak{c^{A}\brak{X_r,\a_r}-s(X_r)}dr}}}&=\E^{\P^{\a^{\star}}}\edg{\E_{t}^{\P^{\a^{\star}}}\edg{e^{-\etaa\brak{\xi+\int_{0}^{T}\brak{s(X_r)-c^{A}\brak{X_r,(\a\otimes_{t}\a^{\star})_r}}dr}}}},\\
		&\leq K\E^{\P^{\a^{\star}}}\edg{\E_{t}^{\P^{\a^{\star}}}\edg{e^{-\etaa\xi}}}=K\E^{\P^{\a^{\star}}}\edg{e^{-\etaa\xi}}<+\infty,
	\end{split}
	\end{align}
	where we used again the ceiling function (and thus the boundedness of the exponential term), together with the admissibility condition on $\xi$.
	On the other hand, by the super-martingale inequality and the tower property of conditional expectations, we have for every $t_1\leq t_2 \in\edg{0,T}$:
	
	\begin{align}
	\begin{split}
	J_{t_1}^{A}\brak{\xi,\P^{\a^{\star}}}e^{\etaa\int_0^{t_1}\brak{c^{A}\brak{X_r,\a_r^{\star}}-s(X_r)}dr}&\geq\E^{\P^{\a^{\star}}}_{t_1}\edg{J_{t_2}^{A}\brak{\xi,\P^{\a^{\star}}}e^{\etaa\int_0^{t_2}\brak{c^{A}\brak{X_r,\a_r^{\star}}-s(X_r)}dr}},\\&=\E^{\P^{\a^{\star}}}_{t_1}\edg{\E^{\P^{\a^{\star}}}_{t_2}\edg{U_A\brak{\xi - \int_0^T\brak{c^{A}\brak{X_r,\a_r^{\star}}-s(X_r)}dr}}},\\&=\E^{\P^{\a^{\star}}}_{t_1}\edg{U_A\brak{\xi - \int_0^T\brak{c^{A}\brak{X_r,\a_r^{\star}}-s(X_r)}dr}},\\
	&=J_{t_1}^{A}\brak{\xi,\P^{\a^{\star}}}e^{\etaa\int_0^{t_1}\brak{c^{A}\brak{X_r,\a_r^{\star}}-s(X_r)}dr}.
	\end{split}
	\end{align}
	Therefore all the previous terms are equal a.s., in particular, for $t\in\edg{0,T}$ 
	{\small
		\begin{align}
		J_{t}^{A}\brak{\xi,\P^{\a^{\star}}}e^{\etaa\int_0^{t}\brak{c^{A}\brak{X_r,\a_r^{\star}}-s(X_r)}dr}=\E^{\P^{\a^{\star}}}_{t}\edg{U_A\brak{\xi +\int_0^T\brak{s(X_r)-c^{A}\brak{X_r,\a_r^{\star}}}dr}},
		\end{align}
	}
	which proves that $\brak{J_{t}^{A}\brak{\xi,\P^{\a^{\star}}}e^{\etaa\int_0^{t}\brak{c^{A}\brak{X_r,\a_r^{\star}}-s(X_r)}dr}}_{t\in\edg{0,T}}$is a $\P^{\a^{\star}}$--closed martingale, with a terminal value at $T$ given by
	\begin{align}
	J_T^{A}\brak{\xi,\P^{\a^{\star}}}e^{\etaa\int_0^T\brak{c^{A}\brak{X_r,\a_r^{\star}}-s(X_r)}dr}
	&=-e^{-\etaa\brak{ \xi-\int_0^T\brak{c^{A}\brak{X_r,\a_r^{\star}}-s(X_r)}dr}}.
	\end{align}
	Then, by the martingale representation theorem, there exists a predictable process $\tilde{Z}\in\mathbb{H}^2_{loc}$ valued in $\R^2$ such that 
	\begin{align}
	\begin{split}
	J_{t}^{A}\brak{\xi,\P^{\a^{\star}}}e^{\etaa\int_0^{t}\brak{c^{A}(X_r,\a_r^{\star})-s(X_r)}dr}&=J_{0}^{A}\brak{\xi,\P^{\a^{\star}}}+\int_0^t\tilde{Z}_sdW_s^{\a^{\star}},\\
	&=J_{0}^{A}\brak{\xi,\P^{\a^{\star}}}\Ec\brak{-\etaa\int_0^tZ_r\cdot \sigma\brak{X_r}dW_r^{\a^{\star}}},    
	\end{split}
	\end{align}
	where 
	\begin{align}
	Z_t := -\frac{\sigma^{-1}(X_t)\tilde{Z_t}}{\etaa J_t^{A}\brak{\xi,\P^{\a^{\star}}}e^{\etaa\int_0^{t}\brak{c^{A}(X_r,\a_r^{\star})-s(X_r)}dr}},
	\end{align}
	and
	\begin{align}
		\Ec\brak{-\etaa\int_0^tZ_r\cdot \sigma\brak{X_r}dW_r^{\a^{\star}}}_{t\in[0,T]}\mbox{ is a }\brak{\P^{{\a}^{\star}},\F}\mbox{--UI martingale}.
	\end{align}
	We define 
	\begin{align}
	Y_t=U_A^{-1}\brak{V_t^A\brak{\xi}},\mbox{ for }t\in[0,T],
	\end{align}
	and our goal is to prove that the pair $(Y,Z)$ is a solution to \eqref{eq : equation BSDE}, and that $Z\in \Vc$.\\  
	\no Recall that for an arbitrary $\a\in\Uc$, the process  $\tilde{Y}_t^{\a}:=J_{t}^{A}\brak{\xi,\P^{\a^{\star}}}e^{\etaa\int_0^{t}\brak{c^{A}(X_r,\a_r)-s(X_r)}dr}$ is a $\brak{\P^{\a},\F}$--supermartingale. Replacing $J_{t}^{A}\brak{\xi,\P^{\a^{\star}}}$ by its representation, we obtain
	\begin{align}
	\frac{\tilde{Y}_t^{\a}}{\etaa}&=\frac{1}{\etaa}J_{0}^{A}\brak{\xi,\P^{\a^{\star}}}\Ec\brak{-\etaa\int_0^tZ_r\cdot \sigma\brak{X_r}dW_r^{\a^{\star}}}{e^{\etaa\int_0^{t}\brak{c^{A}\brak{X_r,\a_r} - c^{A}(X_r,\a_r^{\star})}dr}}.
	\end{align}
	\no We apply then It\^o formula and Girsanov Theorem, therefore
	\begin{align}
	\begin{split}
	\frac{d\tilde{Y}_t^{\a}}{\etaa\tilde{Y}_t^{\a}}&=\brak{c^{A}\brak{X_t,\a_t} - c^{A}\brak{X_t,\a_t^{\star}}}dt-Z_t\cdot \sigma\brak{X_t}dW_t^{\a^{\star}},\\
	&=-Z_t\cdot \sigma\brak{X_t}dW_t^{\a}-\crl{\brak{Z_t\cdot\mu\brak{X_t,\a_t}-c^{A}\brak{X_t,\a_t}}-\brak{Z_t\cdot\mu\brak{X_t,\a_t^{\star}}-c^{A}\brak{X_t,\a_t^{\star}}}}dt,
	\end{split}
	\end{align}
	and by the supermartingale property and the sign of $\tilde{Y}_t^{\a}$ we conclude that 
	\begin{align}
	{\a}^{\star}\in\mbox{argmax}\brak{Z_t\cdot\mu\brak{X_t,\a_t}-c^{A}\brak{X_t,\a_t}}.
	\end{align}
	Finally, applying It\^o Formula 
	\begin{align}
	\begin{split}
	Y_t=&U_A^{-1}\brak{J_{0}^{A}\brak{\xi,\P^{\a^{\star}}}}+\int_0^tZ_r\cdot \sigma\brak{X_r}dW_r^{\a^{\star}}+\int_0^t(c^{A}\brak{X_r,\a_r^{\star}}-s(X_r))dr,\\
	=&U_A^{-1}\brak{J_{0}^{A}\brak{\xi,\P^{\a^{\star}}}}+\int_0^tZ_r\cdot dX_r-\int_0^t\brak{Z_r\cdot\mu\brak{X_r,\a_r^{\star}}+s(X_r)-c^{A}\brak{X_r,\a_r^{\star}}-\etaa|\sigma(X_r)Z_r|^2}dr,
	\end{split}
	\end{align}
	and so the pair $\brak{Y,Z}$, satisfy \eqref{eq : equation BSDE}. Furthermore, following the line of proof of \cite{briandh}[corollary 4] and using the integrability assumption on admissible contracts, we obtain
	\begin{align}
	\E^{\P^{\alpha^{\star}}}\edg{e^{(\etaa\lor\etap) (1+\delta) \sup_{t\in[0,T]}\abs{Y_t}}}<+\infty,
	\end{align}
	with $\P^{\alpha^{\star}}=\P^{\hat{\alpha}}$ from proposition \ref{Agent's response}, which concludes the proof. 
	\qed

	\subsection{Solving consumer's problem : optimal contract and capacity payment}\label{appendix : decomposition contrat}
	From the Proposition \ref{Proposition : existence BSDE}, consumer's problem is reduced to 
	\begin{align}\tag{\ref{eq : reduced principal problem}}
	V_0^{P}
	&=\sup_{Y_0\geq \Rc}\sup_{Z \in \Vc}J_{0}^{P}\brak{Y_T^{Y_0,Z},\P^{\hat{\a}}},
	\end{align}
	so the optimal contract we are looking for is of the form $\xi^{\star}=Y_T^{Y^{\star}_0,Z^{\star}}$, with a pair $(Y_0^\star,Z^{\star})\in [\Rc,+\infty[\x\Vc$.
	Using the identity $Y^{Y_0,Z}=Y_0+Y^{0,Z}$, we have
	\begin{align}
	\begin{split}
	V_0^{P}
	&=\sup_{Y_0\geq \Rc}\sup_{Z \in \Vc}\edg{-e^{-\etap\brak{-Y_T^{Y_0,Z}-S_T+\int_{0}^{T}c^{P}\brak{X_{t}}dt}}},\\
	&=\sup_{Y_0\geq \Rc}e^{\etap Y_0}\sup_{Z \in \Vc}\edg{-e^{-\etap\brak{-Y_T^{0,Z}-S_T+\int_{0}^{T}c^{P}\brak{X_{t}}dt}}},\\
	&=e^{\etap \Rc}\sup_{Z \in \Vc}\edg{-e^{-\etap\brak{-Y_T^{0,Z}-S_T+\int_{0}^{T}c^{P}\brak{X_{t}}dt}}},
	\end{split}
	\end{align}
	and therefore $Y_0^\star = \Rc$. We can rewrite \eqref{eq : reduced principal problem} as
	\begin{align}
	V_{0}^{P} & = \sup_{Z\in\Vc}\E^{\P^{\hat{\a}}}\edg{U_{P}\brak{-Y_{T}^{\Rc,Z}-\int_{0}^{T}s(X_t)dt+\int_{0}^{T}{c^{P}\brak{X_{t}}dt}}},
	\end{align}
	with the state variables following agent's optimal response (which we recall is the same as principal's recommendation) i.e., $\a_t^{\star} = \hat{\a}\brak{X^C_t,Z^C_t}$, for $t\in\edg{0,T}$ : 
	\begin{align}
		\begin{cases}
		X_{t}&=\brak{\begin{array}{c}
			x_0^C\\
			x_0^D
			\end{array}} + \int_{0}^{t}\mu\brak{X_r,\hat{\a}\brak{X^C_r,Z^C_r}}dr+\int_{0}^{t}\sigma\brak{X_r}dW^{\hat{\a}}_{r},\\
		Y_{t}&=\Rc + \int_{0}^{t}\brak{c^{A}\brak{X_r,\hat{\a}\brak{X^C_r,Z^C_r}}+\frac{\eta_{A}}{2}|\sigma\brak{X_r} Z_r|^2-s(X_r)}dr+\int_{0}^{t}Z_r\cdot\sigma\brak{X_r}dW^{\hat{\a}}_r.
		\end{cases}
	\end{align}
	Define then the continuation utility as
	\begin{align} \label{VP-markov}
	V^{P}\brak{t,x,y}:=\sup_{Z\in\Vc_t}\E_{t,x,y}^{\P^{\hat{\a}}}\edg{-e^{-\etap\brak{-Y_{T}-\int_{t}^{T}s(X_r)dr+\int_{t}^{T}{c^{P}\brak{X_r}dr}}}},
	\end{align}
	Observe that $Z = 0$ is an admissible control and therefore
	\begin{align}
	\begin{split}
	V^{P}\brak{t,x,y}&\geq \E_{t,x,y}^{\P^{\hat{\a}(0)}}\edg{-e^{\etap Y_{T}+\etap\int_{t}^{T}s(X_r)dr-\etap\int_{t}^{T}{c^{P}\brak{X_r}dr}}},\\
	&=-e^{\etap y}\E_{t,x,y}^{\P^{\hat{\a}(0)}}\edg{e^{\etap \int_{t}^T\brak{c^{A}\brak{X_r,\hat{\a}\brak{X^C_r,0}} - c^{P}(X_r)}dr}},\\
	&\geq -e^{\etap y + C(T-t)}\mbox{ for some constant }C,
	\end{split}
	\end{align}
	where the last inequality follows from the bound
	\begin{align}
	-kx_{\infty} - c^{A}(x_{\infty},\alpha_{max}) \leq c^P(x) - c^{A}(x,\alpha) \leq x_{\infty}(\theta - \kappa_1\alpha_{min}),\mbox{ for }(x,\alpha)\in \R^2_+\x[\alpha_{min},\alpha_{max}].
	\end{align}
	On the other hand, we have
	{\small
	\begin{align}
	\begin{split}
	\E_{t,x,y}^{\P^{\hat{\a}}}\edg{-e^{-\etap\brak{-Y_{T}-\int_{t}^{T}s(X_r)dr+\int_{t}^{T}{c^{P}\brak{X_r}dr}}}} &= -e^{\etap y}\E_{t,x,y}^{\P^{\hat{\a}}}\edg{e^{\etap\int_{t}^{T}\brak{c^{A}(X_r,\hat{\a}(X^C_r,Z^C_r))-c^{P}\brak{X_r}+\frac{\eta_{A}}{2}|\sigma\brak{X_r} Z_r|^2}dr+\etap\int_t^TZ_r\cdot\sigma\brak{X_r}dW^{\hat{\a}}_r}},\\
	&\leq-e^{\etap y+x_{\infty}\etap(\kappa_1\alpha_{min}-\theta) (T-t)}\E_{t,x,y}^{\P^{\hat{\a}}}\edg{e^{ \etap\int_t^TZ_r\cdot\sigma\brak{X_r}dW^{\hat{\a}}_r}},\\
	&\leq -e^{\etap y+x_{\infty}\etap(\kappa_1\alpha_{min}-\theta) (T-t)}, 
	\end{split}
	\end{align}}
	where the last inequality follows from the Jensen inequality and the concavity of $x\mapsto -e^{x}$. Therefore, taking the supremum over $Z\in\Vc_t$, we obtain
	\begin{align}
		V^{P}\brak{t,x,y}\leq -e^{\etap y+x_{\infty}\etap(\kappa_1\alpha_{min}-\theta) (T-t)}.
	\end{align}
	We can see then that $\abs{V^{P}\brak{t,x,y}}\leq e^{\etap y +C(T-t)}$ for some constant $C$.
	By standard stochastic control theory (\cite{touzi12}), 
	$V^{P}$ is characterized as the unique viscosity solution of the HJB equation :
	\begin{align}\label{eq : HJB Principal 1}
	\begin{cases}
		-\partial_{t}V^{P}-G\brak{x,V^P,DV^P,D^{2}V^P}=0,\mbox{ for }\brak{t,x,y} \in [0,T)\x\R^3,\\
		V^{P}\brak{T,x,y}=-\exp\brak{{\etap y}},\mbox{ for }\brak{x,y} \in \R^3,
	\end{cases}
	\end{align}
	with growth $\abs{V^{P}\brak{t,x,y}}\leq e^{\etap y +C(T-t)}$ for some constant $C$, and $G : \R^2\x\R\x\R^3\x\Mc_3\brak{\R}\rightarrow\R$ the hamiltonian defined as
	{\small
		\begin{align}
		G\brak{x,q,p,\gamma}  :=& \sup_{z\in\R^2}g\brak{x,q,p,\gamma,z},
		\end{align}
	 with
	{\small
		\begin{align}
		\begin{split}
		g\brak{x,q,p,\gamma,z}  :=& \left\lbrace\mu\brak{x,\hat{\a}(x^C,z^C)}\cdot p_{x}+\brak{c^{A}\brak{x,\hat{\a}(x,z)}+\frac{\etaa}{2}z^{\intercal}\sigma\sigma^{\intercal}(x)z-s(x) }p_{y}+\etap q(s(x)- c^{P}\brak{x})\right.\\
		& \left.+\frac{1}{2}\sigma\sigma^{\intercal}\brak{x}:\gamma_{xx}+\frac{1}{2}z^{\intercal}\sigma\sigma^{\intercal}\brak{x}z\gamma_{yy}+z^{\intercal}\sigma\sigma^{\intercal}\brak{x}\gamma_{xy}\right\rbrace.
		\end{split}
		\end{align}}
	\no Using the change of variable $V^P(t,x,y)=-e^{\etap (y - u(t,x))}$, we can simplify the PDE \eqref{eq : HJB Principal 1}, and express consumer's value function with a 2-dimensional state variable equation instead of 3, that is $u:\edg{0,T}\x\R^2\rightarrow\R$, which is the unique bounded viscosity solution to
	\begin{align}\label{eq : pde_u}
	\begin{cases}
	-\partial_{t}u-\bar{G}\brak{x,Du,D^{2}u}=0,\mbox{ for }\brak{t,x} \in [0,T)\x\R^2,\\
	u\brak{T,x}=0,\mbox{ for }x \in \R^2,
	\end{cases}
	\end{align}	
	where the boundedness is obtained from the growth condition of $V^P$, and with $\bar{G} : \R^2\x\R^2\x\Mc_2\brak{\R}\rightarrow\R$ defined as
	\begin{align}
		\bar{G}(x,p,\gamma):=\sup_{z\in\R^2}\bar{g}\brak{x,p,\gamma,z},
	\end{align}
	and
	{\small
		\begin{align}
		\bar{g}\brak{x,p,\gamma,z}  :=& \mu\brak{x,\hat{\a}(x^C,z^C)}\cdot p+ c^{P}\brak{x}-c^{A}\brak{x,\hat{\a}(x^C,z^C)}+\frac{1}{2}\sigma\sigma^{\intercal}\brak{x}:\brak{\gamma-\etap pp^{\intercal}}-\frac{\etaa+\etap}{2}z^{\intercal}\sigma\sigma^{\intercal}\brak{x}z+\etap z^{\intercal}\sigma\sigma^{\intercal}\brak{x}p,
		\end{align}}
	which is strictly concave (and separable) in the control variables $z^C$ and $z^D$
	\begin{align}
	\begin{split}
	\bar{g}\brak{x,p,\gamma,z}=& \tilde{\mu}(x)p + c^{P}\brak{x}-\tilde{c}^{A}(x)+\frac{1}{2}\sigma\sigma^{\intercal}\brak{x}:\brak{\gamma-\etap pp^{\intercal}},\\
	&-\frac{\kappa_2\brak{\underline{x}^{C}}^2}{2}(\hat{\a}(x^C,z^C) )^{2}+\hat{\a}(x^C,z^C) \brak{x^{C}p_{x^C}-\kappa_1 \underline{x}^{C}} -\frac{\etaa+\etap}{2}\brak{z^C}^2\brak{\sigma^Cx^C}^2+z^C\etap \brak{\sigma^Cx^C}^2p_{x^C},\\
	&-\frac{\etaa+\etap}{2}\brak{z^D}^2\brak{\sigma^Dx^D}^2+z^D\etap \brak{\sigma^Dx^D}^2p_{x^D},
	\end{split}
	\end{align}
and so 
\begin{align}
	\bar{G}(x,p,\gamma) = \bar{g}\brak{x,p,\gamma,\hat{z}(x,p)},
\end{align}
with the maximizer $\hat{z}$, that can be expressed explicitly as
\begin{align}\label{eq : form of optimal control}
	\hat{z}(x,p):= \brak{\begin{array}{cc}
		\hat{z}^{C}(x,p)\\
		\hat{z}^{D}(x,p)
		\end{array}},
\end{align}
where for sufficiently large $\abs{\alpha_{max}}$ and $\abs{\alpha_{min}}$,
\begin{align}\label{eq : control xc}
	\hat{z}^{C}(x,p) := \frac{\etap(\sigma^Cx^C)^2 + \frac{1}{\kappa_2}\brak{\frac{x^C}{\underline{x}^C}}^2}{(\etaa+\etap)(\sigma^Cx^C)^2+\frac{1}{\kappa_2}\brak{\frac{x^C}{\underline{x}^C}}^2}p_{x^C} + \frac{\kappa_1}{\kappa_2}\frac{\brak{\frac{x^C}{\underline{x}^C}}-\brak{\frac{x^C}{\underline{x}^C}}^2}{(\etaa+\etap)(\sigma^Cx^C)^2+\frac{1}{\kappa_2}\brak{\frac{x^C}{\underline{x}^C}}^2},
\end{align}
and 
\begin{align}
	\hat{z}^{D}(x,p) := \frac{\etap}{\brak{\etaa+\etap}}p_{x^D}.
\end{align}
\begin{Remark}
	{\upshape
		Whenever the truncation coefficients go to infinity $x_{\infty},\alpha_{max}\rightarrow +\infty$ and $\alpha_{min}\rightarrow -\infty$, we obtain
		\begin{align}\label{eq : hat z approx}
		\hat{z}\brak{x,p}\approx \brak{\begin{array}{cc}
			\frac{\etap\brak{\sigma^{C}x^{C}}^{2}+\frac{1}{\kappa_2}}{\brak{\etaa+\etap}\brak{\sigma^{C}x^{C}}^{2}+\frac{1}{\kappa_2}} & 0\\
			0 & \frac{\etap}{\brak{\etaa+\etap}}
			\end{array}}p,
		\end{align}
		which can be used as an approximation of \eqref{eq : form of optimal control}. Similarly, the PDE \eqref{eq : pde_u} can be approximated by the solution to \eqref{eq : pde_u approximation}
		{\footnotesize
			\begin{align}
			\begin{cases}\label{eq : pde_u approximation}
			\partial_{t}u+\brak{\tilde{\mu}\brak{x}-{ \brak{
						\begin{array}{c}
						\frac{\kappa_1}{\kappa_2}\\
						0
						\end{array}}}}\cdot Du+\frac{1}{2}\sigma\sigma^{\intercal}\brak{x}:D^{2}u+f\brak{x}{-\frac{\kappa^2_1}{2 \kappa_2}}+\frac{1}{2}\rho\brak{x}\cdot\brak{\begin{array}{c}\brak{\partial_{x^C}u}^{2}\\
				\brak{\partial_{x^D}u}^{2}
				\end{array}}=0\\
			u\brak{T,x}=0
			\end{cases},\mbox{ for }\brak{t,x}\in\edg{0,T}\x\R^2,
			\end{align}}
		where $f : \R^2\rightarrow\R$ is defined as $f:=\tilde{c}^{A}-c^{P}\!,$  and 
		\begin{align}
		\rho\brak{x}:=\brak{\begin{array}{c}\frac{\frac{1}{\kappa_2^2}+\frac{1}{\kappa_2}\etap\brak{\sigma^C x^{C}}^{2}-\etaa\etap\brak{\brak{\sigma^{C}x^{C}}^{2}}^{2}}{\brak{\etaa+\etap}\brak{\sigma^{C}x^{C}}^{2}+\frac{1}{\kappa_2}}\\
			\frac{-\etaa\etap\brak{\sigma^{D}x^{D}}^{2}}{\brak{\etaa+\etap}}
			\end{array}}.
		\end{align}}
\end{Remark}

	\no We next proceed by verification to solve the problem
\begin{Proposition}[Verification]\label{proposition : verification Principal reduced}
		\rmi Assume that \eqref{eq : pde_u} has a bounded  $C^{1,2}([0,T],\R^2)$ solution $u$. Then there exists a $C^{1,2}([0,T],\R^3)$ solution to the PDE \eqref{eq : HJB Principal 1} denoted $v$ with growth $\abs{v(t,x,y)}\leq  e^{\etap y +C(T-t)}$ for some constant $C$, which satisfies 
		\begin{align}
			V_0^P\leq v(0,x_0,\Rc).
		\end{align}
		\rmii Define
		\begin{align}\label{eq : hat_z}
		Z_t^{\star}:=\hat{z}\brak{X_t,Du\brak{t,X_t}}\mbox{ for }t\in[0,T],
		\end{align}
		and assume that $\brak{Z_t^{\star}}_{t\in[0,T]}\in\Vc$. Then
		\begin{align}
			V_0^P= v(0,x_0,\Rc),
		\end{align}
		and 
		\begin{align}
			\xi^{\star} := \Rc +\int_0^TZ_t^{\star}\cdot dX_t -\int_0^TH\brak{X_t,Z_t^{\star}}dt
		\end{align}
		is an optimal contract.		
	\end{Proposition}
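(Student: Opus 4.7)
The approach is a classical verification via the exponential change of variable. Starting from a bounded $C^{1,2}([0,T]\times\R^{2})$ solution $u$ of \eqref{eq : pde_u}, I would set
\begin{equation*}
v(t,x,y):=-\exp\brak{\etap\brak{y-u(t,x)}},\quad (t,x,y)\in [0,T]\times\R^2\times\R,
\end{equation*}
which is immediately $C^{1,2}$ with $v(T,x,y)=-e^{\etap y}$ (using $u(T,\cdot)=0$) and $\abs{v(t,x,y)}\leq e^{\etap y+C(T-t)}$ for $C:=\etap\norm{u}_{\infty}$. Plugging this ansatz into \eqref{eq : HJB Principal 1} and factoring out $v<0$, a direct computation shows that $v$ solves the full HJB if and only if $u$ solves the reduced PDE \eqref{eq : pde_u}; in particular the supremum in $G$ is attained at $z=\hat z(x,Du(t,x))$ as in \eqref{eq : form of optimal control}, which motivates the choice \eqref{eq : hat_z}.

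For the upper bound $V_0^P\leq v(0,x_0,\Rc)$ in part \rmi, I would fix $Z\in\Vc$ and introduce the discounted process
\begin{equation*}
M_t^Z:=v\brak{t,X_t,Y_t^{\Rc,Z}}\,\exp\brak{\etap\int_0^t\brak{s(X_r)-c^P(X_r)}dr},\quad t\in[0,T],
\end{equation*}
where $X$ evolves under $\P^{\hat{\a}}$ with the recommended effort $\hat\a(X^C,Z^C)$ and $Y^{\Rc,Z}$ satisfies \eqref{eq : contrat revelateur}. Applying It\^o's formula and the product rule, and using that the $\etap v(s-c^P)$ term in $g$ cancels the drift created by the discount factor, I obtain that the drift of $M^Z$ equals $D_t\brak{\partial_t v+g\brak{X_t,v,Dv,D^2v,Z_t}}$ with $D_t>0$. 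Since $v$ solves $\partial_t v+G=0$ and $g\leq G$ pointwise, this drift is nonpositive and $M^Z$ is a local $\P^{\hat{\a}}$-supermartingale. Matching terminal values gives $M_T^Z=U_P\brak{-Y_T^{\Rc,Z}-S_T+\int_0^Tc^P(X_t)dt}$, so once the supermartingale property is upgraded to the global level, $J_0^P(Y_T^{\Rc,Z},\P^{\hat{\a}})\leq v(0,x_0,\Rc)$, and the supremum over $Z\in\Vc$ yields \rmi through \eqref{eq : reduced principal problem}.

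For part \rmii, assuming $Z^{\star}_t=\hat z(X_t,Du(t,X_t))$ is admissible, the separable strict concavity of $\bar g$ in $(z^C,z^D)$ established in Appendix \ref{appendix : decomposition contrat} ensures that $Z^{\star}$ attains the supremum in $g$ pointwise along the trajectory. The drift of $M^{Z^\star}$ then vanishes, making it a local $\P^{\hat{\a}}$-martingale; once upgraded to a true martingale, equality holds above, giving $V_0^P=v(0,x_0,\Rc)$ and identifying $\xi^{\star}=\Rc+\int_0^TZ_t^{\star}\cdot dX_t-\int_0^TH(X_t,Z_t^{\star})dt$ as an optimal contract, with producer's optimal response $\P^{\hat\a}$ by Proposition \ref{Agent's response}.

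The main obstacle is the localization step, i.e.\ upgrading $M^Z$ from a local to a true supermartingale (and $M^{Z^\star}$ from local to true martingale). Here the exponential growth $\abs{v(t,x,y)}\leq e^{\etap y+C}$ must be combined with the $(1+\delta)$ exponential-moment condition on $\sup_{t\in[0,T]}\abs{Y_t^{\Rc,Z}}$ built into the admissibility of $\Vc$ (and inherited from Proposition \ref{Proposition : existence BSDE}), together with the boundedness of $s$ and $c^P$ obtained via the truncation at $x_\infty$. These provide uniform integrability of $(M^Z_{t\wedge\tau_n})_n$ along a localizing sequence $(\tau_n)$, so dominated convergence delivers the required global (super)martingale property. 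This is the only non-algebraic step; everything else reduces either to the explicit maximizer $\hat z$ in \eqref{eq : control xc} or to the one-line change of variable relating \eqref{eq : HJB Principal 1} and \eqref{eq : pde_u}.
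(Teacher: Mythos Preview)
Your proposal is correct and follows essentially the same route as the paper: the exponential ansatz $v(t,x,y)=-e^{\etap(y-u(t,x))}$, the discounted process $M_t^Z=\beta_{0,t}\,v(t,X_t,Y_t^{\Rc,Z})$, It\^o's formula to identify the drift with $\partial_t v+g\leq \partial_t v+G=0$, and a localization argument passed to the limit via dominated convergence using the growth of $v$, the boundedness of $s-c^P$ through the truncation at $x_\infty$, and the exponential-moment estimate \eqref{eq : estimate on Y} on $\sup_{t}\abs{Y_t}$ from Proposition~\ref{Proposition : existence BSDE}. The only cosmetic difference is that the paper writes out an explicit stopping-time sequence $T_n$ and the dominating random variable, whereas you phrase it as upgrading a local (super)martingale; the content is the same.
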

	\begin{proof}
		\rmi Let $u$ be a bounded $C^{1,2}([0,T],\R^2)$ solution to the PDE \eqref{eq : pde_u}. Then defining $v(t,x,y):=-e^{\etap(y - u(t,x))}$, we can see that $v\in C^{1,2}([0,T],\R^3)$ as a composition of smooth functions, and that $v(T,x,y)=-e^{\etap y}$ and $v$ satisfies the growth condition $\abs{v(t,x,y)}\leq  e^{\etap y +C(T-t)}$ for some constant $C$ from the boundedness of $u$. By straightforward differentiation, we obtain that $v$ is a solution to $\eqref{eq : HJB Principal 1}$. Define then 
		\begin{align}
		\beta_{t_1,t_2}:=e^{\etap\int_{t_1}^{t_2}(s(X_r)-c^P\brak{X_r})dr}\mbox{ for }0\leq t_1\leq t_2\leq T,
		\end{align}
		and the sequence of stopping times
		\begin{align}
			T_n:=T\land\inf\crl{t\geq 0, \int_0^t(\beta_{0,r})^2\brak{\abs{\sigma(X_r)D_xv(r,X_r,Y_r)}^2 + \brak{D_yv(r,X_r,Y_r)}^2\abs{ \sigma(X_r)z_r}^2}dr\geq n}\mbox{ for }n\geq 1.
		\end{align}
		For an arbitrary control $z\in \Vc$, we apply It\^o formula and take the expectation under $\P^{\hat{\alpha}}$ to obtain
		\begin{align}
		\begin{split}
			\E^{\P^{\hat{\alpha}}}\edg{\beta_{t,T_n}v(T_n,X_{T_n},Y_{T_n})} =& v(t,x,y) +\E^{\P^{\hat{\alpha}}}\edg{ \int_t^{T_n}\beta_{t,r}\brak{\partial_tv + g\brak{X_r,v(r,X_r,Y_r),Dv(r,X_r,Y_r),D^2V(r,X_r,Y_r),z_r}}dr}\\
			&+\E^{\P^{\hat{\alpha}}}\edg{\int_t^{T_n}\beta_{t,r}D_xv(r,X_r,Y_r)\cdot \sigma(X_r)dW^{\hat{\a}}_r}+\E^{\P^{\hat{\alpha}}}\edg{\int_t^{T_n}\beta_{t,r}D_yv(r,X_r,Y_r)z_r\cdot \sigma(X_r)dW^{\hat{\a}}_r}.
		\end{split}
		\end{align}
		From the definition of the localizing sequence $\brak{T_n}_{n\ge 1}$, we obtain that
		\begin{align}
			0=\E^{\P^{\hat{\alpha}}}\edg{\int_t^{T_n}\beta_{t,r}D_xv(r,X_r,Y_r)\cdot \sigma(X_r)dW^{\hat{\a}}_r}=\E^{\P^{\hat{\alpha}}}\edg{\int_t^{T_n}\beta_{t,r}D_yv(r,X_r,Y_r)z_r\cdot \sigma(Xw_r)dW^{\hat{\a}}_r}.
		\end{align}
		Furthermore, since $v$ is a solution to the PDE \eqref{eq : HJB Principal 1}, and by the linearity of the expectation 
		\begin{align}\label{eq : verification capa ineq 1}
			\E^{\P^{\hat{\alpha}}}\edg{ \int_t^{T_n}\beta_{t,r}\brak{\partial_tv + g\brak{X_r,v(r,X_r,Y_r),Dv(r,X_r,Y_r),D^2V(r,X_r,Y_r),z_r}}dr}\leq 0,
		\end{align} 
		and so 
		\begin{align}\label{eq : verification capa ineq 2}
			\E^{\P^{\hat{\alpha}}}\edg{\beta_{t,T_n}v(T_n,X_{T_n},Y_{T_n})} \leq v(t,x,y).
		\end{align}
		When $n\rightarrow +\infty$, the following a.s convergence holds
		\begin{align}
			\beta_{t,T_n}v(T_n,X_{T_n},Y_{T_n})\rightarrow U_{P}\brak{-Y_{T}-\int_{t}^{T}s(X_r)dr+\int_{t}^{T}{c^{P}\brak{X_r}dr}},
		\end{align}
		 and for $n\geq 1$ and the growth condition of $v$ we have
		 \begin{align}
		 	\begin{split}
		 	\E^{\P^{\hat{\alpha}}}\edg{\abs{\beta_{t,T_n}v(T_n,X_{T_n},Y_{T_n})}}&\leq \E^{\P^{\hat{\alpha}}}\edg{\beta_{t,T_n}e^{\etap Y_{T_n} +C(T-T_n)}},\\
		 	&=\E^{\P^{\hat{\alpha}}}\edg{e^{\etap\int_{t}^{T_n}(s(X_r)-c^P\brak{X_r})dr}e^{\etap Y_{T_n} +C(T-T_n)}},\\
		 	&\leq \E^{\P^{\hat{\alpha}}}\edg{e^{ C(T-T_n)+\etap(T_n - t)x_{\infty}(P(0,x_{\infty})+k_1)}e^{\etap Y_{T_n}}},\\
		 	&\leq \E^{\P^{\hat{\alpha}}}\edg{e^{ C(T-T_n)+\etap(T_n - t)x_{\infty}(P(0,x_{\infty})+k_1)}e^{\etap \sup_{t\in[0,T]}\abs{Y_{t}}}}<+\infty\mbox{ for }n\geq 1, 
		 	\end{split}
		 \end{align}
		 where we used the bound $(s(x) - c^P(x)) \leq x_{\infty}(P(0,x_{\infty})+k_1)$ for $x\in \R^2_+$ and the estimate on $Y$ from Proposition \ref{Proposition : existence BSDE}. Therefore, by the dominated convergence theorem, 
		\begin{align}
			\E^{\P^{\hat{\alpha}}}\edg{U_{P}\brak{-Y_{T}-\int_{t}^{T}s(X_r)dr+\int_{t}^{T}{c^{P}\brak{X_r}dr}}}\leq v(t,x,y),
		\end{align}
		and
		\begin{align}
		\sup_{z\in\Vc}\E^{\P^{\hat{\alpha}}}\edg{U_{P}\brak{-Y_{T}-\int_{t}^{T}s(X_r)dr+\int_{t}^{T}{c^{P}\brak{X_r}dr}}}\leq v(t,x,y).
		\end{align}
		In particular for $t=0$, we have an upper bound for the value function $V_0^P$
		\begin{align}\label{eq : verification capa ineq 3}
			V_0^P\leq v(0,x_0,\Rc).
		\end{align}
		\rmii Assuming that $\brak{Z_t^{\star}}_{t\in[0,T]}\in \Vc$, we can go over the same steps as in \rmi with $Z^{\star}$ instead of an arbitrary control from $\Vc$. Therefore the inequalities \eqref{eq : verification capa ineq 1} and \eqref{eq : verification capa ineq 2} and \eqref{eq : verification capa ineq 3} become equalities with 
		\begin{align}
		V_0^P=v(0,x_0,\Rc),
		\end{align}
		so that the upper bound is reached for the control $Z^{\star}$ which is then an optimal feedback control to the problem \eqref{eq : reduced principal problem}, and therefore the optimal contract corresponds to terminal value of the controlled state variable $\brak{Y_t^{\Rc,Z^{\star}}}_{t\in[0,T]}$ which we denote $\xi^{\star}:=Y_T^{\Rc,Z^{\star}}$, and which satisfies $\xi^{\star}\in \Xi$ since $Y_0=\Rc$ and $Z^{\star}\in\Vc$, concluding the proof.
		\qed
	\end{proof}}

	\begin{Remark}
		{\upshape
	The HJB-PDE \eqref{eq : pde_u} (the same remark can be said about the approximating PDE\eqref{eq : pde_u approximation}) is a semi-linear parabolic PDE of second order; with a non-linearity in the gradient term. This PDE has a-priori no solution in the classical sense, \ie a smooth function (in $C^{1,2}([0,T],\R^2)$) with a clear definition of $\partial_tu$, $Du$, and $D^2u$ solution to \eqref{eq : pde_u} because of the non-linearity. However, the existence can be proved in a weaker sense, by taking the candidate $u(t,x):=y-U_p^{-1}(V^P(t,x,y))$ defined in \eqref{VP-markov}, which can be proved to be a viscosity solution to \eqref{eq : pde_u} (not necessarily smooth). In this case one cannot define $\partial_tu$, and $Du$, and $D^2u$, and a more technical approach is required. We make the assumption $u\in C^{1,2}([0,T],\R^2)$ to simplify the exposition. Nevertheless, \eqref{eq : pde_u} might satisfy some conditions (unknown to the authors) which insure the regularity of the viscosity solution defined above.}
\end{Remark}

	\subsection{Producer's participation constraint: the problem without capacity payment\label{appendix : participation constraint}}
	The problem \eqref{eq : uncontrolled producer} is a markovian stochastic control problem, and can be solved by classical techniques. We define producer's continuation utility function $\hat{V}^A: \edg{0,T}\x\R_+^2\rightarrow \R$ as
	\begin{align}
	\hat{V}^A\brak{t,x}:=\sup_{\P^{\a}\in\Pc}\E_{t,x}^{\P^{\a}}\edg{-e^{-\etaa\int_{t}^{T}(s(X_r)-c^A\brak{X_r,\a_r})dr}}.
	\end{align}
	Recall from the definition \eqref{eq : definition small s} and \eqref{eq : definition cost} of $s$ and $c^A$ that for $(x,\alpha)\in \R^2_+\x[\alpha_{min},\alpha_{max}]$ we have the following bounds
	\begin{align}\label{eq : bounds}
		0\leq s(x)\leq P(0,x_{\infty})x_{\infty},\mbox{ and }\alpha_{min}\kappa_1x_{\infty}\leq c^A\brak{x,\a}\leq c^A\brak{x_{\infty},\a_{max}},\mbox{ and }0\leq c^A\brak{x,0}\leq (a+b)x_{\infty},
	\end{align}
	so 
	\begin{align}
		\hat{V}^A\brak{t,x}\leq -e^{-\etaa(T-t)\brak{P(0,x_{\infty})x_{\infty} - \alpha_{min}\kappa_1x_{\infty}}},
	\end{align}
	and since $\alpha = 0$ is an admissible control, then
	\begin{align}
		\hat{V}^A\brak{t,x}\geq \E_{t,x}^{\P^{0}}\edg{-e^{-\etaa\int_{t}^{T}(s(X_r)-c^A\brak{X_r,0})dr}}\geq -e^{\etaa (T-t)(a+b)x_{\infty}},
	\end{align}
	so 
	\begin{align}
		\abs{\hat{V}^A\brak{t,x}}\leq e^{C(T-t)}\mbox{ for }(t,x)\in\edg{0,T}\x\R_+^2,\mbox{ and some constant }C.
	\end{align}
	We identify then a candidate for the solution of the problem with the unique viscosity solution to the HJB equation :  
	\begin{align}\label{eq : HJB PDE uncontrolled}
	\begin{cases}
	-\partial_t\hat{V}^A - \hat{H}\brak{x,\hat{V}^A,D\hat{V}^A,D^2\hat{V}^A} = 0,\mbox{ in }[0,T)\x\R^2,\\
	\hat{V}^A(T,.)=-1,
	\end{cases}
	\end{align}
	with growth controlled by $\abs{\hat{V}^A\brak{t,x}}\leq e^{C(T-t)}$ for some constant $C$, (in particular $\hat{V}^A$ is bounded),
	where $\hat{H}: \R^2\x\R^2\x\Mc_2\brak{\R}\rightarrow \R$ is the Hamiltonian of the producer acting on his own defined as
	\begin{align}
	\hat{H}\brak{x,y,z,\gamma}&:=\sup_{\alpha\in[\alpha_{min},\alpha_{max}]}\hat{h}\brak{x,y,z,\gamma,\alpha},	
	\end{align}
	with
	\begin{align}
		\hat{h}\brak{x,y,z,\gamma,\alpha}&:=\mu\brak{x,\a}\cdot z+\frac{1}{2}\sigma \sigma^{\intercal}:\gamma +\etaa (c^A\brak{x,\a}-s(x))y.
	\end{align}
	Remark that the maximum is attained for 
	\begin{align}\label{eq : optimal control uncontrolled agent}
	\a_{pc}^{\star}\brak{x,y,z}:=\alpha_{min}\lor\brak{-\Frac{z^Cx^C+\etaa y\kappa_1\underline{x}^C}{\etaa y (\underline{x}^C)^2\kappa_2}}\land \alpha_{max},\mbox{ for $\brak{x,y,z}\in\R_{+}^{2}\x\R_{+}\x\R^2$ and $x^C,y\neq 0$},
	\end{align}
	(where ``pc'' stands for participation constraint) and the PDE \eqref{eq : HJB PDE uncontrolled} can be written as 
	\begin{align}
	\begin{cases}
	-\partial_t\hat{V}^A-\hat{h}\brak{x,\hat{V}^A,D\hat{V}^A,D^2\hat{V}^A,\a_{pc}^{\star}\brak{x,\hat{V}^A,D\hat{V}^A}} = 0\mbox{ in }[0,T)\x\R^2_+,\\
	\hat{V}^A(T,.)=-1.
	\end{cases}
	\end{align}
	We next proceed by verification. 

	\begin{Proposition}\label{proposition : verification pb sans contrat}
		Assume that the PDE \eqref{eq : HJB PDE uncontrolled} has a $C^{1,2}([0,T],\R^2)$ solution $v$  with growth controlled by $\abs{v\brak{t,x}}\leq e^{C(T-t)}$ for some constant $C$, then 
		\begin{align}
			\hat{V}^{A}_0 = v(0,x_0),
		\end{align}
		and  $\brak{\a_{pc}^{\star}\brak{X_t,v(t,X_t),Dv(t,X_t)}}_{t\in[0,T]}$ with $\a_{pc}^{\star}$ defined in \eqref{eq : optimal control uncontrolled agent} is an optimal feedback control to the problem \eqref{eq : uncontrolled producer}.
	\end{Proposition}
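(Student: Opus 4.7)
The plan is a classical verification argument adapted to the exponential-utility, Markovian setting, closely paralleling the proof of Proposition \ref{proposition : verification Principal reduced}, but simpler since there is no $Y$ variable. Let $v \in C^{1,2}([0,T],\R^2)$ be a solution to \eqref{eq : HJB PDE uncontrolled} with growth $|v(t,x)| \leq e^{C(T-t)}$. Introduce the multiplicative factor
\[
M_{t_1,t_2} := \exp\!\brak{\etaa\int_{t_1}^{t_2}\brak{c^A(X_r,\a_r)-s(X_r)}dr},\qquad 0\leq t_1\leq t_2 \leq T,
\]
which plays the role of the ``discount'' in this problem, together with the localizing sequence $T_n := T\land \inf\crl{t\geq 0 : \int_0^t(M_{0,r})^2 |\sigma(X_r)Dv(r,X_r)|^2 dr \geq n}$.

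For an arbitrary $\a \in \Uc$, apply It\^o's formula to $M_{t,\cdot}\, v(\cdot,X_\cdot)$ between $t$ and $T_n$ and take expectation under $\P^{\a}$. The stochastic integral vanishes by definition of $T_n$, and the drift term equals
\[
\E^{\P^{\a}}\edg{\int_t^{T_n} M_{t,r}\brak{\partial_t v + \hat{h}\brak{X_r,v,Dv,D^2v,\a_r}}dr}\leq 0,
\]
since $\partial_t v + \hat{h}(\cdot,\a) \leq \partial_t v + \hat{H} = 0$ by the HJB equation. This yields the supermartingale-type inequality
\[
\E^{\P^{\a}}\edg{M_{t,T_n}\,v(T_n,X_{T_n})} \leq v(t,x).
\]
Sending $n\to\infty$, the integrand converges a.s.\ to $M_{t,T}\,v(T,X_T) = -M_{t,T}$. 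Using the bounds \eqref{eq : bounds} on $s$ and $c^A$, which are uniform thanks to the ceiling $x_\infty$, together with the growth estimate on $v$, we get a $\P^{\a}$-integrable dominating function of the form $e^{C'(T-t)}$ for some constant $C'$. Dominated convergence then gives
\[
\E^{\P^{\a}}\edg{U_{A}\brak{\int_t^T (s(X_r)-c^A(X_r,\a_r))dr}} = \E^{\P^{\a}}\edg{-M_{t,T}} \leq v(t,x),
\]
so taking the supremum over $\a \in \Uc$ yields $\hat{V}^A(t,x) \leq v(t,x)$.

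For the reverse inequality, we use the feedback candidate $\a^{\star}_{pc,t}:=\a^{\star}_{pc}(X_t,v(t,X_t),Dv(t,X_t))$. Admissibility is immediate: by definition \eqref{eq : optimal control uncontrolled agent}, $\a^\star_{pc}$ takes values in $[\alpha_{min},\alpha_{max}]$ and, as a composition of the continuous map $\a^\star_{pc}$ with $\F$-adapted continuous processes, it is $\F$-predictable, hence in $\Uc$. Along this control, $\hat{h}(\cdot,\a^\star_{pc}) = \hat{H}$, so the drift computation above becomes an equality and the previous chain becomes $\E^{\P^{\a^\star_{pc}}}[-M_{t,T}] = v(t,x)$, giving $\hat{V}^A(t,x) \geq v(t,x)$. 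Evaluating at $(0,x_0)$ concludes that $\hat{V}^A_0 = v(0,x_0)$ and that $\a^\star_{pc}$ is an optimal feedback control.

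The main technical point to check carefully is the dominated convergence step: one must verify that the combined exponent in $M_{t,T_n}\,v(T_n,X_{T_n})$ is bounded uniformly in $n$, which is ensured precisely by the truncation at $x_\infty$ (making $s - c^A$ bounded) and the a priori growth $|v| \leq e^{C(T-t)}$ postulated in the statement. Everything else is a direct transcription of the classical smooth-verification argument.
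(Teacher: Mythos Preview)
Your proof is correct and follows essentially the same approach as the paper's own proof: the same multiplicative factor (the paper calls it $\beta^{\a}_{t_1,t_2}$ instead of $M_{t_1,t_2}$), the same localizing sequence $T_n$, the same It\^o/supermartingale inequality from the HJB equation, and the same dominated-convergence passage to the limit using the bounds \eqref{eq : bounds} and the growth of $v$, followed by saturating the inequality with the feedback $\a^\star_{pc}$. Your write-up even adds a slightly more explicit check of admissibility of $\a^\star_{pc}$ than the paper does.
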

	\begin{proof}
		Let $v$ be a $C^{1,2}([0,T],\R^2)$ solution to the PDE \eqref{eq : HJB PDE uncontrolled}, such that $\abs{v\brak{t,x}}\leq e^{C(T-t)}$ for some constant $C$ and define the sequence of stopping times
		\begin{align}
			T_n:=T\land\inf\crl{t\geq 0, \int_0^t\brak{\beta_{0,r}^{\alpha}}^2\abs{\sigma(X_r)Dv(r,X_r)}^2dr\geq n}\mbox{ for }n\geq 1.
		\end{align}
		For an arbitrary control $\alpha\in\Uc$, we denote
		\begin{align}
			\beta_{t_1,t_2}^{\alpha}:=e^{-\etaa\int_{t_1}^{t_2}(s(X_r)-c^A\brak{X_r,\a_r})dr}\mbox{ for }0\leq t_1\leq t_2\leq T,
		\end{align}
		and we have by applying It\^o formula and taking the expectation under $\P^{\alpha}$
		\begin{align}
		\begin{split}
			\E^{\P^{\a}}\edg{\beta_{t,T_n}^{\alpha}v(T_n,X_n)} =& v(t,x) + \E^{\P^{\a}}\edg{\int_t^{T_n}\beta_{t,r}^{\alpha}\brak{\partial_tv(r,X_r)+ \hat{h}\brak{X_r,v(r,X_r),Dv(r,X_r),D^2v(r,X_r),\alpha}}dr},\\
			&+\E^{\P^{\a}}\edg{\int_t^{T_n}\beta_{t,r}^{\alpha}Dv(r,X_r)\cdot \sigma(X_r)dW^{\alpha}_r}. 
		\end{split}
		\end{align} 
	By definition of the stopping sequence $T_n$, we have that $\E^{\P^{\a}}\edg{\int_t^{T_n}\beta_{t,r}^{\alpha}Dv(r,X_r)\cdot \sigma(X_r)dW^{\alpha}_r}=0$, and since $v$ is a solution to the PDE \eqref{eq : HJB PDE uncontrolled}, then  
	\begin{align}\label{eq : inequality verification 1 bu}
		\E^{\P^{\a}}\edg{\int_t^{T_n}\beta_{t,r}^{\alpha}\brak{\partial_tv(r,X_r)+ \hat{h}\brak{X_r,v(r,X_r),Dv(r,X_r),D^2v(r,X_r),\alpha}}dr}\leq 0,\mbox{ for }\alpha\in\Uc.
	\end{align}
	Therefore 
	\begin{align}\label{eq : inequality verification 2 bu}
	\E^{\P^{\a}}\edg{\beta_{t,T_n}^{\alpha}v(T_n,X_n)} \leq v(t,x)\mbox{ for }\alpha\in \Uc,
	\end{align}
	Remark then that the following almost sure convergence holds as $n\rightarrow +\infty$
	\begin{align}
		\beta_{t,T_n}^{\alpha}v(T_n,X_n)\rightarrow -e^{-\etaa\int_{t}^{T}(s(X_r)-c^A\brak{X_r,\a_r})dr}.
	\end{align}
	Furthermore, we have from the growth of $v$ and the bounds in \eqref{eq : bounds}
	\begin{align}
	\begin{split}
		\E^{\P^{\a}}\edg{\beta_{t,T_n}^{\alpha}v(T_n,X_n)}\leq &\E^{\P^{\a}}\edg{\beta_{t,T_n}^{\alpha}e^{C(T-T_n)}},\\
		=&\E^{\P^{\a}}\edg{{e^{-\etaa\int_{t}^{T_n}(s(X_r)-c^A\brak{X_r,\a_r})dr}e^{C(T-T_n)}}},\\
		\leq& \E^{\P^{\a}}\edg{{e^{\etaa c^A\brak{x_{\infty},\a_{max}}(T_n-t)}e^{C(T-T_n)}}}
		<+\infty\mbox{ for }n\geq 1.
	\end{split}
	\end{align}
	So by dominated convergence, we obtain that
	\begin{align}
		\E^{\P^{\a}}\edg{-e^{-\etaa\int_{t}^{T}(s(X_r)-c^A\brak{X_r,\a_r})dr}} \leq v(t,x)\mbox{ for }\alpha\in \Uc, 
	\end{align}
	and in particular 
	\begin{align}
		\hat{V}_0^A=\sup_{\P^{\a}\in\Pc}\E^{\P^{\a}}\edg{U_A\brak{\int_0^T\brak{s(X_t) - c^A\brak{X_t,\a_t}}dt}}\leq v(0,x_0)
	\end{align}
	So $v(0,x_0)$ is an upper bound for the maximization problem \eqref{eq : uncontrolled producer}.
	Furthermore, for $\brak{\a_{pc}^{\star}\brak{X_t,v(t,X_t),Dv(t,X_t)}}_{t\in[0,T]}$ and going through the same steps as previously, the inequalities \eqref{eq : inequality verification 1 bu} and \eqref{eq : inequality verification 2 bu} become equalities, and so the upper bound is reached for $\a_{pc}$ which is then an optimal control since it is admissible (as a progressively measurable process w.r.t $\F$ valued in $[\alpha_{min},\alpha_{max}]$) which concludes the proof.
	\qed
	\end{proof}
	\no Remark that the PDE \eqref{eq : HJB PDE uncontrolled} can be characterized in terms of certainty equivalent by a straightforward change of variable $\hat{V}^A(t,x)=-e^{-\etaa \hat{u}^A\brak{t,x}}$, which leads to $\hat{u}^A$ being the unique bounded viscosity solution to the PDE
	\begin{align}\label{eq : pde agent solo}
	\begin{cases}
	-\partial_t\hat{u}^A-\hat{h}\brak{x,-\frac{1}{\etaa},D\hat{u}^A,D^2\hat{u}^A - \etaa D\hat{u}^A(D\hat{u}^A)^{\intercal},\a_{pc}^{\star}\brak{x,-\frac{1}{\etaa},D\hat{u}^A}} = 0\mbox{ in }[0,T)\x\R^2_+,\\
	\hat{u}^A(T,.)=0,
	\end{cases}
	\end{align}
	and the optimal feedback control is then written (under further smoothness assumptions) as 
	\begin{align}
		\a_{pc}^{\star}\brak{x,-\frac{1}{\etaa},D\hat{u}^A(t,x)}=\alpha_{min}\lor\frac{\partial_{x^C}\hat{u}^A(t,x) x^C - \kappa_1\underline{x}^C}{\underline{x}^C\kappa_2}\land \alpha_{max}.
	\end{align}
	\subsection{Description of the numerical procedure}\label{Appendix : Numerical method}
		To implement the optimal capacity contract and optimal policy, the first step is to numerically solve the PDE \eqref{eq : pde_u} describing consumer's value function,
		which we recall 
		\begin{align}\tag{\ref{eq : pde_u}}
		\begin{cases}
		-\partial_{t}u-\bar{G}\brak{x,Du,D^{2}u}=0,\mbox{ for }\brak{t,x} \in [0,T)\x\R^2,\\
		u\brak{T,x}=0,\mbox{ for }x \in \R_+^2.
		\end{cases}
		\end{align}	
		We use a finite differences method with an explicit Euler
		scheme to solve this PDE. We discretize the time horizon $[0,T]$ by defining for some $n_T\in\N$ the vector $(t_0,t_2,..t_{n_T})$ with $t_k=k\frac{T}{n_T}$ for $k\in\crl{0,..n}$. We recall then that the state variables are non negative and so we define the boundaries of the space using positive constants $0<x_{c,min}<x_{c,max}$ and $0<x_{d,min}<x_{d,max}$. The space grid is of size $(n_c+1)\x (n_d+1)$ with $n_c, n_d\in\N$, and for $0\leq i\leq n_c$ and $0\leq j\leq n_d$, a function $M:\R_+^2\mapsto \R$ is approximated by a matrix $\M$  such that $\M_{ij}$ represents $M(x_{c,min}+i\frac{x_{c,max}}{n_c},x_{d,min}+j\frac{x_{d,max}}{n_d})$.\\
		We define then a collection $(U^k)_{\crl{0\leq k\leq n_T}}$ of $(n_T+1)$ matrices of size $(n_c+1)\x (n_d+1)$ to stock the (approximation) of the values of $u$. Our goal is to have 
		\begin{align}
			U^k_{ij}=u\brak{k\Delta t,x_{c,min}+i\Delta x_c,x_{d,min}+j\Delta x_d}\mbox{ for }k\in\crl{0,..n_T},i\in\crl{0,..n_c},j\in\crl{0,..n_d}.
		\end{align}
		where we defined $\Delta t:=\frac{T}{n_T}$, $\Delta x_c:=\frac{x_{c,max}}{n_c}$ and $\Delta x_d:=\frac{x_{d,max}}{n_d}$, and recall the following approximations based on taylor expansion
		\begin{align}
		\begin{split}
		\partial_tu(t,x) &= \frac{u(t+\Delta t,x)-u(t,x)}{\Delta t} + o(\Delta t),\\
		&\approx \frac{u(t+\Delta t,x)-u(t,x)}{\Delta t},
		\end{split}
		\end{align} 
		and
		\begin{align}
		\begin{split}
		Du(t,x) &= \brak{\begin{array}{c}
			\frac{u(t,x_c+\Delta x_c,x_d)-u(t,x_c-\Delta x_c,x_d)}{2 \Delta x_c}+o(\Delta x_c) \\
			\frac{u(t,x_c,x_d+\Delta x_d)-u(t,x_c,x_d-\Delta x_d)}{2 \Delta x_d}+o(\Delta x_d)
			\end{array}},\\
				&\approx\brak{\begin{array}{c}
					\frac{u(t,x_c+\Delta x_c,x_d)-u(t,x_c-\Delta x_c,x_d)}{2 \Delta x_c}\\
					\frac{u(t,x_c,x_d+\Delta x_d)-u(t,x_c,x_d-\Delta x_d)}{2 \Delta x_d}
					\end{array}},
		\end{split}
		\end{align}
		and
		\begin{align}
			D^2u(t,x) = \brak{\begin{array}{cc}
				\partial^2_{x^C}u(t,x)&\partial^2_{x^Cx^D}u(t,x)\\
				\partial^2_{x^Cx^D}u(t,x)&\partial^2_{x^D}u(t,x)\end{array}},
		\end{align}
		with 
		\begin{align}
		\begin{cases}
		\partial^2_{x^C}u(t,x)\approx \frac{u(t,x_c+\Delta x_c,x_d)+u(t,x_c-\Delta x_c,x_d)-2u(t,x_c,x_d)}{(\Delta x_c)^2},\\
		\partial^2_{x^D}u(t,x)\approx\frac{u(t,x_c,x_d+\Delta x_d)+u(t,x_c,x_d-\Delta x_d)-2u(t,x_c,x_d)}{(\Delta x_d)^2},\\
		\partial^2_{x^Cx^D}u(t,x)\approx \frac{u(t,x_c+\Delta x_c,x_d+\Delta x_d)+u(t,x_c-\Delta x_c,x_d-\Delta x_d)-u(t,x_c+\Delta x_c,x_d-\Delta x_d)-u(t,x_c-\Delta x_c,x_d+\Delta x_d)}{4(\Delta x_d)(\Delta x_c)}.
		\end{cases}
		\end{align}
		Inspired by the previous expressions, we define for a matrix $\M$ of size $(n_c+1)\x (n_d+1)$ the following matrices also of size $(n_c+1)\x (n_d+1)$ (corresponding to the gradient and the Hessian components)
		\begin{align}
			\mbox{diff}_{x_c}(\M),\mbox{ and }\mbox{diff}_{x_d}(\M)\mbox{ and }\mbox{diff2}_{x_c}(\M)\mbox{ and }\mbox{diff2}_{x_d}(\M)\mbox{ and }\mbox{diff}_{x_cx_d}(\M),
		\end{align}
		such that
		\begin{align}\label{eq : numerical derivatives}
			\begin{cases}
			\mbox{diff}_{x_c}(\M)_{ij} := \frac{\M_{i+1,j}-\M_{i-1,j}}{2 \Delta x_c},\\
			\mbox{diff}_{x_d}(\M)_{ij} := \frac{\M_{i,j+1}-\M_{i,j-1}}{2 \Delta x_d},\\
			\mbox{diff}_{x_cx_d}(\M)_{ij}:=\frac{\M_{i+1,j+1}+\M_{i-1,j-1}-\M_{i+1,j-1}-\M_{i-1,j+1}}{4(\Delta x_d)(\Delta x_c)},\\
			\mbox{diff2}_{x_c}(\M)_{ij}:=\frac{\M_{i+1,j}+\M_{i-1,j}-2\M_{i,j}}{(\Delta x_c)^2},\\
			\mbox{diff2}_{x_d}(\M)_{ij}:=\frac{\M_{i,j+1}+\M_{i,j-1}-2\M_{i,j}}{(\Delta x_d)^2},
			\end{cases}\mbox{ for }i\in\crl{0,..n_c},j\in\crl{0,..n_d},
		\end{align}
		with slight changes near the boundaries to avoid out of range indices, \ie by defining for $i\in\crl{0,..n_c}$ and $j\in\crl{0,..n_d}$
		\begin{align}
			\M_{(-1)j}:=\M_{0j}\mbox{ and }\M_{(n_c+1)j}:=\M_{n_cj}\mbox{ and }\M_{i(-1)}:=\M_{i0}\mbox{ and }\M_{i(n_d+1)}:=\M_{in_d},
		\end{align}
		used in the definition \eqref{eq : numerical derivatives}.
		We define then the equivalent of the hamiltonian of PDE \eqref{eq : pde_u} in the space grid as a matrix valued function $\Gb$ (of size $(n_c+1)\x (n_d+1)$) such that 
		\begin{align}
			\Gb(\M)_{ij}:=\bar{G}\brak{\brak{\begin{array}{c}
					x_{c,min}+i\Delta x_c\\
					x_{d,min}+j\Delta x_d
					\end{array}},\brak{\begin{array}{c}
					\mbox{diff}_{x_c}(\M)_{ij}\\
					\mbox{diff}_{x_d}(\M)_{ij}
					\end{array}},\brak{\begin{array}{cc}
					\mbox{diff2}_{x_c}(\M)_{ij}&\mbox{diff}_{x_cx_d}(\M)_{ij}\\
					\mbox{diff}_{x_cx_d}(\M)_{ij}&\mbox{diff2}_{x_d}(\M)_{ij}\end{array}}},\mbox{ for }0\leq i\leq n_c\mbox{ and }0\leq j\leq n_d.
		\end{align}
		The numerical approximation of \eqref{eq : pde_u} becomes straightforward, and consists in initializing the terminal value $U_{ij}^{n_T}=0$ for $0\leq i\leq n_c$ and $0\leq j\leq n_d$. Then computing by backward induction
		\begin{align}
			U^{k-1} := U^k+\Delta t\x \Gb(U^k),\mbox{ for }k\in\crl{n_T,n_T-1,..,1}.
		\end{align}
		Recall then the definition of the optimal feedback control
		\begin{align}\tag{\ref{eq : hat_z}}
		Z_t^{\star}=\brak{\begin{array}{c}
				Z_t^{C,\star}\\
				Z_t^{D,\star}
				\end{array}}:=\hat{z}\brak{X_t,Du\brak{t,X_t}}\mbox{ for }t\in[0,T].
		\end{align}
		We approximate then $\brak{Z_t^{C,\star}}_{t\in[0,T]}$ and $\brak{Z_t^{D,\star}}_{t\in[0,T]}$ with the collection of matrices $(Z_c^k)_{\crl{0\leq k\leq n_T}}$ and $(Z_d^k)_{\crl{0\leq k\leq n_T}}$ of sizes $(n_c+1)\x (n_d+1)$ defined as
		\begin{align}
		\brak{\begin{array}{c}
			\brak{Z_c}_{ij}^k\\
			\brak{Z_d}_{ij}^k
			\end{array}}:=\hat{z}\brak{\brak{\begin{array}{c}
				x_{c,min}+i\Delta x_c\\
				x_{d,min}+j\Delta x_d
				\end{array}},\brak{\begin{array}{c}
				\mbox{diff}_{x_c}(U^k)_{ij}\\
				\mbox{diff}_{x_d}(U^k)_{ij}
				\end{array}}}\mbox{ for }k\in\crl{0,..n_T},i\in\crl{0,..n_c},j\in\crl{0,..n_d}.
		\end{align}
		Similarly, the recommended effort $\brak{\hat{\a}(X^C_t,Z^{\star,C}_t)}_{t\in[0,T]}$ is approximated with matrices $(\hat{A}^k)_{\crl{0\leq k\leq n_T}}$ of sizes $(n_c+1)\x (n_d+1)$ defined as
		\begin{align}
			\hat{A}_{ij}^k:=\hat{\a}(x_{c,min}+i\Delta x_c,\brak{Z_c}_{ij}^k)\mbox{ for }k\in\crl{0,..n_T},i\in\crl{0,..n_c},j\in\crl{0,..n_d}.
		\end{align}
		\begin{Remark}
			{\upshape For a fixed $k\in\crl{0,..n_T}$, the matrices $U^k$, $(Z_c)^k$, $(Z_d)^k$ and $\hat{A}^k$ represent approximations on the whole space $\R^2_+$. So for a given $x=(x^C,x^D)\in\R^2_+$, one needs first to find the indices $(i,j)$ such that $(x_{c,min}+i\Delta x_c,x_{d,min}+j\Delta x_d)\approx x$; for example by taking $(i,j)=\brak{\left \lfloor{\frac{x^C -x_{c,min}}{\Delta x_c}}\right \rfloor ,\left \lfloor{\frac{x^D -x_{d,min}}{\Delta x_d}}\right \rfloor}$ to estimate the required quantities, for example $u(k\Delta t,x)\approx U_{ij}^k$.}
		\end{Remark}
		\no The second step in the simulation is the forward diffusion of state variables using the optimal controls	for a number of scenarios $N$, to use the Monte-carlo for the estimations. We recall the dynamics of the controlled state variables
		\begin{align}
		\begin{cases}
		X_{t}&=\brak{\begin{array}{c}
			x_0^C\\
			x_0^D
			\end{array}} + \int_{0}^{t}\mu\brak{X_r,\hat{\a}\brak{X^C_r,Z^{C,\star}_r}}dr+\int_{0}^{t}\sigma\brak{X_r}dW^{\hat{\a}}_{r},\\
		Y_{t}&=\Rc + \int_{0}^{t}\brak{c^{A}\brak{X_r,\hat{\a}\brak{X^C_r,Z^{C,\star}_r}}+\frac{\eta_{A}}{2}|\sigma\brak{X_r} Z^{\star}_r|^2-s(X_r)}dr+\int_{0}^{t}Z^{\star}_r\cdot\sigma\brak{X_r}dW^{\hat{\a}}_r.
		\end{cases}
		\end{align}
		This is also done through an explicit Euler scheme, which we provide for a single scenario. We define the matrix $\hat{X}$ of size $2\x(n_T+1)$ and the vector $\hat{Y}$ of size $1\x(n_T+1)$ which provides $\hat{X}^k\approx X_{k\Delta t}$ and $\hat{Y}^k\approx Y_{k\Delta t}$. We start by initializing $\hat{X}^0:=\brak{\begin{array}{c}
			x_0^C\\
			x_0^D
			\end{array}}$ and $\hat{Y}^0:=\Rc$. Then we compute by induction for $k\in\crl{1,..n_T}$:	
		\begin{align}
		\begin{cases}
		(i,j):=\brak{\left \lfloor{\frac{(x_{c,min}\lor\hat{X}_1^{k-1}\land x_{c,max}) -x_{c,min}}{\Delta x_c}}\right \rfloor ,\left \lfloor{\frac{(x_{d,min}\lor\hat{X}_2^{k-1}\land x_{d,max})-x_{d,min}}{\Delta x_d}}\right \rfloor},\\
		\mbox{Generate a (normalized) 2 dimensional Brownian increment $\brak{\Delta W}^k$ with a law $\Nc\brak{0_2,I_2}$},\\
		\hat{X}^{k}:=\hat{X}^{k-1}+\Delta t \mu\brak{\hat{X}^{k-1},\hat{A}_{ij}^{k-1}}+\sqrt{\Delta t}\sigma\brak{\hat{X}^{k-1}}\brak{\Delta W}^k,\\
		\hat{Y}^k:=\hat{Y}^{k-1}+\Delta t\brak{c^{A}\brak{\hat{X}^{k-1},\hat{A}_{ij}^{k-1}}+\frac{\eta_{A}}{2}\abs{\sigma\brak{\hat{X}^{k-1}} \brak{\begin{array}{c}
				\brak{Z_c}_{ij}^{k-1}\\
				\brak{Z_d}_{ij}^{k-1}
				\end{array}}}^2-s(\hat{X}^{k-1})}+\sqrt{\Delta t}\brak{\begin{array}{c}
			\brak{Z_c}_{ij}^{k-1}\\
			\brak{Z_d}_{ij}^{k-1}
			\end{array}}\cdot \sigma\brak{\hat{X}^{k-1}}\brak{\Delta W}^k.
		\end{cases}
		\end{align}
	Remark that it is fundamental that for each time step $k$, the Brownian increment used to compute $\hat{X}^{k}$ is the same as the one to compute $\hat{Y}^k$.\\ 	
	The resulting matrices correspond to a realization for a scenario of the capacity and demand ($\hat{X}^k_1$ and $\hat{X}^k_2$) for $k\in\crl{0,..n_T}$, and $\hat{Y}^{n_T}$ the the contract for this scenario. Prior values of $\hat{Y}$, \ie $\hat{Y}^k$ for $k<n_T$ could be used to understand the composition of the contract, but are only informative and less important than $\hat{Y}^{n_T}$ which represents the actual amount paid, since the contracting is in a lump-sum payment form.

\newpage
		
\bibliography{bib_HF}	

@article{LD81,
  title={Sur des probl\`emes de r\'egularisation, de recollement et
d’interpolation en th\'eorie des martingales},
  author={Lenglart, {\'E}rik and Dellacherie, Claude},
  journal={S{\'e}minaire de probabilit{\'e}s (Strasbourg)},
  volume={15},
  pages={328--346},
  year={1981}
}

@article{briandh,
  title={Quadratic BSDEs with convex generators and unbounded terminal conditions},
  author={Briand, Philippe and Hu, Ying},
  journal={Probability Theory and Related Fields},
  volume={141},
  number={3-4},
  pages={543--567},
  year={2008},
  publisher={Springer}
}

@book{touzi12,
  title={Optimal stochastic control, stochastic target problems, and backward SDE},
  author={Touzi, Nizar},
  volume={29},
  year={2012},
  publisher={Springer Science \& Business Media}
}

@article{neveu72,
  title={Martingales {\`a} temps discret},
  author={Neveu, Jacques},
  year={1972},
  journal={Masson et CIE},
  publisher={Paris}
}

@article{CK93,
  title={Hedging contingent claims with constrained portfolios},
  author={Cvitani{\'c}, Jak{\v{s}}a and Karatzas, Ioannis},
  journal={The Annals of Applied Probability},
  pages={652--681},
  year={1993},
  publisher={JSTOR}
}

@article{eliep,
  title={Contracting theory with competitive interacting agents},
  author={Elie, Romuald and Possama{\"\i}, Dylan},
  journal={SIAM Journal on Control and Optimization},
  volume={57},
  number={2},
  pages={1157--1188},
  year={2019},
  publisher={SIAM}
}

@article{cvitanicpt,
  title={Dynamic programming approach to principal--agent problems},
  author={Cvitani{\'c}, Jak{\v{s}}a and Possama{\"\i}, Dylan and Touzi, Nizar},
  journal={Finance and Stochastics},
  volume={22},
  number={1},
  pages={1--37},
  year={2018},
  publisher={Springer}
}

@article{aidpt,
  title={A principal--agent model for pricing electricity volatility demand},
  author={A{\"\i}d, Ren\'e and Possama{\"\i}, Dylan and Touzi, Nizar},
  journal={preprint},
  year={2016}
}

@book{cvitanicz,
  title={Contract theory in continuous-time models},
  author={Cvitani{\'c}, Jak{\v{s}}a and Zhang, Jianfeng},
  year={2013},
  publisher={Springer}
}

@article{Sannikov08,
  title={A continuous-time version of the principal-agent problem},
  author={Sannikov, Yuliy},
  journal={The Review of Economic Studies},
  volume={75},
  number={3},
  pages={957--984},
  year={2008},
  publisher={Wiley-Blackwell}
}

@article{holmstrom1987aggregation,
  title={Aggregation and linearity in the provision of intertemporal incentives},
  author={Holmstr{\"o}m, Bengt and Milgrom, Paul},
  journal={Econometrica: Journal of the Econometric Society},
  pages={303--328},
  year={1987},
  publisher={JSTOR}
}

@article{el2018optimal,
  title={Optimal make-take fees for market making regulation},
  author={El Euch, Omar and Mastrolia, Thibaut and Rosenbaum, Mathieu and Touzi, Nizar},
  journal={Available at SSRN 3174933},
  year={2018}
}

@article{council2004economic,
  title={The economic impacts of the August 2003 blackout},
  author={Council, Electricity Consumers Resource},
  journal={Washington, DC},
  year={2004}
}

@article{HaryRiousSagan16,
author = {Hary, Nicolas and Rious, Vincent and Saguan, Marcelo},
year = {2016},
month = {04},
pages = {113-127},
title = {The electricity generation adequacy problem: Assessing dynamic effects of capacity remuneration mechanisms},
journal = {Energy Policy},
volume = {91}
}

@article{Bhagwat17,
  title={An analysis of a forward capacity market with long-term contracts},
  author={Bhagwat, Pradyumna C and Marcheselli, Anna and Richstein, J{\"o}rn C and Chappin, Emile JL and De Vries, Laurens J},
  journal={Energy Policy},
  volume={111},
  pages={255--267},
  year={2017},
  publisher={Elsevier}
}

@article{Joskow07,
  title={Competitive electricity markets and investment in new generating capacity},
  author={Joskow, Paul Lewis},
  journal={Massachusetts Institute of Technology (MIT)-Department of Economics},
  year={2006}
}

@article{Newberry16,
  title={Missing money and missing markets: Reliability, capacity auctions and interconnectors},
  author={Newbery, David},
  journal={Energy Policy},
  volume={94},
  pages={401--410},
  year={2016},
  publisher={Elsevier}
}

@article{Leautier16,
   title={The visible hand: ensuring optimal investment in electric power generation},
  author={L{\'e}autier, Thomas-Olivier},
  journal={The Energy Journal},
  volume={37},
  number={2},
  year={2016},
  publisher={International Association for Energy Economics}
}

@article{aidcht,
  title={A structural risk-neutral model of electricity prices},
  author={A{\"\i}d, Ren{\'e} and Campi, Luciano and Huu, Adrien Nguyen and Touzi, Nizar},
  journal={International Journal of Theoretical and Applied Finance},
  volume={12},
  number={07},
  pages={925--947},
  year={2009},
  publisher={World Scientific}
}

@book{aid,
  title={Electricity derivatives},
  author={A{\"\i}d, Ren{\'e}},
  year={2015},
  publisher={Springer}
}

@article{hermon2007asymmetric,
  title={Asymmetric Information and Security of Supply: An application of Agency theory to electricity markets},
  author={Hermon, Florence and Perrot, Delphine and Pignon, Virginie},
  journal={Energy Studies Review},
  volume={15},
  number={2},
  year={2007}
}

@article{scouflaire18,
  title={Capacity remuneration in power markets : an empirical assessment of the cost of precaution},
  author={Scouflaire, Charlotte},
  year={2018},
  journal={CEEM Working Paper 31},
 publisher={preprint}
}

@article{brown2018capacity,
  title={Capacity payment mechanisms and investment incentives in restructured electricity markets},
  author={Brown, David P},
  journal={Energy Economics},
  volume={74},
  pages={131--142},
  year={2018},
  publisher={Elsevier}
}

@article{bublitz2019survey,
  title={A survey on electricity market design: Insights from theory and real-world implementations of capacity remuneration mechanisms},
  author={Bublitz, Andreas and Keles, Dogan and Zimmermann, Florian and Fraunholz, Christoph and Fichtner, Wolf},
  journal={Energy Economics},
  year={2019},
  publisher={Elsevier}
}

@article{hoschle2017electricity,
  title={Electricity markets for energy, flexibility and availability—Impact of capacity mechanisms on the remuneration of generation technologies},
  author={H{\"o}schle, Hanspeter and De Jonghe, Cedric and Le Cadre, H{\'e}l{\`e}ne and Belmans, Ronnie},
  journal={Energy Economics},
  volume={66},
  pages={372--383},
  year={2017},
  publisher={Elsevier}
}

@article{bajo2017welfare,
  title={Welfare implications of capacity payments in a price-capped electricity sector: A case study of the Texas market (ERCOT)},
  author={Bajo-Buenestado, Ra{\'u}l},
  journal={Energy Economics},
  volume={64},
  pages={272--285},
  year={2017},
  publisher={Elsevier}
}

@article{briggs2013resource,
  title={Resource adequacy reliability and the impacts of capacity subsidies in competitive electricity markets},
  author={Briggs, RJ and Kleit, Andrew},
  journal={Energy Economics},
  volume={40},
  pages={297--305},
  year={2013},
  publisher={Elsevier}
}

@article{hach2016capacity,
  title={Capacity market design options: A dynamic capacity investment model and a GB case study},
  author={Hach, Daniel and Chyong, Chi Kong  and Spinler, Stefan},
  journal={European Journal of Operational Research},
  volume={249},
  pages={691--705},
  year={2016},
  publisher={Elsevier}
}

@article{hobbs2007Capacity,
  title={A Dynamic Analysis of a Demand Curve-Based Capacity Market Proposal: The PJM Reliability Pricing Model},
  author={ Hobbs, Benjamin F. and Hu, Ming-Che and  Iñón, Javier G.  and   Stoft, Steven E. and  Bhavaraju, Murty P.},
  journal={IEEE Transactions on Power Systems},
  volume={22},
  pages={3 -- 14},
  year={2007},
  publisher={IEEE}
}

@article{Levin2015Capacity,
  title={Capacity Adequacy and Revenue Sufficiency in Electricity Markets With Wind Power},
  author={ Levin, Todd and Botterud, Audun},
  journal={IEEE Trans. On Power Systems},
  volume={30},
  pages={1644 -- 1653},
  year={2015},
  publisher={IEEE}
}

@article{fabra2018primer,
  title={A primer on capacity mechanisms},
  author={Fabra, Natalia},
  journal={Energy Economics},
  volume={75},
  pages={323--335},
  year={2018},
  publisher={Elsevier}
}

@article{bhagwat2019reliability,
  title={Reliability Options: Can they deliver on their promises?},
  author={Bhagwat, Pradyumna C and Meeus, Leonardo},
  journal={The Electricity Journal},
  volume={32},
  number={10},
  pages={106667},
  year={2019},
  publisher={Elsevier}
}

@article{hoffmann2010reward,
  title={Reward for luck in a dynamic agency model},
  author={Hoffmann, Florian and Pfeil, Sebastian},
  journal={The Review of Financial Studies},
  volume={23},
  number={9},
  pages={3329--3345},
  year={2010},
  publisher={Oxford University Press}
}
\end{document}